\documentclass[pdflatex,sn-mathphys-num]{sn-jnl}

\usepackage{graphicx}%
\usepackage{multirow}%
\usepackage{amsmath,amssymb,amsfonts}%
\usepackage{amsthm}%
\usepackage{mathrsfs}%
\usepackage[title]{appendix}%
\usepackage{xcolor}%
\usepackage{textcomp}%
\usepackage{manyfoot}%
\usepackage{booktabs}%
\usepackage{algorithm}%
\usepackage{listings}%
\usepackage{url}
\usepackage[inline]{enumitem}
\usepackage{xcolor}
\usepackage{mathtools}
\usepackage{algorithmic}
\usepackage[disable]{todonotes}
\usepackage{stmaryrd}
\usepackage{varwidth}
\usepackage{multicol}
\usepackage{tikz}
\usepackage{wrapfig}
\usepackage{cleveref}
\usepackage{etoolbox}
\usepackage[T1]{fontenc}

\setenumerate[1]{label={(\arabic*)}}

\usetikzlibrary{arrows,positioning,decorations.markings}
\tikzset{
  varnode/.style={rectangle,outer sep=0mm},
  varnodenoperi/.style={rectangle,outer sep=-1mm},
  ourarrow/.style={>=stealth}, 
  ourarc/.style={>=stealth,thick,arc}}
\usepackage{pgfplots}

\tikzset{
  textnode/.style={},
  varnode2/.style={draw, outer sep=1pt},
  varnode2green/.style={draw,fill=green},
  varnode2red/.style={draw,fill=red},
  varnode2blue/.style={draw,fill=blue},
  varnode/.style={rectangle},
  varnodegreen/.style={rectangle,draw,fill=green},
  varnodered/.style={rectangle,draw,fill=red},
  varnodeblue/.style={rectangle,draw,fill=blue},
  varnodepurple/.style={rectangle,draw,fill=purple},
  varnodeyellow/.style={rectangle,draw,fill=yellow},
  varnodeempty/.style={inner sep=0pt,fill},
  line/.style={=stealth,thick,fill=red},
  matchededge/.style={>=stealth,thick,fill=black},
  unmatchededge/.style={>=stealth,thick,fill=blue},
  arrow/.style={=>stealth,thick}
}

\lstset{%
    xleftmargin=0pt%
    , xrightmargin=0pt%
    , frame=single%
    , rulecolor=\color{gray}%
    , basicstyle=\ttfamily\small%
    , keywordstyle=\ttfamily\textbf%
    , morekeywords={locale,record,definition,abbreviation,primrec,function,theorem,lemma,proof,datatype,where,let,in}%
    , mathescape%
    , upquote=true%
    , breaklines=true%
}
\lstdefinestyle{inline}{%
    basicstyle=\ttfamily%
}

\lstdefinelanguage{isabelle}{ morekeywords={for,context,inductive,begin,locale,fixes,record,type_synonym,definition,fun,function,primrec,where,lemma,theorem,unfolding,by,shows,assumes,and,datatype,using,abbreviation,moreover,have,hence,thus,qed,proof,let,ultimately,show,next,in,if}
    , sensitive=true
    , showstringspaces=false
    , framerule=0pt
    , xleftmargin=2em
    , numbers=left
    , numberstyle=\ttfamily\small
    , firstnumber=1
    , stepnumber=2
    , basicstyle=\ttfamily\small
    , backgroundcolor = \color{white}
    , keywordstyle = {\color{blue}}
    , breaklines=true
    , showspaces=false
    , morecomment=[l]{--}
    , morecomment=[s]{(*}{*)}
    , commentstyle=\color{gray}
    , morestring=[b]"
    , literate={↦}{{$\mapsto$}}{1}
               {∧}{{$\wedge$}}{1}
               {×}{{$\times$}}{1}
               {≡}{{$\equiv$}}{1}
               {∀}{{$\forall$}}{1}
               {∃}{{$\exists$}}{1}
               {\\<and>}{{$\land$}}{1}
        {∈}{{$\in$}}{1} {⇒}{{$\Rightarrow$}}{1} {λ}{{$\lambda$}}{1} {::}{{$::$}}{1}
        {\⊆}{{$\subseteq$}}{1} {\\<subset>}{{$\subset$}}{1} {\\<^sub>m}{{$_m$}}{1} {\\<longleftrightarrow>}{{$\longleftrightarrow$}}{3}
        {\\<pi>}{{$\pi$}}{1} {\\<delta>}{{$\delta$}}{1} {⟦}{{$\llbracket$}}{1} {⟧}{{$\rrbracket$}}{1}
        {⟹}{{$\Longrightarrow$}}{3} {\\<not>}{{$\lnot$}}{1} {\\<le>}{{$\le$}}{1} {\\<rightharpoonup>}{{$\rightharpoonup$}}{2}
        {\\<^sub>\\<V>}{{$_{\mathcal V}$}}{1} {\\<lparr>}{{$\llparenthesis$}}{1} {\\<rparr>}{{$\rrparenthesis$}}{1}
        {\\<leftarrow>}{{$\leftarrow$}}{1} {\\<^sub>\\<O>}{{$_{\mathcal O}$}}{1} {\\<^sub>I}{{$_{\texttt{I}}$}}{1}
        {\\<^sub>G}{{$_{\texttt{G}}$}}{2} {\\<phi>}{{$\varphi$}}{1} {\\<Phi>}{{$\Phi$}}{1} {\\<psi>}{{$\psi$}}{1} {\\<Psi>}{{$\Psi$}}{1}
        {\\<^sub>S}{{$_{\texttt S}$}}{1} {\\<inverse>}{{$^{-1}$}}{1} {\\<^sub>O}{{$_{\texttt O}$}}{1} {\\<^bold>\⋀}{{$\bm\bigwedge$}}{1} {⋀}{{$\bigwedge$}}{1}
        {\\<^bold>\\<or>}{{$\bm\lor$}}{1} {\\<^sub>G}{{$_{\texttt G}$}}{1} {\\<Pi>}{{$\Pi$}}{1} {\\<^sub>I}{{$_{\texttt I}$}}{1} {\≠}{{$\neq$}}{1}
        {\\<bottom>}{{$\bot$}}{1} {\\<^sub>+}{{$_\texttt +$}}{1} {\\<^bold>\\<and>}{{$\bm\land$}}{1} {\\<^bold>\\<not>}{{$\bm\lnot$}}{1}
        {\\<^sub>1}{{$_1$}}{1} {\\<^sub>2}{{$_2$}}{1} {\\<A>}{{$\mathcal A$}}{1} {\\<Turnstile>}{{$\models$}}{2} {\\<^sub>∀}{{$_\forall$}}{1}
        {\\<^sub>0}{{$_0$}}{1} {\\<tau>}{{$\tau$}}{1}  {\\<^sub>\\<Omega>}{{$_\Omega$}}{1} {\\<^sub>V}{{$_V$}}{1} {\\<^bold>\\<Or>}{{$\bm\bigvee$}}{1}
        {\\<^sub>P}{{$_\texttt P$}}{1} {\\<^sub>X}{{$_\texttt X$}}{1} {⟹}{{$\longrightarrow$}}{2} {\\<or>}{{$\lor$}}{1} {\\<^sub>\\<pi>}{{$_\pi$}}{1}
        {\\<^sub>s}{{$_s$}}{1} {\\<^sub>t}{{$_t$}}{1} {\\<^sub>a}{{$_a$}}{1} {\\<^sub>r}{{$_r$}}{1} {\\<^sub>t}{{$_t$}}{1} {\\<^sub>e}{{$_e$}}{1} {\\<^sub>n}{{$_n$}}{1} {\\<^sub>d}{{$_d$}}{1} {\\<^sub>i}{{$_i$}}{1} {\\<^sub>v}{{$_v$}}{1} {\\<^sub>j}{{$_j$}}{1} {\\<^sub>b}{{$_b$}}{1} {\∩}{{$\cap$}}{1} {\\<union>}{{$\cup$}}{1} {\\<Union>}{{$\bigcup$}}{1} {\\<^sup>c\\<TTurnstile>\\<^sub>=}{{${}^c\models_=$}}{1}
        {\\<open>}{{<}}{1} {\\<close>}{{>}}{1} {\\<langle>}{{$\langle$}}{1} {\\<rangle>}{{$\rangle$}}{1} {\\<ge>}{{$\ge$}}{1} {\\<^sup>-\\<^sup>1\\<^sub>C}{{$^{\texttt{-1}}_\texttt{C}$}}{2} {\\<^sup>+\\<^sub>C}{{$^\texttt{+}_\texttt{C}$}}{1} {\\<circ>\\<^sub>C}{{$\circ^\texttt C$}}{1} {\\<top>\\<^sub>C}{{$\top_\texttt{C}$}}{2} {\\<bottom>\\<^sub>C}{{$\bot_\texttt{C}$}}{2} {\\<not>\\<^sub>C}{{$\neg_\texttt{C}$}}{2}
        {\\<squnion>\\<^sub>C}{{$\sqcup_\texttt{C}$}}{2} {\\<sqinter>\\<^sub>C}{{$\sqcap_\texttt{C}$}}{2} {\∃\\<^sub>C}{{$\exists_\texttt{C}$}}{2}  {∀\\<^sub>C}{{$\forall_\texttt{C}$}}{2} {=\\<^sub>C}{{$=_\texttt{C}$}}{2} {\\<sigma>}{{$\sigma$}}{2} {\∉}{{$\notin$}}{1} {⊕}{{$\oplus$}}{1} {\\<nexists>}{{$\nexists$}}{1} {\\<setminus>}{{$\setminus$}}{1} {`}{{`}}{1}
}

\raggedbottom

\begin{document}

\title[A Formal Correctness Proof of Edmonds' Blossom Shrinking Algorithm]{A Formal Correctness Proof of Edmonds' Blossom Shrinking Algorithm}

\keywords{Formal Mathematics, Algorithm Verification, Matching Theory, Graph Theory, Combinatorial Optimisation}

\author{Mohammad Abdulaziz\textsuperscript{0000-0002-8244-518X}}\email{mohammad.abdulaziz@kcl.ac.uk}
\affil{King's College London, \city{London}, \country{United Kingdom}}
\author{Kurt Mehlhorn\textsuperscript{0000-0003-4020-4334}}\email{mehlhorn@mpi-inf.mpg.de}
\affil{Max Planck Institute for Informatics, \city{Saarbrücken}, \country{Germany}}
\abstract{We present the first formal correctness proof of Edmonds' blossom shrinking algorithm for maximum cardinality matching in general graphs. We focus on formalising the mathematical structures and properties that allow the algorithm to run in worst-case polynomial running time. We formalise Berge's lemma, blossoms and their properties, and a mathematical model of the algorithm, showing that it is totally correct. We provide the first detailed proofs of many of the facts underlying the algorithm's correctness.}

\maketitle

\providecommand{\insts}{}
\renewcommand{\insts}{\ensuremath{\Delta}}
\providecommand{\inst}{\ensuremath{\tvsal}}
\newcommand{\act}{\ensuremath{\pi}}
\newcommand{\asarrow}[1]{\vec{#1}}
\renewcommand{\vec}[1]{\overset{\rightarrow}{#1}}
\newcommand{\as}{\ensuremath{\vec{{\act}}}}

\newcommand{\etc}{\textit{etc.}}
\newcommand{\versus}{\textit{vs.}}

\newcommand{\Ie}{I.e.}
\newcommand{\eg}{e.g.}
\newcommand*{\ie}{i.e.\ }
\newcommand{\abziz}[1]{\textcolor{brown}{#1}}
\newcommand{\sublist}[2]{ \ensuremath{#1} \preceq\!\!\!\raisebox{.4mm}{\ensuremath{\cdot}}\; \ensuremath{#2}}
\newcommand{\subscriptsublist}[2]{\ensuremath{#1}\preceq\!\raisebox{.05mm}{\ensuremath{\cdot}}\ensuremath{#2}}
\newcommand{\PLS}{\Pi^\preceq\!\raisebox{1mm}{\ensuremath{\cdot}}}
\newcommand{\PLScharles}{\Pi^d}
\newcommand{\execname}{\mathsf{ex}}
\newcommand{\IndHyp}{\mathsf{IH}}
\newcommand{\exec}[2]{#2(#1)}

\newcommand{\ancestorssymbol}{\textsf{\upshape ancestors}}
\newcommand{\ancestors}{\ancestorssymbol}
\newcommand{\satpreas}[2]{\ensuremath{sat_precond_as(s, \as)}}
\newcommand{\proj}[2]{\ensuremath{#1{\downharpoonright}_{#2}}}
\newcommand{\dep}[3]{\ensuremath{#2 {\rightarrow} #3}}
\newcommand{\deptc}[3]{\ensuremath{#2 {\rightarrow^+} #3}}
\newcommand{\negdep}[3]{\ensuremath{#2 \not\rightarrow #3}}
\newcommand{\leavessymbol}{\textsf{\upshape leaves}}
\newcommand{\leaves}{\leavessymbol}

\newcommand{\childrensymbol}{\textsf{\upshape children}}
\newcommand{\children}[2]{\mathcal{\childrensymbol}_{#2}(#1)}
\newcommand{\succsymbol}{\textsf{\upshape succ}}
\newcommand{\succstates}[2]{\succsymbol(#1, #2)}
\newcommand{\concat}{\#}
\newcommand{\RG}{\cite{Rintanen:Gretton:2013}\ }
\newcommand{\cupdot}{\charfusion[\mathbin]{\cup}{\cdot}}
\newcommand{\bigcupdot}{\charfusion[\mathop]{\bigcup}{\cdot}}
\newcommand{\cuparrow}{\charfusion[\mathbin]{\cup}{{\raisebox{.5ex} {\smathcalebox{.4}{\ensuremath{\leftarrow}}}}}}
\newcommand{\bigcuparrow}{\charfusion[\mathop]{\bigcup}{\leftarrow}}
\newcommand{\finiteunion}{\cuparrow}
\newcommand{\finitemap}{\ensuremath{\sqsubseteq}}
\newcommand{\dgraph}{dependency graph}
\newcommand{\domain}[1]{{\sc #1}}
\newcommand{\solver}[1]{{\sc #1}}
\providecommand{\problem}[1]{\domain{#1}}
\renewcommand{\v}{\ensuremath{\mathit{v}}}
\providecommand{\vs}[1]{\domain{#1}}
\renewcommand{\vs}{\ensuremath{\mathit{vs}}}
\newcommand{\VS}{\ensuremath{\mathit{VS}}}
\newcommand{\Aut}{\ensuremath{\mathit{Aut}}}
\newcommand{\Inst}[2]{\ensuremath{\mathit{#2 \rightarrow_{#1} #1}}}
\newcommand{\Image}{\ensuremath{\mathit{Im}}}
\newcommand{\Img}[2]{\protect{#1 \llparenthesis #2 \rrparenthesis}}
\newcommand{\SND}{\ensuremath{\mathit{\pi_2}}}
\newcommand{\FST}{\ensuremath{\mathit{\pi_1}}}
\newcommand{\tvsal}{{\pitchfork}}
\newcommand{\nauty}{CGIP}

\newcommand{\pwinter}{\ensuremath{\mathit{\bigcap_{pw}}}}

\newcommand{\dom}{\ensuremath{\mathit{\mathcal{D}}}}
\newcommand{\codom}{\ensuremath{\mathcal{R}}}

\newcommand{\map}{\ensuremath{\mathit{map}}}
\newcommand{\BIJEC}{\ensuremath{\mathit{bij}}}
\newcommand{\INJ}{\ensuremath{\mathit{inj}}}
\newcommand{\funion}{\ensuremath{\overset{\leftarrow}{\cup}}}

\newcommand{\ifnew}{\mbox{\upshape \textsf{if}}}
\newcommand{\thennew}{\mbox{\upshape \textsf{then}}}
\newcommand{\elsenew}{\mbox{\upshape \textsf{else}}}
\newcommand{\choice}{\mbox{\upshape \textsf{ch}}}
\newcommand{\arbchoice}{\mbox{\upshape \textsf{arb}}}
\newcommand{\acycchoice}{\mbox{\upshape \textsf{ac}}}
\newcommand{\cycchoice}{\mbox{\upshape \textsf{cyc}}}
\newcommand{\filter}{\ensuremath{\mathit{FIL}}}
\newcommand{\probset}{\ensuremath{\boldsymbol \Pi}}
\newcommand{\probleq}{\ensuremath{\leq_\Pi}}
\newcommand{\CommVar}{\ensuremath{\bigcap_\v} }
\newcommand{\quotfun}{\ensuremath{ \mathcal{Q}}}

\newcommand{\apre}{\mbox{\upshape \textsf{pre}}}
\newcommand{\aeff}{\mbox{\upshape \textsf{eff}}}
\newcommand{\problist}{\ensuremath \probset}
\newcommand{\cat}{{\frown}}
\newcommand{\probproj}[2]{{#1}{\downharpoonright}^{#2}}
\newcommand{\preced}{\mathbin{\rotatebox[origin=c]{180}{\ensuremath{\rhd}}}}
\newcommand{\perm}{\ensuremath{\sigma}}
\newcommand{\invariant}[2]{\ensuremath{\mathit{inv({#1},{#2})}}}
\newcommand{\invstates}[1]{\ensuremath{\mathit{inv({#1})}}}
\newcommand{\probss}[1]{{\mathcal S}(#1)}
\newcommand{\parChildRel}[3]{\ensuremath{\negdep{#1}{#2}{#3}}}
\newcommand{\asessymbol}{\ensuremath{\mathbb{A}}}
\newcommand{\ases}[1]{{#1}^*}
\newcommand{\uniStates}{\ensuremath{\mathbb{U}}}
\newcommand{\recurrenceDiam}{\ensuremath{\mathit{rd}}}
\newcommand{\recurrenceAcycDiamfun}{\ensuremath{\mathit{{\mathfrak A}}}}
\newcommand{\recurrenceDiamfun}{\ensuremath{\mathit{\mathfrak R}}}
\newcommand{\traversalDiam}{\ensuremath{\mathit{td}}}
\newcommand{\traversalDiamfun}{\ensuremath{\mathit{\mathfrak T}}}
\newcommand{\isPrefix}[2]{\ensuremath{#1 \preceq #2}}
\providecommand{\path}{\ensuremath{\gamma}}
\newcommand{\aspath}{\ensuremath{\vec{\path}}}
\newcommand{\n}{\textsf{\upshape n}}
\providecommand{\graph}{}
\renewcommand{\graph}{{\cal G}}
\newcommand{\undirgraph}{{\cal G}}

\renewcommand{\ss}{\ensuremath{\state s}}
\newcommand{\slist}{\ensuremath{\vec{\mbox{\upshape \textsf{ss}}}}}
\newcommand{\sll}{\ensuremath{\vec{\state}}}
\newcommand{\listset}{\mbox{\upshape \textsf{set}}}
\newcommand{\asset}{\ensuremath{\mathit{K}}}
\newcommand{\aslist}{\ensuremath{\mathit{\overset{\rightarrow}{\gamma}}}}
\newcommand{\head}{\mbox{\upshape \textsf{hd}}}
\renewcommand{\max}{\textsf{\upshape max}}
\newcommand{\argmax}{\textsf{\upshape argmax}}
\newcommand{\argmin}{\textsf{\upshape argmin}}
\renewcommand{\min}{\textsf{\upshape min}}
\newcommand{\bool}{\mbox{\upshape \textsf{bool}}}
\newcommand{\last}{\mbox{\upshape \textsf{last}}}
\newcommand{\front}{\mbox{\upshape \textsf{front}}}
\newcommand{\rot}{\mbox{\upshape \textsf{rot}}}
\newcommand{\stuff}{\mbox{\upshape \textsf{intlv}}}
\newcommand{\tail}{\mbox{\upshape \textsf{tail}}}
\newcommand{\ngrtoas}{\ensuremath{\mathit{\as_{\graph_\mathbb{N}}}}}
\newcommand{\vsfun}{\mbox{\upshape \textsf{vs}}}
\newcommand{\inits}{\mbox{\upshape \textsf{init}}}
\newcommand{\satprecondas}{\mbox{\upshape \textsf{sat-pre}}}
\newcommand{\remcondlessact}{\mbox{\upshape \textsf{rem-condless}}}
\providecommand{\state}{}
\renewcommand{\state}{x}
\newcommand{\statea}{x}
\newcommand{\stateb}{y}
\newcommand{\statec}{z}
\newcommand{\fals}{\mbox{\upshape \textsf{F}}}
\newcommand{\indices}{\ensuremath{V}}
\newcommand{\edges}{\ensuremath{E}}
\newcommand{\vertices}{\ensuremath{V}}
\newcommand{\listtype}{\mbox{\upshape \textsf{list}}}
\newcommand{\settype}{\mbox{\upshape \textsf{set}}}
\newcommand{\acttype}{\mbox{\upshape \textsf{action}}}
\newcommand{\graphtype}{\mbox{\upshape \textsf{graph}}}
\newcommand{\projfun}[2]{\ensuremath{\Delta_{#1}^{#2}}}
\newcommand{\snapfun}[2]{\ensuremath{\Sigma_{#1}^{#2}}}
\newcommand{\RDfun}[1]{\ensuremath{{\mathcal R}_{#1}}}
\newcommand{\elldbound}[1]{\ensuremath{{\mathcal LS}_{#1}}}
\newcommand{\distinct}{\textsf{\upshape distinct}}
\newcommand{\ddistinct}{\mbox{\upshape \textsf{ddistinct}}}
\newcommand{\simple}{\mbox{\upshape \textsf{simple}}}

\newcommand{\reachable}[3]{\ensuremath{{#1}\rightsquigarrow{#3}}}

\newcommand{\Omit}[1]{}

\newcommand{\charles}[1]{\textcolor{red}{#1}}

\newcommand{\negreachable}[3]{\ensuremath{{#2}\not\rightsquigarrow{#3}}}
\newcommand{\wdiam}[2]{{#1}^{#2}}
\newcommand{\dsnapshot}[2]{\Delta_{#1}}
\newcommand{\ellsnapshot}[2]{{\mathcal L}_{#1}}

\newcommand{\snapshotsymbol}{|\kern-.7ex\raise.08ex\hbox{\scalebox{0.7}{$\bullet$}}}
\newcommand{\snapshot}[2]{\ensuremath{\mathrel{#1\snapshotsymbol_{#2}}}}
\newcommand{\vstype}{\texttt{\upshape VS}}
\newcommand{\vtype}{{\scriptsize \ensuremath{\dom(\delta)}}}
\newcommand{\Balgo}{{\mbox{\textsc{Hyb}}}}
\newcommand{\ssgraph}[1]{\graph_\ss}
\newcommand{\agree}{\textsf{\upshape agree}}
\newcommand{\ck}{\ensuremath{\texttt{ck}}}
\newcommand{\lk}{\ensuremath{\texttt{lk}}}
\newcommand{\gr}{\ensuremath{\texttt{gr}}}
\newcommand{\gk}{\ensuremath{\texttt{gk}}}
\newcommand{\CK}{\ensuremath{\texttt{CK}}}
\newcommand{\LK}{\ensuremath{\texttt{LK}}}
\newcommand{\GR}{\ensuremath{\texttt{GR}}}
\newcommand{\GK}{\ensuremath{\texttt{GK}}}
\newcommand{\safe}{\ensuremath{\texttt{s}}}

\newcommand{\derivname}{\ensuremath{\partial}}
\newcommand{\deriv}[3]{\ensuremath{\derivname(#1,#2,#3)}}
\newcommand{\derivabbrev}[3]{\ensuremath{{\partial(#1,#2)}}}
\newcommand{\subsetoracle}{\ensuremath{ \Omega}}
\newcommand{\Aalgo}{{\mbox{\textsc{Pur}}}}
\newcommand{\Sname}{\textsf{\upshape S}}
\newcommand{\Sbrace}[1]{\Sname\langle#1\rangle}
\newcommand{\SalgoName}{\Sname_{\textsf{\upshape max}}}
\newcommand{\Salgo}[1]{\SalgoName\langle#1\rangle}

\newcommand{\WLPname}{{\mbox{\textsc{wlp}}}}
\newcommand{\WLPbrace}[1]{\WLPname\langle#1\rangle}
\newcommand{\WLPalgoName}{\WLPname_{\textsf{\upshape max}}}
\newcommand{\WLP}[1]{\WLPalgoName\langle#1\rangle}

\newcommand{\Nname}{\ensuremath{\textsf{\upshape N}}}
\newcommand{\Nbrace}[1]{\Nname\langle#1\rangle}
\newcommand{\NalgoName}{\Nname{_{\textsf{\upshape sum}}}}
\newcommand{\Nalgobrace}[1]{\NalgoName\langle#1\rangle}

\newcommand{\acycNname}{\widehat{\textsf{\upshape N}}}
\newcommand{\acycNbrace}[1]{\acycNname\langle#1\rangle}
\newcommand{\acycNalgoName}{\acycNname{_{\textsf{\upshape sum}}}}
\newcommand{\acycNalgobrace}[1]{\acycNalgoName\langle#1\rangle}

\newcommand{\Mname}{\ensuremath{\textsf{\upshape M}}}
\newcommand{\Mbrace}[1]{\Mname\langle#1\rangle}
\newcommand{\MalgoName}{\Mname{_{\textsf{\upshape sum}}}}
\newcommand{\Malgobrace}[1]{\MalgoName\langle#1\rangle}
\newcommand{\cardinality}[1]{{\ensuremath{|#1|}}}
\newcommand{\length}[1]{\cardinality{#1}}
\newcommand{\basecasefun}{\ensuremath{b}}
\newcommand{\Basecasefun}{\ensuremath{\mathcal B}}

\newcommand{\edgegen}{\ensuremath{e}}
\newcommand{\vertexgen}{\ensuremath{u}}
\newcommand{\vertexa}{{\ensuremath{\vertexgen_1}}}
\newcommand{\vertexb}{{\ensuremath{\vertexgen_2}}}
\newcommand{\vertexc}{{\ensuremath{\vertexgen_3}}}
\newcommand{\vertexd}{{\ensuremath{\vertexgen_4}}}
\newcommand{\vertexe}{{\ensuremath{\vertexgen_5}}}
\newcommand{\vertexf}{{\ensuremath{\vertexgen_6}}}
\newcommand{\vertexg}{{\ensuremath{\vertexgen_7}}}
\newcommand{\vertexh}{{\ensuremath{\vertexgen_8}}}
\newcommand{\vertexi}{{\ensuremath{\vertexgen_9}}}
\newcommand{\vertexj}{\ensuremath{\vertexgen_{11}}}
\newcommand{\vertexk}{\ensuremath{\vertexgen_{12}}}
\newcommand{\vertexl}{\ensuremath{\vertexgen_{12}}}
\newcommand{\vertexm}{\ensuremath{\vertexgen_{13}}}
\newcommand{\vertexn}{\ensuremath{\vertexgen_{14}}}
\newcommand{\vertexo}{\ensuremath{\vertexgen_{15}}}
\newcommand{\vertexsetgen}{\ensuremath{\mathit{us}}}
\newcommand{\vertexseta}{\vertexsetgen_1}
\newcommand{\vertexsetb}{\vertexsetgen_2}
\newcommand{\labelsymbol}{\ensuremath{l}}
\newcommand{\labelfun}{\ensuremath{\mathcal{L}}}
\newcommand{\DAG}{\ensuremath{A}}
\newcommand{\NalgoNameN}{{\ensuremath{\NalgoName_{\mathbb{N}}}}}
\newcommand{\NnameN}{\ensuremath{\Nname_\mathbb{N}}}
\newcommand{\replaceprojsinglename}{\raisebox{-0.3mm} {\scalebox{0.7}{\textpmhg{H}}}}
\newcommand{\replaceprojsingle}[3] {{ #2} \underset {#1} {\raisebox{-0.3mm} {\scalebox{0.7}{\textpmhg{H}}}}  #3}
\newcommand{\HOLreplaceprojsingle}[1]{\underset {#1} {\raisebox{-0.3mm} {\scalebox{0.7}{\textpmhg{H}}}}}

\newcommand{\lotus}{{\scalebox{0.6}{\includegraphics{lotus.pdf}}}}
\newcommand{\invlotus}{\mathbin{\rotatebox[origin=c]{180}{$\lotus$}}}
\newcommand{\clique}{\ensuremath{K}}
\newcommand{\partition}{\ensuremath{\vs_{1..n}}}
\newcommand{\partitiontype}{\ensuremath{\vstype_{1..n}}}
\newcommand{\vtxpartition}{\ensuremath{P}}

\newcommand{\traversalDiamAlgo}{{\mbox{\textsc{TravDiam}}}}
\newcommand{\prefix}{\textsf{\upshape pfx}}
\newcommand{\powerset}{\mathbb{P}}
\newcommand{\postfix}{\textsf{\upshape sfx}}
\newcommand{\dfunproj}{\ensuremath{{\mathfrak D}}}
\newcommand{\dfunsnap}{\ensuremath{{\textgoth D}}}
\newcommand{\ellfunproj}{\ensuremath{\mathfrak L}}
\newcommand{\ellfunsnap}{\ensuremath{\textgoth L}}
\newcommand{\cycle}{\ensuremath{C}}
\newcommand{\petal}{\ensuremath{\eta}}
\renewcommand{\prod}{\ensuremath{{{{{\mathlarger{\mathlarger {{\mathlarger {\Pi}}}}}}}}}}
\newcommand{\sccset}{{\ensuremath{SCC}}}
\newcommand{\scc}{{\ensuremath{scc}}}
\newcommand{\negate}[1]{\overline{#1}}
\newcommand{\setofsets}{\ensuremath{S}}
\newcommand{\group}{\ensuremath{\cal \Gamma}}
\newcommand{\neededvars}{{\cal N}}
\newcommand{\sspace}{\mbox{\upshape \textsf{sspc}}}
\newcommand{\tip}{\ensuremath{t}}
\newcommand{\vara}{\ensuremath{\v_1}}
\newcommand{\varb}{\ensuremath{\v_2}}
\newcommand{\varc}{\ensuremath{\v_3}}
\newcommand{\vard}{\ensuremath{\v_4}}
\newcommand{\vare}{\ensuremath{\v_5}}
\newcommand{\varf}{\ensuremath{\v_6}}
\newcommand{\varg}{\ensuremath{\v_7}}
\newcommand{\varh}{\ensuremath{\v_8}}
\newcommand{\vari}{\ensuremath{\v_9}}
\newcommand{\acta}{\ensuremath{\act_1}}
\newcommand{\actb}{\ensuremath{\act_2}}
\newcommand{\actc}{\ensuremath{\act_3}}
\newcommand{\actd}{\ensuremath{\act_4}}
\newcommand{\acte}{\ensuremath{\act_5}}
\newcommand{\actf}{\ensuremath{\act_6}}
\newcommand{\actg}{\ensuremath{\act_7}}
\newcommand{\acth}{\ensuremath{\act_8}}
\newcommand{\acti}{\ensuremath{\act_9}}

\tikzset{dots/.style args={#1per #2}{line cap=round,dash pattern=on 0 off #2/#1}}
\providecommand{\moham}[1]{\fbox{{\bf \@Mohammad: }#1}}
\newcommand{\TDbound}{{\mbox{\textsc{Arb}}}}
\newcommand{\expbound}{{\mbox{\textsc{Exp}}}}
\newcommand{\sasdom}{\expbound}
\newcommand{\cardfun}{\ensuremath{\mathbb{C}}}
\newcommand{\AGNa}{AGN1}
\newcommand{\AGNb}{AGN2}
\newcommand{\reset}{{\ensuremath{reset}}}

\newcommand{\matching}{{\cal M}}
\newcommand{\BlossomAlg}{{\mbox{\textsc{Find\_Max\_Matching}}}}
\newcommand{\AugPathAlg}{{\mbox{\textsc{Aug\_Path\_Search}}}}
\newcommand{\BlossomOrAugPath}{{\mbox{\textsc{Blossom\_Search}}}}

\renewcommand{\vertices}{\ensuremath{\mathcal{V}}}
\newcommand{\lparty}{\ensuremath{V}}
\newcommand{\rparty}{\ensuremath{U}}
\newcommand{\lperm}{\ensuremath{\sigma}}
\newcommand{\rperm}{\ensuremath{\pi}}
\renewcommand{\vertexgen}{\ensuremath{v}}
\newcommand{\lvertexgen}{\ensuremath{v}}
\newcommand{\lvertexa}{\lvertexgen_1}
\newcommand{\lvertexb}{\lvertexgen_2}
\newcommand{\lvertexc}{\lvertexgen_3}
\newcommand{\lvertexd}{\lvertexgen_4}
\newcommand{\lvertexe}{\lvertexgen_5}
\newcommand{\lvertexf}{\lvertexgen_6}
\newcommand{\rvertexgen}{\ensuremath{u}}
\newcommand{\rvertexa}{\rvertexgen_1}
\newcommand{\rvertexb}{\rvertexgen_2}
\newcommand{\rvertexc}{\rvertexgen_3}
\newcommand{\rvertexd}{\rvertexgen_4}
\newcommand{\rvertexe}{\rvertexgen_5}
\newcommand{\rvertexf}{\rvertexgen_6}
\newcommand{\lorder}{\ensuremath{\pi}}
\newcommand{\rorder}{\ensuremath{\sigma}}
\newcommand{\rank}{\textit{online-match}}
\newcommand{\neighb}[2]{\ensuremath{{N_{#1} ({#2})}}}
\newcommand{\shiftsto}{\textit{shifts-to}}
\newcommand{\zig}{\textit{zig}}
\newcommand{\zag}{\textit{zag}}
\newcommand{\lpartyperm}{\ensuremath{\lparty'}}

\newcommand{\isaname}[1]{\emph{#1}}

\newcommand{\nth}[2]{#1[#2]}
\newcommand*{\uniform}{\mathcal{U}}
\newcommand*{\rankingprob}{\textit{RANKING}}
\newcommand*{\perms}{\mathcal{S}}
\newcommand*{\bernoulli}{\mathbb{I}}
\DeclarePairedDelimiter{\card}{|}{|}
\providecommand{\hide}[1]{}
\providecommand{\cproblem}{\Pi}
\providecommand{\creduction}{f}
\newcommand{\workpackagealgos}{WP 1}
\newcommand{\workpackagereductions}{WP 2}
\newcommand{\workpackageMV}{WP 3}
\newcommand{\Bone}[1]{}
\newcommand{\Btwo}[1]{}
\newcommand{\objMV}{1}
\newcommand{\objAlgos}{2}
\newcommand{\isa}[1]{\texttt{#1}}
\newcommand{\isadigit}[1]{\texttt{#1}}
\newcommand{\edge}{\ensuremath{e}}
\newcommand{\ihvar}[1]{\overline{#1}}
\newcommand{\nextIterVar}[1]{{#1}'}
\newcommand{\varAtLine}[2]{{#1}^{#2}}
\newcommand{\emptylist}{\ensuremath{\emptyset}}
\newcommand{\stem}{\ensuremath{\text{stem}}}
\newcommand{\reverse}{\ensuremath{\text{rev}}}

\makeatletter
\newtheoremstyle{implication}
  {4pt}
  {4pt}
  {\addtolength{\@totalleftmargin}{3.5em}
   \addtolength{\linewidth}{0em}
   \parshape 1 0em \linewidth}
  {}
  {}
  {}
  {.5em}
  {}
\makeatother

\makeatletter
\newtheoremstyle{indented}
  {4pt}
  {4pt}
  {\addtolength{\@totalleftmargin}{3.5em}
   \addtolength{\linewidth}{0em}
   \parshape 1 0em \linewidth}
  {}
  {\it}
  {.}
  {.5em}
  {}
\makeatother

\makeatletter
\newtheoremstyle{indentedtwice}
  {4pt}
  {4pt}
  {\addtolength{\@totalleftmargin}{3.5em}
   \addtolength{\linewidth}{0em}
   \parshape 1 0em \linewidth}
  {2ex}
  {\it}
  {.}
  {.5em}
  {}
\makeatother

\makeatletter
\newtheoremstyle{indentedthrice}
  {4pt}
  {4pt}
  {\addtolength{\@totalleftmargin}{3.5em}
   \addtolength{\linewidth}{0em}
   \parshape 1 0em \linewidth}
  {4ex}
  {\it}
  {.}
  {.5em}
  {}
\makeatother

\makeatletter
\newtheoremstyle{indentedfour}
  {4pt}
  {4pt}
  {\addtolength{\@totalleftmargin}{3.5em}
   \addtolength{\linewidth}{0em}
   \parshape 1 0em \linewidth}
  {6ex}
  {\it}
  {.}
  {.5em}
  {}
\makeatother

\newtheorem{mynote}{Note}
\newtheorem{mythm}{Theorem}
\newtheorem{mylem}{Lemma}
\newtheorem{mycor}{Corollary}
\newtheorem{myprop}{Proposition}
\newtheorem{myspec}{Specification}
\newtheorem{myinvar}{Invariant}
\theoremstyle{definition}
\newtheorem{mydef}{Definition}
\theoremstyle{remark}
\newtheorem{myremark}{Remark}
\theoremstyle{implication}
\newtheorem{myleft}{$(\Leftarrow)$}
\renewcommand{\themyleft}{}
\newtheorem{myright}{$(\Rightarrow)$}
\renewcommand{\themyright}{}
\theoremstyle{indented}
\newtheorem{mycase}{Case}
\renewcommand{\themycase}{\arabic{mycase}}
\theoremstyle{indentedtwice}
\newtheorem{mysubcase}{Case}
\renewcommand{\themysubcase}{\arabic{mycase}.\roman{mysubcase}}
\theoremstyle{indentedthrice}
\newtheorem{mysubsubcase}{Case}
\renewcommand{\themysubsubcase}{\arabic{mycase}.\roman{mysubcase}.\alph{mysubsubcase}}
\theoremstyle{indentedfour}
\newtheorem{mysubsubsubcase}{Case}
\renewcommand{\themysubsubsubcase}{\arabic{mycase}.\roman{mysubcase}.\alph{mysubsubcase}.\Roman{mysubsubsubcase}}

\let\oldcomplement\complement
\renewcommand{\complement}[1]{\ensuremath{{#1}^\oldcomplement}}

\makeatletter
\@addtoreset{mycase}{mythm}
\@addtoreset{mycase}{mylem}
\@addtoreset{mycase}{myprop}
\@addtoreset{mycase}{mycor}
\@addtoreset{mycase}{proof}
\@addtoreset{mycase}{myleft}
\@addtoreset{mycase}{myright}

\@addtoreset{mysubcase}{mycase}
\@addtoreset{mysubsubcase}{mysubcase}
\@addtoreset{mysubsubsubcase}{mysubsubcase}

\makeatother

\tableofcontents

\section{Introduction}
\label{sec:intro}

Maximum cardinality matching is a basic problem in computer science, operational research, graph theory, and combinatorial optimisation.
In this problem, given an undirected graph, one is to find the largest, in terms of cardinality, subset of vertex disjoint edges of that graph.

We describe the first formal functional correctness proof of Edmonds' blossom shrinking algorithm~\cite{edmond1965blossom}, which is an algorithm to solve the maximum cardinality matching problem in general graphs.
We do the proof in the theorem prover Isabelle/HOL~\cite{IsabelleHOLRef}.
Developing a formal correctness proof for this algorithm presents substantial challenges.
First, the correctness argument depends on substantial graph theory, including results like Berge's lemma~\cite{BergesLemma}.
Second, it includes reasoning about graph contractions, with complex case analyses associated with this type of reasoning.
Third, the algorithm's core procedure is an involved iterative search procedure that builds a complex forest data structure.
Proving its total correctness depends on a large number of complex loop invariants and on the construction of a complex certificate.

Our contributions here include:
\begin{enumerate}
\item developing substantial formal libraries for undirected graphs, including reasoning principles for performing mathematical induction and treating connected components; alternating paths, a necessary concept for reasoning about matchings and matching algorithms; and matching theory, including Berge's lemma,
\item methodology-wise, we use Isabelle/HOL's function package to model all iterative computation; Isabelle/HOL's locales to structure our proofs; and Isabelle/HOL's classical reasoning to automate much of our proofs, showing that standard tools of Isabelle/HOL already suffice to perform reasoning about some of the most complex algorithms in a relatively elegant fashion, and 
\item mathematically, despite the existence of many established expositions~\cite{edmond1965blossom,LEDAbook,KorteVygenOptimisation,schrijverBook},  we \begin{enumerate*} \item provide the first complete case analyses of two central results: the decades old Berge's lemma and the fact that \emph{blossom shrinking} preserves \emph{augmenting path} existence; and \item provide the first complete list of invariants and the first detailed correctness proof of the core search procedure.
\end{enumerate*}
\end{enumerate}
We note that parts of this work were presented in a preliminary form in an earlier invited conference paper~\cite[Section 5]{BlossomIsabelle}, which had only a proof of partial correctness.
We also note that our formalisation is not based on any specific source.
However, we point out similarities to existing sources when they arise.

The structure of the paper will be as follows: we first discuss the necessary background notions.
Then we have three technical sections, each dedicated to a major algorithmic part and to the mathematics behind its correctness.
In each of those sections, in addition to presenting the formalisation, we present our own informal proofs, which we believe provide necessary mathematical insights for those who are interested in the formalisation, as well as those who are interested in the algorithm's correctness generally.
Then we finish with a discussion section.



\renewcommand{\path}{\ensuremath{\gamma}}

\begin{algorithm}[H]
    \caption{$\BlossomAlg(\graph)$}\label{alg:Blossom}
    \label{alg:algorithm-label}
    \begin{algorithmic}[1]
      \STATE $\matching := \emptyset$
      \STATE $\path := \AugPathAlg(\graph,\matching)$
      \WHILE{$\gamma$ is an augmenting path}
        \STATE {$\matching := \matching \oplus \path$}
        \STATE {$\path:= \AugPathAlg(\graph,\matching)$}
      \ENDWHILE
      \RETURN $\matching$
    \end{algorithmic}
\end{algorithm}

\begin{figure}[H]
  \centering
  \begin{tikzpicture}[rotate=0,scale=0.8]
    \node (v1) at (0,1) [varnode] {\scriptsize $\vertexa$ } ;
    \node (v2) at (0,-1) [varnode] {\scriptsize $\vertexb$ } ;
    \node (v3) at (1,0) [varnode] {\scriptsize $\vertexc$ } ;
    \node (v4) at (2,0) [varnode] {\scriptsize $\vertexd$ } ;
    \node (v5) at (3,0) [varnode] {\scriptsize $\vertexe$ } ;
    \node (v6) at (4,1) [varnode] {\scriptsize $\vertexf$ } ;
    \node (v7) at (4,-1) [varnode] {\scriptsize $\vertexg$ } ;
    \node (v8) at (5,0) [varnode] {\scriptsize $\vertexh$ } ;
    \node (v9) at (6,-1) [varnode] {\scriptsize $\vertexi$ } ;
    \node (v10) at (6,1) [varnode] {\scriptsize $\vertexj$ } ;
    \node (v11) at (5,3) [varnode] {\scriptsize $\vertexk$ } ;
    \node (v12) at (4,2) [varnode] {\scriptsize $\vertexl$ } ;
    \draw [green,-,matchededge] (v1) -- (v2) ;
    \draw [-,matchededge] (v1) -- (v3) ;
    \draw [-,matchededge] (v2) -- (v3) ;
    \draw [green,-,matchededge] (v3) -- (v4) ;
    \draw [-,matchededge] (v4) -- (v5) ;
    \draw [green,-,matchededge] (v5) -- (v6) ;
    \draw [-,matchededge] (v5) -- (v7) ;
    \draw [-,matchededge] (v6) -- (v7) ;
    \draw [green,-,matchededge] (v7) -- (v8) ;
    \draw [-,matchededge] (v8) -- (v9) ;
    \draw [-,matchededge] (v8) -- (v10) ;
    \draw [green,-,matchededge] (v9) -- (v10) ;
    \draw [-,matchededge] (v10) -- (v11) ;
    \draw [-,matchededge] (v8) -- (v12) ;
  \end{tikzpicture}
  \caption{\label{fig:maxmatch}An undirected graph. The green edges constitute a maximum cardinality matching.}
\end{figure}
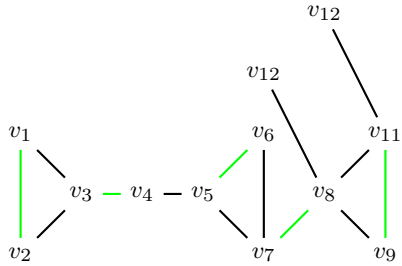

\paragraph{The Algorithm} At a high-level, Edmonds' blossom-shrinking algorithm~\cite{edmond1965blossom} works as follows.
The algorithm has a top-loop that repeatedly searches for an augmenting path, i.e.\ a path whose edges alternate in terms of membership in the matching and that begins and ends at unmatched vertices.
Initially, the current matching is empty.
Whenever an augmenting path is found, augmentation of the matching using the found augmenting path increases the size of the matching by one.
Augmentation is done by taking the symmetric difference between the matching and the edges in the augmenting path, which will always return a larger matching (see Figure~\ref{fig:augpathfig2}).
If no augmenting path exists with respect to the current matching, the current matching has maximum cardinality.

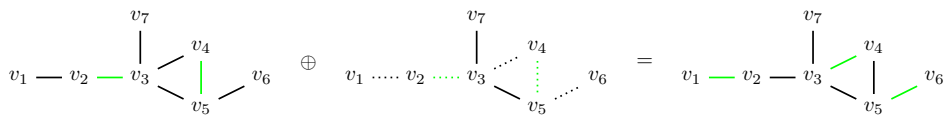
\begin{figure}[H]
  \centering
  \scalebox{0.8}{\begin{tikzpicture}[rotate=0]
    \node (matching) at (0,0)
          {\usebox{1}{\begin{tikzpicture}[rotate=0]
                \node (v1) at (0,0) [varnode] {$\vertexa$} ;
                \node (v2) at (1,0) [varnode] {$\vertexb$} ;
                \node (v3) at (2,0) [varnode] {$\vertexc$} ;
                \node (v4) at (3,0.5) [varnode] {$\vertexd$} ;
                \node (v5) at (3,-0.5) [varnode] {$\vertexe$} ;
                \node (v6) at (4,0) [varnode] {$\vertexf$} ;
                \node (v7) at (2,1) [varnode] {$\vertexg$} ;

                \draw [-,matchededge] (v1) -- (v2) ;
                \draw [green,-,matchededge] (v2) -- (v3) ;
                \draw [-,matchededge] (v3) -- (v4) ;
                \draw [-,matchededge] (v3) -- (v5) ;
                \draw [green,-,matchededge] (v4) -- (v5) ;
                \draw [-,matchededge] (v5) -- (v6) ;
                \draw [-,matchededge] (v3) -- (v7) ;
            \end{tikzpicture}}
          };
          
          \node (symadd) [right= 0.1cm of matching, varnode] {$\oplus$};

          \node (augpath) [right= 0.1cm of symadd]
                {\usebox{1}{\begin{tikzpicture}[rotate=0]
                      \node (v1) at (0,0) [varnode] {$\vertexa$} ;
                      \node (v2) at (1,0) [varnode] {$\vertexb$} ;
                      \node (v3) at (2,0) [varnode] {$\vertexc$} ;
                      \node (v4) at (3,0.5) [varnode] {$\vertexd$} ;
                      \node (v5) at (3,-0.5) [varnode] {$\vertexe$} ;
                      \node (v6) at (4,0) [varnode] {$\vertexf$} ;
                \node (v7) at (2,1) [varnode] {$\vertexg$} ;
                      \draw [-,dotted,matchededge] (v1) -- (v2) ;
                      \draw [green,-,dotted,matchededge] (v2) -- (v3) ;
                      \draw [-,dotted,matchededge] (v3) -- (v4) ;
                      \draw [-,matchededge] (v3) -- (v5) ;
                      \draw [green,-,dotted,matchededge] (v4) -- (v5) ;
                      \draw [-,dotted,matchededge] (v5) -- (v6) ;
                      \draw [-,matchededge] (v3) -- (v7) ;
                  \end{tikzpicture}}
                };

                \node (equal) [right=0.1cm of augpath, varnode] {$=$};

                \node (bigmatching) [right= 0.1cm of equal]
                      {\usebox{1}{\begin{tikzpicture}[rotate=0]
                            \node (v1) at (0,0) [varnode] {$\vertexa$} ;
                            \node (v2) at (1,0) [varnode] {$\vertexb$} ;
                            \node (v3) at (2,0) [varnode] {$\vertexc$} ;
                            \node (v4) at (3,0.5) [varnode] {$\vertexd$} ;
                            \node (v5) at (3,-0.5) [varnode] {$\vertexe$} ;
                            \node (v6) at (4,0) [varnode] {$\vertexf$} ;
                \node (v7) at (2,1) [varnode] {$\vertexg$} ;
                            \draw [green,-,matchededge] (v1) -- (v2) ;
                            \draw [-,matchededge] (v2) -- (v3) ;
                            \draw [green,-,matchededge] (v3) -- (v4) ;
                            \draw [-,matchededge] (v3) -- (v5) ;
                            \draw [-,matchededge] (v4) -- (v5) ;
                            \draw [green,-,matchededge] (v5) -- (v6) ;
                            \draw [-,matchededge] (v3) -- (v7) ;
                      \end{tikzpicture}}};
  \end{tikzpicture}}
  \caption[]{\label{fig:augpathfig2} A figure demonstrating how a matching can be augmented by an agumenting path. The current matching is labelled green. The dotted path is an augmenting path w.r.t\ that matching and the graph on the left. The new matching is on the right, with one more edge.}
\end{figure}

The previous loop is similar to the loops of many other algorithms for matching and, more generally, combinatorial optimisation problems.
The main difficulty, however, for maximum cardinality matching in general graphs is that of searching for augmenting paths.
For instance, one could think of using a modified breadth-first or depth-first search, which takes edges based on alternating membership in the matching, to find those paths.
Although this works for finding augmenting paths in bi-partite graphs, it would not work for general graphs, as the presence of odd-length alternating cycles (henceforth, for brevity, odd cycles) in the graph would necessitate keeping track of whether a given vertex has been visited through an edge in the matching, an edge not in the matching, or both types of edges.
Furthermore, this has to be done for every path, which would make the algorithm run in an exponential worst-case running time.

Edmonds' blossom shrinking algorithm avoids this problem based on the following insight: \emph{problematic odd cycles in the graph can be shrunk (i.e.\ contracted to one vertex) and the resulting graph has an augmenting path iff the original graph has one}.
Such problematic odd cycles occur within so-called \emph{blossoms}.
A blossom is an odd cycle whose base (the vertex on the cycle incident to two non-matching edges) is either unmatched or can be reached from an unmatched vertex by an alternating path ending in a matching edge.
This insight is then used to modify the general schema s.t.\ the search becomes for an augmenting path \emph{or} a blossom.
If the former is found, the matching is augmented.
If the latter is found, the blossom's odd cycle is shrunk into one vertex and the search continues.
Crucially, this works for any blossom that is found, and thus no backtracking is needed, which preserves the polynomial worst-case running time.
Having to prove that shrinking those odd cycles preserves augmenting paths and the fact that the algorithm needs data structures to represent such odd cycles, shrink them, and find them makes Edmonds' blossom shrinking algorithm one of the harder algorithms to justify in the theory of combinatorial optimisation and efficient algorithms more generally.

\paragraph{Methodology} We implement our verification using Isabelle/HOL's \emph{locale}s~\cite{LocalesBallarin}, which provide a mechanism to parametrically model algorithms and to verify them in a step-wise refinement approach.
In this approach, at a given step, we define an algorithm and assume that some functions exist, along with desirable properties of those functions.
Those properties are written as specifications.
In the next step, we define a more detailed description of the assumed function and show that this description satisfies the specification.
This is repeated until there are no significant assumptions left or, if there are any, only trivial ones from an algorithmic perspective are left, e.g.\ the existence of a function that chooses an arbitrary element of a finite set.

\paragraph{Availability} Our formalisation is archived at the DOI \url{10.5281/zenodo.15767314}.
It was done as part of a larger effort to formalise the theory and algorithms for combinatorial optimisation~\footnote{\url{https://github.com/mabdula/Isabelle-Graph-Library}}.

\section{Background}
\label{sec:background}

In this section, we provide a brief overview of Isabelle/HOL and its notation~(\Cref{subsec:isabelle}); our formalisations of undirected graphs~(\Cref{subsec:graphs}), matchings~(\Cref{subsec:matchings}), augmenting paths~(\Cref{subsec:augpaths}); and our approach to modelling non-determinism~(\Cref{subsec:nondet}).

\subsection{Isabelle/HOL Notation}
\label{subsec:isabelle}

Isabelle/HOL~\cite{IsabelleHOLRef} is a theorem prover based on Higher-Order Logic. 
Isabelle's syntax is a variation of Standard ML combined with almost standard mathematical notation.
Function application is written infix and functions are usually Curried (i.e.\ function $f$ applied to arguments $x_1~\ldots~x_n$ is written as $f~x_1~\ldots~x_n$ instead of the standard notation $f(x_1,~\ldots~,x_n)$).
We explain non-standard syntax in the paper where it occurs.

A central concept for our work is Isabelle/HOL \emph{locale}s.
A locale is a named context: definitions and theorems proved within the locale can refer to the parameters and assumptions declared in the heading of the locale.
Here, for instance, we have the locale \isa{graph\_defs} fixing a graph \isa{G} and the locale \isa{graph\_abs} that additionally assumes that the graph satisfies an invariant \isa{graph\_invar}.
We extensively use locales, as we show in the rest of this paper, to structure the reasoning in our formalisation.


\begin{figure}[H]
\begin{lstlisting}[
 language=Isabelle,
 caption={Basic graph definitions: an undirected graph is formalised as a set of sets. We define a context \isa{graph-abs}, where we fix a graph \isa{G} and assume that it has the right type and every edge in it is a doubleton set and that it has a finite set of vertices.},
 label={isabelle:graphDefs},
 captionpos=b
 ]
definition degree where
  "degree G v ≡ card' ({e. v ∈ e} ∩ G)"

locale graph_defs =
  fixes G :: "'a set set"

definition "dblton_graph G ≡ (∀e∈G. ∃u v. e = {u, v} ∧ u ≠ v)"

abbreviation "graph_invar G ≡ dblton_graph G ∧ finite (Vs G)"

locale graph_abs =
  graph_defs +
  assumes graph: "graph_invar G"
\end{lstlisting}
\end{figure}
\newcommand{\degree}{\ensuremath{d}}

\subsection{Graphs}
\label{subsec:graphs}

An \emph{edge} is a set of vertices with size 2.
A \emph{graph} $\graph$ is a finite set of edges.
The \emph{degree} of a vertex $\lvertexgen$ in a graph $\graph$, denoted by $\degree(\graph,\lvertexgen)$, is $\card{\{\edge\mid \edge\in \graph\wedge\lvertexgen\in\edge\}}$.
Although there is a number of other formalisations of undirected graphs~\cite{noschinskiGraphLib,edmondsHypergraphs}, we started our own formalisation.
We formalised the notion of an undirected graph as shown in Listing~\ref{isabelle:graphDefs}.
The main reason we pursued this formalisation is its simplicity: we do not keep track of an explicit set of vertices for each graph.
Instead, the graph's vertices are the union of the vertices of a graph's edges, $\bigcup\graph$, denoted by $\vertices(\graph)$ (\isa{Vs G}, in Isabelle).
This representation has the advantage that one does not have to prove, with every change to the graph, that all graph's edges are incident on the new graph's associated set of vertices.
Nonetheless, it has the disadvantage that it does not allow for graphs with isolated vertices.
However, this constraint has not caused us any problems in this or in the many other projects in which we and others have used it~\cite{RankingIsabelle,ScalingIsabelle,graphLibRepo}.

\begin{figure}[H]
\begin{lstlisting}[
 language=Isabelle,
 caption={A vertex path in an undirected graph is defined as an inductive predicate. We define based on that the notion of a walk between two vertices and the notion of reachability between two vertices.},
 label={isabelle:pathDefs},
 captionpos=b
 ]
context fixes G :: "'a set set" begin
inductive path where
  path0: "path []" |
  path1: "v ∈ Vs G ⟹ path [v]" |
  path2: "{v,v'} ∈ G ⟹ path (v'#vs) ⟹ path (v#v'#vs)"

definition "walk_betw G u p v
              ≡ (p ≠ [] ∧ path G p ∧ hd p = u ∧ last p = v)"

definition reachable where
  "reachable G u v = (∃p. walk_betw G u p v)"
\end{lstlisting}
\end{figure}

A list of vertices $\vertexa\vertexb\dots\vertexgen_n$ is a \emph{path} w.r.t.\ a graph $\graph$ iff every $\{\vertexgen_i, \vertexgen_{i+1}\}\in\graph$.
A path $\vertexa\vertexb\dots\vertexgen_n$ is a \emph{simple path} iff $\vertexgen_i \neq \vertexgen_j$, for every $1\leq i\neq j\leq n$.
We will denote the list of edges $\{\vertexa,\vertexb\}\{\vertexa,\vertexb\}\{\vertexa,\vertexb\}\dots\{\vertexgen_{n-1},\vertexgen_n\}$ occurring in a path $\vertexa\vertexb\dots\vertexgen_n$ by $\edges(\vertexa\vertexb\dots\vertexgen_n)$.
Paths are formally defined recursively in a straightforward fashion as shown in Listing~\ref{isabelle:pathDefs}.
Simple paths in Isabelle are denoted by \isa{distinct}, indicating that their vertices are pairwise distinct.
A path $\vertexa\vertexb\dots\vertexgen_n$ is called a \emph{cycle} if $3 < n$ and $\vertexgen_n = \vertexa$, and we call it an \emph{odd cycle} if $n$ is even.
Note: in informal statements and proofs, we will overload set operations to lists in the obvious fashion.

\begin{figure}[H]
\begin{lstlisting}[
 language=Isabelle,
 caption={Formal definitions of two different notions of connected components. The \isa{connected\_component} is a connected component of vertices and \isa{component\_edges} is a connected component of edges.},
 label={isabelle:conCompDefs},
 captionpos=b
 ]
definition connected_component where
  "connected_component G v = {v'. v' = v \<or> reachable G v v'}"

definition connected_components where
  "connected_components G =
     {vs. ∃v. vs = connected_component G v ∧ v ∈ (Vs G)}"

definition component_edges where
  "component_edges G C = 
     {{x, y} | x y.  {x, y} ⊆ C ∧ {x, y} ∈ G}"

definition components_edges where
  "components_edges G =
     {component_edges G C| C. C ∈ connected_components G}"
\end{lstlisting}
\end{figure}
\newcommand{\comp}{\ensuremath{K}}
\newcommand{\conComp}{\ensuremath{\mathcal{K}}}
\newcommand{\conComps}{\ensuremath{\mathfrak{K}}}
\newcommand{\compEdges}{\ensuremath{\mathcal{E}}}
The \emph{connected component} of a vertex $\vertexgen$, denoted by $\conComp(\graph,\vertexgen)$ is the set of vertices reachable from $\vertexgen$ in the graph $\graph$.
The \emph{connected components} of a graph, denoted by $\conComps(\graph)$, is $ \{\conComp(\graph,\vertexgen)\mid\vertexgen\in\vertices(\graph)\}$.
We define a second notion, the \emph{component edges}, which, for a set of vertices $\vertices$, denoted by $\compEdges(\vertices)$, is the set of edges incident on two vertices in $\vertices$.
In Isabelle, these definitions are formalised as shown in Listing~\ref{isabelle:conCompDefs}.
The connection between the two notions, which is important for proofs that require the two perspectives on connected components, is characterised as follows.
\begin{myprop}
\label{prop:compEdgesDisj}
For any graph $\graph$ and any two vertices $\vertexa$ and $\vertexb$, we have that
$\compEdges(\conComp(\graph,\vertexa)) = \compEdges(\conComp(\graph,\vertexb))$, if $\vertexa = \vertexb$, and $\compEdges(\conComp(\graph,\vertexa)) \cap \compEdges(\conComp(\graph,\vertexb)) = \emptyset$, otherwise.
\end{myprop}
\noindent Another important property of connected components is the following.
\begin{myprop}
\label{prop:inInsertCompE}
For a graph $\graph$ and an edge $\edge$ ($=\{\lvertexgen, \rvertexgen\}$), if $\comp\in\conComps(\{\edge\}\cup\graph)$, then one of the following holds:
\begin{enumerate}
  \item $\comp\in\conComps(\graph)$
  \item $\lvertexgen\not\in\vertices(\graph)$, $\rvertexgen\not\in\vertices(\graph)$, and $\comp = \{\lvertexgen,\rvertexgen\}$,
  \item $\lvertexgen\in\vertices(\graph)$ and $\rvertexgen\not\in\vertices(\graph)$ and $\comp\in(\{\rvertexgen\}\cup\conComp(\graph,\lvertexgen)) \cup (\conComps(\graph)\setminus\conComp(\graph,\lvertexgen))$, or
  \item $\lvertexgen\in\vertices(\graph)$, $\rvertexgen\in\vertices(\graph)$, $\conComp(\graph,\lvertexgen)\neq\conComp(\graph,\rvertexgen)$, and $\comp\in\{\conComp(\graph,\lvertexgen)\cup\conComp(\graph,\rvertexgen)\} \cup (\conComps(\graph)\setminus\conComp(\graph,\lvertexgen)\setminus\conComp(\graph,\rvertexgen))$.
\end{enumerate}
\end{myprop}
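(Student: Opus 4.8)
The plan is to case-split on whether each endpoint of $\edge$ already belongs to $\vertices(\graph)$; these cases, after one more split, line up with the four alternatives. Write $\edge=\{\lvertexgen,\rvertexgen\}$, so $\lvertexgen\neq\rvertexgen$ since edges are doubletons, and put $\graph'=\{\edge\}\cup\graph$. Since $\comp\in\conComps(\graph')$ we have $\comp=\conComp(\graph',a)$ for some $a\in\vertices(\graph')$, so the whole statement reduces to describing $\conComp(\graph',a)$ in terms of $\graph$-reachability.

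The key auxiliary step is the following equivalence: for vertices $a$ and $b$, $a$ reaches $b$ in $\graph'$ iff either $a$ already reaches $b$ in $\graph$, or $\edge$ \emph{bridges} $a$ and $b$, meaning one of $a,b$ equals or $\graph$-reaches $\lvertexgen$ while the other equals or $\graph$-reaches $\rvertexgen$. The ``if'' direction is immediate: a $\graph$-walk is a $\graph'$-walk by monotonicity of reachability, and two such walks can be joined through the single edge $\edge$. For ``only if'', take a walk from $a$ to $b$ in $\graph'$ and reduce it to a simple path with the same endpoints (a standard fact about walks); being \isa{distinct}, this path visits $\lvertexgen$ and $\rvertexgen$ at most once each and hence traverses $\edge$ at most once. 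If it never uses $\edge$ it is a $\graph$-walk; if it uses $\edge$, splitting the path at that edge exhibits the bridging decomposition. It follows that $\conComp(\graph',a)=\conComp(\graph,a)$ unless $a$ equals or $\graph$-reaches an endpoint of $\edge$, in which case $\conComp(\graph',a)=\conComp(\graph,\lvertexgen)\cup\conComp(\graph,\rvertexgen)$, using here and below that $\conComp(\graph,w)=\{w\}$ whenever $w\notin\vertices(\graph)$ (a direct consequence of the definition, since no walk starts at such a $w$).

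What remains is bookkeeping over the cases. If $\lvertexgen,\rvertexgen\notin\vertices(\graph)$, neither endpoint $\graph$-reaches anything but itself, so the only component of $\graph'$ that is not a component of $\graph$ is $\conComp(\graph',\lvertexgen)=\{\lvertexgen,\rvertexgen\}$; this gives alternatives~1 and~2. If exactly one endpoint lies in $\vertices(\graph)$, say $\lvertexgen\in\vertices(\graph)$ and $\rvertexgen\notin\vertices(\graph)$ (the other subcase is the same after swapping the names $\lvertexgen,\rvertexgen$), then $\rvertexgen$ joins the component of $\lvertexgen$, which becomes $\{\rvertexgen\}\cup\conComp(\graph,\lvertexgen)$, while every other component is unchanged and distinct from $\conComp(\graph,\lvertexgen)$; this gives alternative~3. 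If $\lvertexgen,\rvertexgen\in\vertices(\graph)$ with $\conComp(\graph,\lvertexgen)=\conComp(\graph,\rvertexgen)$, the bridging clause is already implied by $\graph$-reachability, so $\conComps(\graph')=\conComps(\graph)$ and we are in alternative~1; and if $\lvertexgen,\rvertexgen\in\vertices(\graph)$ with $\conComp(\graph,\lvertexgen)\neq\conComp(\graph,\rvertexgen)$, those two components merge into $\conComp(\graph,\lvertexgen)\cup\conComp(\graph,\rvertexgen)$ while all other components are untouched; this gives alternative~4.

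I expect the main obstacle to be the ``only if'' direction of the reachability equivalence: one must descend from an arbitrary $\graph'$-walk to a simple path and then split cleanly on whether and where the new edge is traversed, relying on the distinctness of its vertices to ensure $\edge$ is used at most once. Everything else is careful set manipulation; the one point that needs care is that $\conComp(\graph,w)=\{w\}$ for $w\notin\vertices(\graph)$ and such singletons are not members of $\conComps(\graph)$, which is exactly why $\conComp(\graph,\lvertexgen)$ — and, in alternative~4, also $\conComp(\graph,\rvertexgen)$ — must be removed from $\conComps(\graph)$ when enumerating the components that survive unchanged.
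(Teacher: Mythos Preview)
The paper states Proposition~\ref{prop:inInsertCompE} without proof; the only information it gives is in Remark~\ref{remark:symmetry}, namely that the formal proof runs to roughly two hundred lines of Isabelle script and that much of the effort goes into handling symmetric subcases. So there is no informal argument to compare against, only the fact that a nontrivial case analysis was carried out in the formalisation.

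Your proposal is correct and is the natural way to prove this. The reachability characterisation for $\graph'=\{\edge\}\cup\graph$ via ``either a $\graph$-walk or a bridge through $\edge$'' is sound, and your justification of the ``only if'' direction---pass to a simple path, observe $\edge$ occurs at most once because its endpoints are distinct and each appears at most once, then split---is exactly the right idea. Your observation that $\conComp(\graph,w)=\{w\}$ for $w\notin\vertices(\graph)$ matches the paper's definition of \isa{connected\_component}, and your closing remark about why the old components of $\lvertexgen$ (and $\rvertexgen$) must be explicitly removed is on point. The four-way split on membership of $\lvertexgen,\rvertexgen$ in $\vertices(\graph)$, with the further split on whether their $\graph$-components coincide, lines up with the four alternatives as you describe.

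One small presentational point: in alternative~3 you rely on a symmetry (``the other subcase is the same after swapping the names''). That is fine informally, but the paper's Remark~\ref{remark:symmetry} singles out exactly these symmetries as the main source of bulk in the formal proof, so if you later formalise this you should expect the swap not to come for free.
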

This case analysis is crucial for proving facts about connected components by induction on the graph, and it is generally reusable.
The following property of connected components is necessary for proving Berge's Lemma.
\begin{mylem}
\label{lem:conCompPath}
If $\comp\in\conComps(\graph)$ and, for every $\vertexgen\in\vertices(\graph)$, $\degree(\graph,\vertexgen) \leq 2$, then there is a simple path $\path$ s.t.\ $\path$ has exactly all elements of $\comp$.
\end{mylem}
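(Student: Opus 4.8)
The plan is to take a simple path $\path$ of maximum length among all simple paths in $\graph$ whose vertices all lie in $\comp$, and then to use the degree bound to show that such a longest path must already visit every vertex of $\comp$.

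First I would check that the collection $\Gamma$ of simple paths in $\graph$ with vertex set contained in $\comp$ is finite and nonempty. Finiteness holds because a simple path is a list of pairwise-distinct vertices and $\comp \subseteq \vertices(\graph)$ is finite, so there are only finitely many such lists; nonemptiness holds because $\comp$ is nonempty — it is $\conComp(\graph,\vertexgen)$ for some $\vertexgen \in \vertices(\graph)$, hence contains $\vertexgen$ — so that $[\vertexgen] \in \Gamma$ for any $\vertexgen \in \comp$. Consequently $\Gamma$ contains an element $\path = \vertexgen_1 \vertexgen_2 \cdots \vertexgen_k$ of maximum length; writing $\listset(\path)$ for the set of vertices occurring on a path $\path$, we have $k \ge 1$ and $\listset(\path) \subseteq \comp$.

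The core of the proof is the claim $\listset(\path) = \comp$, which (as $\subseteq$ holds by construction) I would establish by contradiction: assume some vertex of $\comp$ does not lie on $\path$. Since $\comp$ is a connected component, any vertex on $\path$ is joined to any vertex of $\comp \setminus \listset(\path)$ by a walk in $\graph$ all of whose vertices lie in $\comp$; traversing such a walk and taking the last vertex that still lies on $\path$ produces an edge $\{a,b\} \in \graph$ with $a = \vertexgen_i$ on $\path$ and $b \in \comp \setminus \listset(\path)$. If $1 < i < k$, then $\vertexgen_i$ is incident in $\graph$ to the three edges $\{\vertexgen_{i-1},\vertexgen_i\}$, $\{\vertexgen_i,\vertexgen_{i+1}\}$, $\{\vertexgen_i, b\}$, which are pairwise distinct because $\vertexgen_{i-1} \neq \vertexgen_{i+1}$ ($\path$ is simple) and $b \notin \listset(\path)$; hence $\degree(\graph,\vertexgen_i) \ge 3$, contradicting the hypothesis since $\vertexgen_i \in \comp \subseteq \vertices(\graph)$. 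If $i = 1$, then $b\,\vertexgen_1\vertexgen_2\cdots\vertexgen_k$ is a simple path in $\graph$ with all vertices in $\comp$ that is strictly longer than $\path$, contradicting maximality; the case $i = k$ is symmetric (when $k = 1$ it coincides with the case $i = 1$). This forces $\listset(\path) = \comp$, so $\path$ is the required path.

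The main obstacle I expect is the formal handling of connectedness: extracting the crossing edge $\{a,b\}$ from the \isa{connected\_component}/\isa{reachable}/\isa{walk\_betw} definitions, and proving that every intermediate vertex of a walk between two vertices of $\comp$ again lies in $\comp$ — this needs symmetry and transitivity of reachability together with the fact that every vertex on a walk is reachable from its endpoints. By contrast, selecting a maximum-length element of the finite set $\Gamma$ and carrying out the three-edge degree computation are routine.
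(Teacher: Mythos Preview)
Your argument is correct, but it is a genuinely different proof from the one in the paper. The paper proceeds by structural induction on the edge set of $\graph$: it invokes Proposition~\ref{prop:inInsertCompE} to split into four cases according to how the connected component containing a newly inserted edge $\{\lvertexgen,\rvertexgen\}$ relates to the components of the smaller graph, obtains the required path(s) from the induction hypothesis, and then extends or concatenates them, using the degree bound only to argue that the relevant endpoint has degree at most one in the smaller graph and hence sits at an end of the inductively obtained path. Your proof instead picks a longest simple path inside $\comp$ up front and derives a contradiction from any uncovered vertex via a crossing-edge argument plus the degree bound. Your route is shorter and more self-contained as a pen-and-paper proof; the paper's inductive route, on the other hand, is tailored to the formalisation: it reuses the general-purpose case split of Proposition~\ref{prop:inInsertCompE} and produces the witness path constructively along the induction, at the price of the symmetry explosion noted in Remark~\ref{remark:symmetry}. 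If you were to formalise your version, the maximal-element selection is routine, but extracting the crossing edge from the \isa{reachable}/\isa{walk\_betw} layer and showing it stays inside $\comp$ is exactly the step you flagged as the main obstacle, and it would need a small amount of library support.
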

\begin{proof}[Proof sketch]
The proof is by induction on $\graph$.
Let all the variable names in the induction hypothesis (I.H.) be barred, e.g.\ the connected component is $\ihvar{\comp}$.
The base case has an empty graph and is straightforward.
For the step case, we have as an assumption that $\comp\in\conComps(\{\edge\}\cup\graph)$, for some $\edge$, where there are two vertices s.t.\ $\edge=\{\lvertexgen,\rvertexgen\}$.
From Proposition~\ref{prop:compEdgesDisj}, we have to consider the following four cases.
\begin{mycase}
In this case, we can immediately apply the I.H.\ with $\ihvar{\comp}$ assigned to $\comp$, and obtain $\ihvar{\path}$ that is a simple path w.r.t.\ $\graph$ and that has all the vertices of $\comp$.
Since $\graph\subseteq\{\edge\}\cup\graph$, $\ihvar{\path}$ is the required witness.\qed
\end{mycase}
\begin{mycase} In this case, the required path is $\lvertexgen\rvertexgen$.
\qed
\end{mycase}
\begin{mycase}
\label{case:vinUnin}
First, we apply the I.H.\ to $\graph$, where $\ihvar{\comp}$ is instantiated with $\conComp(\graph,\lvertexgen)$. We obtain $\ihvar{\path}$ that is a simple path w.r.t.\ $\graph$.
From the premises of the induction, we know that $\degree((\{\edge\}\cup\graph),\rvertexgen) \leq 2$.
That means that $\degree(\graph,\lvertexgen) \leq 1$, which means that $\ihvar{\path}$ starts or ends with $\lvertexgen$.
Thus $\rvertexgen$ can be appended to either end of $\ihvar{\path}$ (the end at which $\lvertexgen$ is located) and the resulting list of vertices is the required witness.
\qed
\end{mycase}
\begin{mycase}
In this case, we apply the I.H.\ twice, once to $\conComp(\graph,\lvertexgen)$ and $\graph$ and another to $\conComp(\graph,\rvertexgen)$ and $\graph$.
We obtain two paths $\ihvar{\path}_\lvertexgen$ and $\ihvar{\path}_\rvertexgen$, where both are simple paths w.r.t.\ $\graph$ and where the first has the vertices of $\conComp(\graph,\lvertexgen)$ and the second has those of $\conComp(\graph,\rvertexgen)$.
Also each of the two paths has all the vertices of the corresponding connected component.
Following a similar argument to Case~\ref{case:vinUnin}, we have that $\lvertexgen$ is either at beginning or at the end of $\ihvar{\path}_\lvertexgen$, and the same is true for $\rvertexgen$ and $\ihvar{\path}_\rvertexgen$.
The required witness path is $\ihvar{\path}_\lvertexgen\cat\ihvar{\path}_\rvertexgen$, s.t.\ $\lvertexgen$ and $\rvertexgen$ are adjacent.
\qed
\end{mycase}
\end{proof}
\begin{myremark}
\label{remark:symmetry}
The proof of the above lemma is an example of proofs about connected components that use the case analysis implied by Proposition~\ref{prop:inInsertCompE}.
It also has a theme that recurs often in the context of this present formalisation and other formalisations concerned with graph algorithms, namely, the occurrence of symmetries in proofs.
Here, for instance \begin{enumerate*}\item the third case has two symmetric cases, namely, whether $\lvertexgen$ occurs at the beginning or at the end of $\ihvar{\path}$, and \item the fourth case has four symmetric cases, regarding whether $\lvertexgen$ and $\rvertexgen$ occur in the beginning or the end of $\ihvar{\path}_\lvertexgen$ and $\ihvar{\path}_\lvertexgen$, respectively.\end{enumerate*}
In the formal setting, despite spending effort on devising lemmas capturing them, these symmetries caused the formal proofs to be much longer than the informal (e.g.\ both Proposition~\ref{prop:inInsertCompE} and Lemma~\ref{lem:conCompPath} required two hundred lines of formal proof scripts each).
\end{myremark}

\begin{figure}[H]
\begin{lstlisting}[
 language=Isabelle,
 caption={Formal definition of matchings.},
 label={isabelle:matching},
 captionpos=b
 ]
definition matching where
  "matching M \<longleftrightarrow> 
     (∀e1 ∈ M. ∀e2 ∈ M. e1 ≠ e2 ⟹ e1 ∩ e2 = {})"
\end{lstlisting}
\end{figure}

\paragraph{Matchings}
\label{subsec:matchings}

A set of edges $\matching$ is a \emph{matching} iff $\forall e,e'\in\matching.\; e \cap e' = \emptyset$.
In Isabelle/HOL that is modelled as shown in Listing~\ref{isabelle:matching}.
In almost all relevant cases, a matching is a subset of a graph, in which case we call it a matching w.r.t.\ that graph.
Given a matching $\matching$, we say vertex $\vertexgen$ is (un)matched $\matching$ iff $\vertexgen\in\vertices(\matching)$ (does not hold).
For a graph $\graph$, $\matching$ is a maximum matching w.r.t.\ $\graph$ iff for any matching $\matching'$ w.r.t.\ $\graph$, we have that $\cardinality{\matching'} \leq  \cardinality{\matching}$.

\begin{figure}[H]
\begin{lstlisting}[
 language=Isabelle,
 caption={Formal definition of alternating paths.},
 label={isabelle:altList},
 captionpos=b
 ]
inductive alt_list where
"alt_list P1 P2 []" |
"P1 x ⟹ alt_list P2 P1 l ⟹ alt_list P1 P2 (x#l)"

definition matching_augmenting_path where
  "matching_augmenting_path M p ≡ 
    (length p \<ge> 2) ∧
     alt_list (λe. e ∉ M) (λe. e ∈ M) (edges_of_path p) ∧ 
     hd p ∉ Vs M ∧ last p ∉ Vs M"

abbreviation "graph_augmenting_path E M p ≡
  path E p ∧ distinct p ∧ matching_augmenting_path M p"
\end{lstlisting}
\end{figure}

\subsection{Augmenting Paths}
\label{subsec:augpaths}

A list of vertices $\vertexa\vertexb\dots\vertexgen_n$ is an \emph{alternating path} w.r.t.\ a set of edges $\edges$ iff for some $\edges'$ we have that \begin{enumerate*}\item $\edges' = \edges$ or $\edges' = \complement{E}$, \item $\{\vertexgen_i,\vertexgen_{i+1}\}\in\edges'$ holds for all even numbers $i$, where $1 \leq i < n$, and \item $\{\vertexgen_i,\vertexgen_{i+1}\}\not\in\edges'$ holds for all odd numbers $i$, where $1 \leq i \leq n$.\end{enumerate*}
We call a list of vertices $\vertexa\vertexb\dots\vertexgen_n$ an \emph{augmenting path} w.r.t.\ a matching $\matching$ iff $\vertexa\vertexb\dots\vertexgen_n$ is an alternating path w.r.t.\ $\matching$ and $\vertexa,\vertexgen_n\not\in\vertices(\matching)$. 
We call $\gamma$ an augmenting path w.r.t.\ to the pair $\langle\graph,\matching\rangle$ iff it is an augmenting path w.r.t.\ to a matching $\matching$ and is also a simple path w.r.t.\ a graph $\graph$.
Also, for two sets $s$ and $t$, $s \oplus t$ denotes the symmetric difference of the two sets.
We overload $\oplus$ to arguments which are lists in the obvious fashion.

\begin{figure}[H]
\begin{lstlisting}[
 language=Isabelle,
 caption={Basic principles of reasoning about alternating lists.},
 label={isabelle:altListLemmas},
 captionpos=b
 ]
lemma induct_alt_list012:
  assumes "alt_list P1 P2 l"
  assumes "T []"
  assumes "∧x. P1 x ⟹ T [x]"
  assumes "∧x y zs. P1 x ⟹ P2 y ⟹ T zs ⟹ T (x#y#zs)"
  shows "T l"

lemma alternating_length_balanced:
  assumes "alt_list P1 P2 l" "∀x∈set l. P1 x \<longleftrightarrow> \<not> P2 x"
  shows "length (filter P1 l) = length (filter P2 l) \<or>
         length (filter P1 l) = length (filter P2 l) + 1"

lemma alternating_eq_iff_even:
  assumes "alt_list P1 P2 l" "∀x∈set l. P1 x \<longleftrightarrow> \<not> P2 x"
  shows
    "length (filter P1 l) = length (filter P2 l) \<longleftrightarrow>
         even (length l)"

lemma alternating_eq_iff_odd:
  assumes "alt_list P1 P2 l" "∀x∈set l. P1 x \<longleftrightarrow> \<not> P2 x"
  shows
    "length (filter P1 l) = length (filter P2 l) + 1 \<longleftrightarrow> 
        odd (length l)"
\end{lstlisting}
\end{figure}

Alternating paths are formalised as shown Listing~\ref{isabelle:altList}.
We first define an inductive predicate characterising what it means for a list to alternate w.r.t.\ two predicates, and based on that define augmenting paths.
Since this is the first formalisation of a substantial result involving alternating lists, it is worthwhile to display the basic reasoning principles needed to reason about alternating lists.
Those reasoning principles are the induction principle \isa{induct\_alt\_list012} and the other three lemmas in Listing~\ref{isabelle:altListLemmas} that relate the length of an alternating list to the predicate that holds for the last element of the alternating list.
Those four lemmas are the only ones we needed to derive directly from the definition of alternating lists.
We were able to derive all other facts---which we need in the context of of Edmonds' blossom algorithm as well other matching algorithms~\cite{RankingIsabelle}---pertaining to alternating lists from those four lemmas.


\begin{figure}[H]
\begin{lstlisting}[
 language=Isabelle,
 caption={Two non-deterministic functions that we assume: one to create new vertices and the other to choose vertices.},
 label={isabelle:nonDeterminism},
 captionpos=b
 ]
locale create_vert = 
  fixes create_vert::"'a set ⇒ 'a"
  assumes 
    create_vert_works: "finite vs ⟹ create_vert vs ∉ vs"

locale choose = 
  fixes sel
  assumes sel: "⟦finite s; s ≠ {}⟧ ⟹ (sel s) ∈ s"
\end{lstlisting}
\end{figure}

\begin{figure}[H]
\begin{lstlisting}[
 language=Isabelle,
 caption={A function to non-deterministically choose an edge utilising the non-deterministic function for choosing vertices.},
 label={isabelle:chooseEdge},
 captionpos=b
 ]
definition
  "sel_edge G =( 
     let v1 = sel (Vs G);
         v2 = sel (neighbourhood G v1)
     in
        {v1,v2})"

lemma sel_edge: 
  assumes "graph_invar G" "G ≠ {}"
  shows "sel_edge G ∈ G"
\end{lstlisting}
\end{figure}

\subsection{Nondeterminism}
\label{subsec:nondet}

Edmonds' blossom shrinking algorithm has multiple computational steps that are most naturally modelled nondeterministically.
In a locale-based approach nondeterminism could be modelled by assuming functions that perform nondeterministic computation steps.
If the algorithm is to be executed, those functions are to be instantiated with executable implementations.
This includes functions to non-deterministically choose edges from a graph or a matching, a neighbour of a vertex, etc.
In our approach to model that nondeterminism, however, we aspire to limit nondeterminism to a minimum, i.e.\ we wanted to assume a minimal number of nondeterministic functions in the verified algorithm.
The only places where there is nondeterminism in our model of the algorithm are in the locales \isa{choose} and \isa{create\_vert} shown in Listing~\ref{isabelle:nonDeterminism}.
There, we assume the presence of a function that can choose an arbitrary vertex from a finite set of vertices and a function that, given a finite set of vertices, creates a vertex that is not in the give set.
Based on those functions, we define all other nondeterministic functions used throughout the algorithm.
Listing~\ref{isabelle:chooseEdge}, for instance, shows how we define a function that chooses an arbitrary edge from a set of edges.
Concentrating all nondeterminism that way aims at making the generation of an executable implementation a more straightforward process, requiring minimal modifications to our current formalisation.

\section{The Top-Loop}
\label{sec:toploop}

\begin{figure}[H]
\begin{lstlisting}[
 language=Isabelle,
 caption={An Isabelle/HOL locale showing the parameters on which the top-loop of Edmonds' blossom algorithm is parameterised. Most notably, it assumes the presence of a function that returns an augmenting path, if one exists. Note: in Isabelle/HOL, {$[\![P; Q]\!] \Longrightarrow R$} is shorthand for {$P \Longrightarrow Q \Longrightarrow R$}.},
 label={isabelle:findMaxMatch},
 captionpos=b
 ]
locale find_max_match = graph_abs G for G +
  fixes
    aug_path_search::
      "'a set set ⇒ 'a set set ⇒ ('a list) option" 
  assumes
    aug_path_search_complete: 
    "⟦matching M; M ⊆ G; finite M;
      (∃p. graph_augmenting_path G M p)⟧ ⟹ 
        (∃p. aug_path_search G M = Some p)" and
    aug_path_search_sound:
    "⟦matching M; M ⊆ G; finite M; 
      aug_path_search G M = Some p⟧ ⟹
        graph_augmenting_path G M p"
\end{lstlisting}
\end{figure}
\begin{figure}[H]
\begin{lstlisting}[
 language=Isabelle,
 caption={A recursive function modelling the top-loop of Edmonds' blossom shrinking algorithm.},
 label={isabelle:findMaxMatching},
 captionpos=b
 ]
function find_max_matching where
  "find_max_matching M = 
     (case aug_path_search G M of Some p ⇒
        (find_max_matching (M ⊕ (set (edges_of_path p))))
      | _ ⇒ M)"
\end{lstlisting}
\end{figure}

In Isabelle/HOL, Algorithm~\ref{alg:Blossom} is formalised in a parametric fashion within the Isabelle locale \isa{find\_max\_match} whose header is shown in Listing~\ref{isabelle:findMaxMatch}.
The algorithm itself is formalised as the recursive function shown in Listing~\ref{isabelle:findMaxMatching}.
Recall that Algorithm~\ref{alg:Blossom} is parameterised over the function $\AugPathAlg$, which is a function that searches for augmenting paths, if any exists.
To formalise that, we identify \isa{aug\_path\_search} as a parameter of the locale \isa{find\_max\_match}, corresponding to the function $\AugPathAlg$.
The function \isa{aug\_path\_search} should take as input a graph and a matching.
It should return an \isa{('a list) option} typed value, which would either be \isa{Some p}, if a path p is found, or \isa{None}, otherwise.
There is also the function \isa{the} that, given a term of type \isa{'a option}, returns \isa{x}, if the given term is \isa{Some x}, and which is undefined otherwise.

Functions defined within a locale are parameterised by the constants which are fixed in the locale's header.
When such a function is used outside the locale, these parameters must be provided to the function as arguments. So, if \isa{find\_max\_matching} is used outside the locale above, it should take a function which computes augmenting paths as an argument.
Similarly, theorems proven within a locale implicitly have the assumptions of the locale.
So if we use the lemma \isa{find\_max\_matching\_works} (Listing~\ref{isabelle:findMaxMatchingWorks}) outside of the locale, we would have to prove that the functional argument to \isa{find\_max\_matching} satisfies the assumptions of the locale, i.e.\ that argument is a correct procedure for computing augmenting paths.
The use of locales for performing gradual refinement of algorithms allows us to focus on the specific aspects of the algorithm relevant to a refinement stage, with the rest of the algorithm abstracted away.

\paragraph{Correctness}
\label{sec:correct}

The correctness of Algorithm~\ref{alg:Blossom} depends on the assumed behaviour of $\AugPathAlg$, i.e.\ $\AugPathAlg$ has to conform to the following specification in order for Algorithm~\ref{alg:Blossom} to be correct.
\begin{myspec}
\label{spec:AugPath}
$\AugPathAlg(\graph,\matching)$ is an augmenting path w.r.t.\ $\langle\graph,\matching\rangle$, for any graph $\graph$ and matching $\matching$, iff $\graph$ has an augmenting path w.r.t.\ $\langle\graph,\matching\rangle$.
\end{myspec}
\noindent In the formalisation, this specification corresponds to the assumptions on \isa{find\_aug\_path} in the locale \isa{find\_max\_match} shown in Listing~\ref{isabelle:findMaxMatch}.

The correctness proof of Algorithm~\ref{alg:Blossom} is mainly constituted of Berge's lemma~\cite{BergesLemma}, which justifies the total correctness of most maximum cardinality matching algorithms.

\begin{mythm}[Berge 1957~\cite{BergesLemma}]
\label{thm:Berge}
For a graph $\graph$, a matching $\matching$ is maximum w.r.t.\ $\graph$ iff there is no augmenting path $\path$ w.r.t.\ $\langle \graph, \matching\rangle$.
\end{mythm}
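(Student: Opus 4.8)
The plan is to prove both directions by contraposition, which gives the two halves the following shape: (a) if an augmenting path $\path$ exists w.r.t.\ $\langle\graph,\matching\rangle$, then $\matching$ is not maximum; and (b) if $\matching$ is not maximum, then an augmenting path w.r.t.\ $\langle\graph,\matching\rangle$ exists. Direction (a) is the easier one. Given an augmenting path $\path = \vertexa\vertexb\dots\vertexgen_n$, I would form $\matching' = \matching \oplus \edges(\path)$ and show that $\matching'$ is a matching w.r.t.\ $\graph$ with $\cardinality{\matching'} = \cardinality{\matching} + 1$. That $\matching'$ is a matching follows because $\path$ is simple, so each internal vertex of $\path$ is touched by exactly one matched and one unmatched edge of $\path$, and the endpoints are unmatched in $\matching$; toggling along $\path$ therefore cannot create a vertex of degree $2$ in $\matching'$. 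The cardinality increase follows from the alternation: along $\edges(\path)$ there is exactly one more non-matching edge than matching edge (the endpoints being unmatched makes the path odd in edge-count), which is precisely the content of the \isa{alternating\_length\_balanced} / \isa{alternating\_eq\_iff\_odd} lemmas in Listing~\ref{isabelle:altListLemmas} applied to $\edges(\path)$.

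Direction (b) is the substantive half and is where I expect the main work. Suppose $\matching'$ is a matching w.r.t.\ $\graph$ with $\cardinality{\matching'} > \cardinality{\matching}$. Consider the symmetric difference $H = \matching \oplus \matching'$. Every vertex has degree at most $1$ in $\matching$ and at most $1$ in $\matching'$, hence degree at most $2$ in $H$; moreover no edge of $H$ lies in both $\matching$ and $\matching'$, so every path or cycle in $H$ alternates between $\matching$-edges and $\matching'$-edges, and every cycle in $H$ is even. Now I would invoke Lemma~\ref{lem:conCompPath}: since $\degree(H,\vertexgen)\le 2$ for all $\vertexgen$, each connected component of $H$ is either an (even alternating) cycle or a simple alternating path. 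Because $\cardinality{\matching'} > \cardinality{\matching}$, the component edge-sets partition $H$ (Proposition~\ref{prop:compEdgesDisj}), and summing over components, at least one component $\comp$ must contain strictly more $\matching'$-edges than $\matching$-edges. A cycle or an even-length path has equal numbers of each, so $\comp$ must be an odd-length simple path — and an odd alternating path whose edge-count is odd necessarily has its two endpoints incident to $\matching'$-edges and not to $\matching$-edges, i.e.\ both endpoints are unmatched in $\matching$. That path, viewed in $\graph$, is then an augmenting path w.r.t.\ $\langle\graph,\matching\rangle$: it is simple, it alternates w.r.t.\ $\matching$, and its endpoints are not in $\vertices(\matching)$. (One small point: Lemma~\ref{lem:conCompPath} as stated yields a simple path covering the component's vertices; I would additionally need that the path so obtained is alternating w.r.t.\ $\matching$, which follows because consecutive edges on it share a vertex of degree $2$ in $H$ and hence one edge from each of $\matching,\matching'$ — this may require strengthening the bookkeeping slightly when extracting the path, or a short separate argument.)

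The main obstacle is the bridge between the purely combinatorial ``degree $\le 2$'' structure theorem (Lemma~\ref{lem:conCompPath}) and the \emph{alternating} structure of $H = \matching\oplus\matching'$: I need that the covering path produced for the relevant component genuinely alternates in membership in $\matching$, and that an odd alternating path forces both endpoints to be $\matching$-unmatched. Handling the case split on whether a component is a cycle, an even path, or an odd path — and formally ruling out that every component is ``balanced'' while the global counts are not — is exactly the kind of place where, as Remark~\ref{remark:symmetry} warns, symmetric subcases (which endpoint, which of $\matching$ vs.\ $\matching'$) inflate the formal proof. Everything else (the degree bound on $H$, disjointness of component edges, the length/parity bookkeeping for alternating lists) is routine given the lemmas already in hand.
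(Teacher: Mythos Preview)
Your proposal is correct and follows the same overall architecture as the paper --- symmetric difference for the ``not maximum $\Rightarrow$ augmenting path'' direction, and augmentation along $\path$ for the converse --- but the two directions are executed differently enough to be worth noting.

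For the direction you call (a), the paper does \emph{not} argue directly via the parity lemmas on $\edges(\path)$. Instead it proceeds by structural induction on $\path$: for a path $\vertexa\vertexb\vertexc\cat\path'$, it sets $\ihvar{\matching} = (\matching\setminus\{\{\vertexb,\vertexc\}\})\cup\{\{\vertexa,\vertexb\}\}$, observes that $\vertexc\cat\path'$ is augmenting w.r.t.\ $\ihvar{\matching}$, applies the I.H., and then unwinds the identity $\ihvar{\matching}\oplus(\vertexc\cat\path') = \matching\oplus(\vertexa\vertexb\vertexc\cat\path')$. Your direct counting argument is the more textbook route and is arguably cleaner informally; the paper's inductive argument trades the global ``$\matching'$ is still a matching'' check (which you flag as needing care at the endpoints) for a local two-edge swap that is easy to verify in each step.

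For direction (b), the paper packages the step you describe as ``cycle or even path is balanced, so the unbalanced component is an odd path'' into a separate technical lemma (Lemma~\ref{lem:symmDiffStruct}), which it proves by contradiction: assuming the component edges cannot be arranged into a simple path, it derives either a cycle (forcing two $\matching'$-edges at one vertex) or a vertex of degree $\geq 3$ in $\matching\oplus\matching'$ (forcing two edges from one of $\matching,\matching'$). So rather than first invoking Lemma~\ref{lem:conCompPath} to get a path and then arguing alternation separately --- the bridge you correctly identify as the main obstacle --- the paper's Lemma~\ref{lem:symmDiffStruct} builds the imbalance hypothesis in from the start and rules out the bad cases directly. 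Your plan works, but the paper's packaging avoids the ``strengthen the bookkeeping when extracting the path'' step you anticipate, at the cost of a dedicated lemma whose exhaustiveness of cases (as the paper remarks) is itself nontrivial to formalise.
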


\noindent To prove it, we first need the following technical lemma.
\begin{mylem}
\label{lem:symmDiffStruct}
Consider two matchings w.r.t.\ $\graph$, $\matching$, and $\matching'$.
For a connected component $\conComp$ of the graph $\matching\oplus\matching'$, if $\card{\matching'\cap\compEdges(\conComp)} > \card{\matching\cap\compEdges(\conComp)}$, then $\compEdges(\conComp)$ can always be arranged into $\path$,
a simple path w.r.t.\ $\matching\oplus\matching'$.
\end{mylem}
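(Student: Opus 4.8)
The plan is to reduce the claim to Lemma~\ref{lem:conCompPath}. The first ingredient is the standard degree bound: every vertex $\vertexgen$ has degree at most two in $\matching\oplus\matching'$, since $\matching$ and $\matching'$ are each matchings, so $\vertexgen$ lies on at most one edge of each, and if those two edges coincide the common edge is in $\matching\cap\matching'$, hence not in $\matching\oplus\matching'$. The same bookkeeping yields a finer fact needed later: at a vertex of degree exactly two in $\matching\oplus\matching'$, one incident edge lies in $\matching\setminus\matching'$ and the other in $\matching'\setminus\matching$.

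With the degree bound, Lemma~\ref{lem:conCompPath} applied to the graph $\matching\oplus\matching'$ and the component $\conComp$ produces a simple path $\path = \vertexa\vertexb\dots\vertexgen_k$ w.r.t.\ $\matching\oplus\matching'$ whose vertex set is exactly $\conComp$. Every edge of $\path$ joins two vertices of $\conComp$ and lies in $\matching\oplus\matching'$, hence belongs to $\compEdges(\conComp)$ (taken relative to $\matching\oplus\matching'$), so $\path$ is the required arrangement once the reverse inclusion $\compEdges(\conComp)\subseteq\edges(\path)$ is shown. I would prove this by contradiction. Suppose some $\edge\in\compEdges(\conComp)$ does not occur in $\path$. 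Each interior vertex of $\path$ (the $\vertexgen_i$ with $1<i<k$) already has degree two within $\edges(\path)\subseteq\matching\oplus\matching'$, so the degree bound forces both endpoints of $\edge$ to be endpoints of $\path$; as $\edge$ is a doubleton distinct from every edge of $\path$, this means $\edge=\{\vertexa,\vertexgen_k\}$ and $\matching\oplus\matching'$ restricted to $\conComp$ is the cycle $\vertexa\vertexb\dots\vertexgen_k\vertexa$, all of whose vertices have degree two.

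It is here that the cardinality hypothesis is used. Going around the cycle, the finer fact above forces its edges to alternate between $\matching\setminus\matching'$ and $\matching'\setminus\matching$, so the cycle has even length and equally many edges of each kind. Since $\compEdges(\conComp)\subseteq\matching\oplus\matching'$ we may rewrite $\matching\cap\compEdges(\conComp)$ as $(\matching\setminus\matching')\cap\compEdges(\conComp)$ and $\matching'\cap\compEdges(\conComp)$ as $(\matching'\setminus\matching)\cap\compEdges(\conComp)$, so $\card{\matching\cap\compEdges(\conComp)}=\card{\matching'\cap\compEdges(\conComp)}$, contradicting the assumption that the $\matching'$-count is strictly larger. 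Hence no edge of $\compEdges(\conComp)$ is missing, $\edges(\path)=\compEdges(\conComp)$, and $\path$ is the desired simple path w.r.t.\ $\matching\oplus\matching'$.

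I expect the main obstacle to be the ``a missing edge closes $\path$ into a cycle'' step rather than the degree count: extracting that implication from nothing but the degree bound and the simplicity of $\path$ takes a small but delicate case analysis on where the endpoints of the missing edge sit on $\path$, and, as Remark~\ref{remark:symmetry} warns, this kind of low-level path surgery is exactly what tends to inflate in the formal development. A related nuisance is that Lemma~\ref{lem:conCompPath} only certifies that $\path$ carries the right \emph{vertices}, so the edge-coverage argument really has to be carried out separately rather than folded into that lemma.
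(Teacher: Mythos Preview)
Your proposal is correct and follows essentially the same skeleton as the paper: establish the degree-$\leq 2$ bound, invoke Lemma~\ref{lem:conCompPath} to obtain a simple path $\path$ on the vertices of $\conComp$, and then argue that any edge of $\compEdges(\conComp)$ missing from $\path$ must close $\path$ into a cycle (the degree-$\geq 3$ alternative being excluded by the bound). The only real divergence is in how the cycle case is dispatched: you use the ``finer fact'' to conclude that the cycle strictly alternates, hence has equally many $\matching$- and $\matching'$-edges, contradicting the cardinality hypothesis; the paper instead argues that strict inequality forces both the first and last edges of the alternating cycle to lie in $\matching'$, so the base vertex has two $\matching'$-edges, contradicting that $\matching'$ is a matching. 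Both are short and standard; your parity argument is arguably the cleaner of the two, while the paper's version avoids needing the ``one edge from each side'' refinement at degree-two vertices.
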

\begin{proof}[Proof sketch]
We perform a case analysis on $\card{\conComp}$.
If $\card{\conComp}\leq 1$, the proof is trivial.
For the case when $\card{\conComp} > 1$, we prove the theorem by contradiction.
We consider the following two cases.
\begin{mycase}[$\path$ is a cycle]
First, from the fact that $\card{\matching'\cap\compEdges(\conComp)} > \card{\matching\cap\compEdges(\conComp)}$ and since $\path$ is an alternating path, we know that $\head(\path)\in\matching'$ and $\last(\path)\in\matching'$, and the first and last edges of $\path$ are both members of $\matching'$.
From the case assumption, we know that $\head(path)=\last(\path)$.
This implies that $2\leq\degree(\matching',\last(\path))$, which is a contradiction since $\matching'$ is a matching.
\end{mycase}
\begin{mycase}[There is $\vertexgen\in{\path\setminus\{\head(\path),\last(\path)\}}$ s.t.\ $3\leq\degree(\matching\oplus\matching',\vertexgen)$] Since $\vertexgen\in\vertices(\matching\oplus\matching')$, then we have that either $2\leq\degree(\matching,\vertexgen)$ or $2\leq\degree(\matching',\vertexgen)$.
In both cases we have a contradiction, since both $\matching$ and $\matching'$ are matchings.
\end{mycase}
\end{proof}

\begin{proof}[Proof sketch of Theorem~\ref{thm:Berge}]
($\Rightarrow$) Assume there is a matching $\matching'$, s.t.\ $\card{\matching'}>\card{\matching}$.
Then there is a connected component $\conComp\in\conComps(\matching\oplus\matching')$ s.t.\ $\card{\matching'\cap\compEdges(\conComp)} > \card{\matching\cap\compEdges(\conComp)}$.
From Lemma~\ref{lem:symmDiffStruct}, we have that all edges in $\conComp$ can be arranged in a path $\path$, s.t.\ the edges in $\path$ alternate in membership of $\matching$ and $\matching'$.
Also, the path will have more edges from $\matching'$ than it does from $\matching$.
That means that $\path$ starts and ends at vertices that are not in $\matching$.
Then $\path$ is an augmenting path w.r.t.\ $\langle\graph,\matching\rangle$.

\noindent($\Leftarrow$) Suppose there is an augmenting path $\path$ w.r.t.\ $\langle\graph,\matching\rangle$.
Then $\path\oplus\matching$ is a matching w.r.t.\ $\graph$ and $\card{\path\oplus\matching}>\card{\matching}$.
We prove that by structural induction on $\path$.
The proof is straightforward for the cases where $\card{\path}\leq 2$.
For $\card{\path} > 2$, we want to show that the theorem holds for a path $\vertexa\vertexb\vertexc\cat\path$.
We apply the induction hypothesis with $\ihvar{\matching}\equiv(\matching\setminus\{\{\vertexb,\vertexc\}\})\cup\{\{\vertexa,\vertexb\}\}$.
Since $\vertexa\vertexb\vertexc\cat\path$ is an alternating path w.r.t.\ $\matching$, then $\vertexc\cat\path$ is also an alternating path w.r.t.\ $\ihvar{\matching}$.
Also recall that $\vertexc\not\in\vertices(\ihvar{\matching})$, by definition of $\ihvar{\matching}$.
Also $\last(\vertexc\cat\path)\not\in\vertices(\ihvar{\matching})$, from the induction premises and by definition of $\ihvar{\matching}$.
Thus, by applying the induction hypothesis, we have that $\ihvar{\matching}\oplus(\vertexc\cat\path)$ is a matching with more edges than $\ihvar{\matching}$.
Note that, since $\ihvar{\matching} = \matching\oplus\{\{\vertexa,\vertexb\},\{\vertexb,\vertexc\}\}$, which follows from the definition of $\ihvar{\matching}$, we have that $\ihvar{\matching}\oplus(\vertexc\cat\path)={\matching}\oplus(\vertexa\vertexb\vertexc\cat\path)$.
This, together with the fact that $\ihvar{\matching}\oplus(\vertexc\cat\path)$ is a matching with more edges than $\ihvar{\matching}$, finish our proof.
\end{proof}

\noindent The functional correctness of Algorithm~\ref{alg:Blossom} is stated in the following corollary.
\begin{mycor}
\label{cor:BlossomWorks}
Assume that $\AugPathAlg(\graph,\matching)$ satisfies Specification~\ref{spec:AugPath}.
Then, for any graph $\graph$, $\BlossomAlg(\graph, \emptyset)$ is a maximum matching w.r.t.\ $\graph$.
\end{mycor}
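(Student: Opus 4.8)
The plan is to prove Corollary~\ref{cor:BlossomWorks} by well-founded induction mirroring the recursion of \isa{find\_max\_matching} (Listing~\ref{isabelle:findMaxMatching}), carrying along the invariant that the current matching $\matching$ is always a finite matching with $\matching\subseteq\graph$. Since $\emptyset$ trivially satisfies this invariant, it then suffices to show two things for an arbitrary $\matching$ meeting the invariant: that the recursion terminates, and that the value it returns is a maximum matching w.r.t.\ $\graph$.

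First I would establish invariant preservation and termination together. When $\AugPathAlg(\graph,\matching)$ returns \isa{Some p}, the soundness half of Specification~\ref{spec:AugPath} (the locale assumption \isa{aug\_path\_search\_sound}) guarantees that $\path$ is an augmenting path w.r.t.\ $\langle\graph,\matching\rangle$. The $(\Leftarrow)$ direction of Theorem~\ref{thm:Berge} — more precisely the construction inside its proof — then yields that $\matching\oplus\path$ is again a matching w.r.t.\ $\graph$ with $\card{\matching\oplus\path} > \card{\matching}$, while finiteness is inherited from $\graph$. Hence the invariant is preserved by the recursive call, and the quantity $\card{\graph}-\card{\matching}$ (equivalently, the number of vertices left unmatched by $\matching$) strictly decreases; being bounded below, this discharges the domain predicate and proves termination.

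Next I would treat the terminating case. The recursion stops precisely when $\AugPathAlg(\graph,\matching)$ returns \isa{None}. Applying the completeness half of Specification~\ref{spec:AugPath} (the assumption \isa{aug\_path\_search\_complete}) to the current $\matching$ — legitimate because the invariant supplies exactly its hypotheses — shows that this can only occur when no augmenting path w.r.t.\ $\langle\graph,\matching\rangle$ exists. Berge's lemma (Theorem~\ref{thm:Berge}) then immediately gives that $\matching$ is maximum w.r.t.\ $\graph$. Combining this with the termination argument, and noting that $\emptyset$ satisfies the invariant, establishes that $\BlossomAlg(\graph,\emptyset)$ is a maximum matching w.r.t.\ $\graph$, which is the corollary.

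I expect the main obstacle to be the termination/well-foundedness step in the formal development: one must pin down a concrete decreasing measure, derive the strict inequality $\card{\matching\oplus\path} > \card{\matching}$ from the structure of an augmenting path (using the alternating-length lemmas of Listing~\ref{isabelle:altListLemmas}), and simultaneously re-establish all three conjuncts of the invariant (matching, subset of $\graph$, finite) for the recursive argument so that the locale assumptions on \isa{aug\_path\_search} remain applicable at the next iteration. The remaining work is a routine case split on the \isa{option} result glued together with Theorem~\ref{thm:Berge}.
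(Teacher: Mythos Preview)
Your proposal is correct and matches the paper's own approach: the paper proves termination via the measure $\card{\graph\setminus\matching}$ (which coincides with your $\card{\graph}-\card{\matching}$ under the invariant $\matching\subseteq\graph$), then uses computation induction together with Berge's lemma, exactly as you outline. One small slip: $\card{\graph}-\card{\matching}$ counts edges, not unmatched vertices, so your parenthetical ``equivalently, the number of vertices left unmatched by $\matching$'' is off, but the measure itself is fine.
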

\begin{proof}[Proof sketch]
The statement follows from Theorem~\ref{thm:Berge} and the fact that $\AugPathAlg(\graph,\matching)$ satisfies Specification~\ref{spec:AugPath}.
\end{proof}
\paragraph{Formalisation}
\noindent The formalised statement of Berge's lemma is shown in Listing~\ref{isabelle:berge}.
\begin{figure}[H]
\begin{lstlisting}[
 language=Isabelle,
 caption={Statement of Berge's lemma in our formalisation.},
 label={isabelle:berge},
 captionpos=b
 ]
theorem Berge:
  assumes matching: "finite M" "matching M" "M ⊆ G"
  shows
   "(∃p. matching_augmenting_path M p ∧ path G p ∧ distinct p) 
         =  (∃M' ⊆ G. matching M' ∧ card M < card M')"
\end{lstlisting}
\end{figure}






\begin{figure}[H]
\begin{lstlisting}[
 language=Isabelle,
 caption={The termination theorem of the top-loop of the algorithm. An implicit assumption here is that \isa{aug\_path\_search} conforms to the assumptions in the locale header.},
 label={isabelle:findMaxMatchingTerminates},
 captionpos=b
 ]
lemma find_max_matching_dom:
  assumes "matching M"" M ⊆ G"" finite M"
  shows "find_max_matching_dom M"
\end{lstlisting}
\end{figure}

\begin{figure}[H]
\begin{lstlisting}[
 language=Isabelle,
 caption={The functional correctness theorem of the top-loop of the algorithm. Similarly to the termination theorem, we assume \isa{aug\_path\_search} conforms to the assumptions in the locale header.},
 label={isabelle:findMaxMatchingWorks},
 captionpos=b
 ]
lemma find_max_matching_works:
  shows "(find_max_matching {}) ⊆ G"
    "matching (find_max_matching {})"
    "∀M. matching M ∧ M ⊆ G ⟹ 
           card M \<le> card (find_max_matching {})"
\end{lstlisting}
\end{figure}

The formalised functional correctness theorem is shown in Listing~\ref{isabelle:findMaxMatchingWorks}.
The theorem has three conclusions: the algorithm returns a subset of the graph, that subset is a matching, and the cardinality of any other matching is bounded by the size of the returned matching.
Note that since it is proved within the locale \isa{find\_max\_match}, it has an implicit assumption that \isa{find\_aug\_path} satisfies the specification \isa{find\_aug\_path\_spec}.
Also note that the algorithm is initialised with the empty matching.

The formal proof of Corollary~\ref{cor:BlossomWorks} is done by computation induction using the induction principle that results from the termination proof of that recursive function.
We use this methodology that is based on Isabelle/HOL's function package~\cite{functionPackageIsabelle} for modelling and reasoning about all the major algorithms that we consider in this paper: we model them as recursive functions and prove facts about them using computation induction.
For such recursive functions, the induction principle as well as the defining equations are conditional on the input being one on which the function is well-defined (e.g.\ inputs for which the predicate \isa{find\_max\_matching\_dom} in Listing~\ref{isabelle:findMaxMatchingTerminates} applies).

The termination proof of the algorithm is based on showing that $\card{\graph\setminus\matching}$ decreases with every recursive call.
The termination theorem is shown in Listing~\ref{isabelle:findMaxMatchingTerminates}, where it is shown that the algorithm terminates, if it starts with a finite matching.
\begin{myremark}
Our proof of Berge's lemma is similar to the exposition in Bondy and Morty's textbook~\cite[Chapter 16]{bondyGraphTheory}, which is a standard textbook on graph theory.
However, there is a significant difference when it comes to the proof Lemma~\ref{lem:symmDiffStruct}.
In the formal proof we have the extra task of showing that the case analysis performed within that lemma is exhaustive.
In all informal treatments, on the other hand, that is not considered.
Informally, it is obvious that if a set of edges cannot be arranged into a path, it is either a cycle or there is a vertex on which two edges are incident.
In the formalisation showing that exhaustiveness took a majority of the effort of proving Lemma~\ref{lem:symmDiffStruct}.
This is an example of a theme which is recurring in formalising reasoning that appeals to graphical or geometric intuition, and was documented by a number of authors, including us~\cite{RankingIsabelle,ScalingIsabelle,GreensIsabelle,hpcpl,poincarebendixson}.
\end{myremark}

\section{Handling Odd Cycles}
\label{sec:Contraction}

\newcommand{\refine}{\textsf{\upshape refine}}
\begin{algorithm}[H]
  \caption{$\AugPathAlg(\graph, \matching)$}\label{alg:findAugPath}
  \label{alg:AugPathAlg}
  \begin{algorithmic}[1]
    \IF{ $\BlossomOrAugPath(\graph, \matching)$ is an augmenting path w.r.t. $\langle\graph,\matching\rangle$}
      \RETURN {$\BlossomOrAugPath(\graph, \matching)$} \label{AugPathAlg:line:augpath}
    \ELSIF{$\BlossomOrAugPath(\graph, \matching)$ is a blossom $\langle \path, \cycle \rangle$ w.r.t.\ $\langle\graph,\matching\rangle$}
      \RETURN {$\refine(\AugPathAlg(\graph/P_{{\complement{\cycle}}}, \matching/P_{\complement{\cycle}}))$} \label{AugPathAlg:line:blossom}
    \ELSE
      \RETURN {no augmenting path found} \label{AugPathAlg:line:none}
    \ENDIF
  \end{algorithmic}
\end{algorithm}

In this step we refine $\AugPathAlg$, which is the function that computes augmenting paths, into a more detailed description.
In our exposition, $\AugPathAlg$, refined as Algorithm~\ref{alg:AugPathAlg}, is a function that handles odd cycles found in the graph by removing them, which is the main insight underlying Edmonds' blossom shrinking algorithm.
It is again parametrically defined, where it depends on the function $\BlossomOrAugPath$.
$\AugPathAlg$ either \begin{enumerate*}[label=(\roman*)]\item returns an augmenting path if $\BlossomOrAugPath$ finds one (Line~\ref{AugPathAlg:line:augpath}), \item  removes (more on that later) an odd cycle from the graph, by contracting it (Line~\ref{AugPathAlg:line:blossom}; notation defined below) and then recursively continues searching for augmenting paths, or \item reports that no augmenting paths exists, if $\BlossomOrAugPath$ finds no odd cycles or augmenting paths (Line~\ref{AugPathAlg:line:none}).
\end{enumerate*}

An important element here is how odd cycles are manipulated.
Odd cycles found by $\BlossomOrAugPath$ are returned in the form of \emph{blossoms}, which is a central concept in Edmonds' algorithm.
A blossom is an alternating path starting with an unmatched vertex that is constituted of two parts: \begin{enumerate*}[label=(\roman*)]\item the \emph{stem}, which is a simple alternating path and, \item an odd cycle, which is a simple (except for the beginning and end) alternating path with an odd number of edges, starting and ending at the same vertex \end{enumerate*}.
For instance, vertices $\vertexa,\vertexb,\vertexc,\vertexd,\vertexe,$ and $\vertexc$ in Figure~\ref{fig:augpathfig2} constitute a blossom.

\begin{figure}[H]
\begin{lstlisting}[
 language=Isabelle,
 caption={The definition of a blossom. Note: \isa{edges\_of\_path} is a function which, given a path, returns the list of edges constituting the path},
 label={isabelle:matchBlossom},
 captionpos=b
 ]
definition odd_cycle where
  "odd_cycle p ≡ 
     (length p \<ge> 3) ∧ odd (length (edges_of_path p)) ∧
      hd p = last p"

definition match_blossom where
  "match_blossom M stem C ≡
      alt_path M (stem @ C) ∧ distinct (stem @ (butlast C)) ∧
      odd_cycle C ∧ hd (stem @ C) ∉ Vs M ∧
      even (length (edges_of_path (stem @ [hd C])))"

abbreviation "blossom G M stem C ≡
   path G (stem @ C) ∧ match_blossom M stem C"
\end{lstlisting}
\end{figure}

In the rest of this section we first introduce further notation; then we prove that blossom contraction preserves augmenting path existence (\Cref{subsec:contractCorrect}); and then we present an algorithm that contracts blossoms and refines augmenting paths found in a contracted graph (\Cref{subsec:contractAlgo}).

\noindent\textit{Further Notation.}
A pair $\langle \vertexa\vertexb\dots\vertexgen_{i-1}, \vertexgen_i\vertexgen_{i+1}\dots\vertexgen_n\rangle$ is a \emph{blossom} w.r.t.\ a matching $\matching$ iff \begin{enumerate*}\item $\vertexgen_i\vertexgen_{i+1}\dots\vertexgen_n$ is an odd cycle, \item $\vertexa\vertexb\dots\vertexgen_n$ is an alternating path w.r.t.\ $\matching$, \item $\vertexa\not\in\vertices(\matching)$, and \item $i$ is an odd number.\end{enumerate*}
We refer to $\vertexa\vertexb\dots\vertexgen_i$ as the \emph{stem} of the blossom and $\vertexgen_i$ as the \emph{base} of the blossom.
In many situations we have a pair {$\langle \vertexa\vertexb\dots\vertexgen_{i-1}, \vertexgen_i\vertexgen_{i+1}\dots\vertexgen_n\rangle$} which is a blossom w.r.t.\ a matching $\matching$ where {$\vertexa\vertexb\dots\vertexgen_{i-1} \vertexgen_i\vertexgen_{i+1}\dots\vertexgen_{n-1}$} is also a simple path w.r.t.\ a graph $\graph$ and {$\{\vertexgen_{n-1},\vertexgen_n\}\in\graph$}.
In this case we call it a blossom w.r.t.\ $\langle\graph,\matching\rangle$.
The formalisation of the notion of a blossom is shown in Listing~\ref{isabelle:matchBlossom}.
Furthermore, for a function $f$ and a set $s$, let $\Img{f}{s}$ denote the image of $f$ on $s$.
For a graph $\graph$, and a function $f$, the \emph{quotient} $\graph/f$ is the set $\{\Img{f}{e}\mid e\in\graph \wedge \cardinality{e} = \cardinality{\Img{f}{e}}\}$.
Let, for a set of vertices $s$, the function $P_s$ be defined as $P_s(\lvertexgen)\equiv \ifnew\; \lvertexgen \in s\; \thennew\; \vertexgen\; \elsenew\; \rvertexgen$, where $\rvertexgen\not\in s$.
In many cases, this function will be considered for the complement of a set, denoted by the superscript $\oldcomplement$.
\todo{Remove quotG if formalisation is in graph abs}

\subsection{Blossom Contraction: A Detailed Proof}
\label{subsec:contractCorrect}

We prove that contracting (i.e. shrinking) the odd cycle of a blossom preserves the existence of an augmenting path, which is the second core result needed to prove the validity of the blossom-shrinking algorithm, after Berge's lemma.
This result is the core idea behind this algorithm and it is why the algorithm can compute maximum cardinality matchings in polynomial time.

\begin{mythm}
\label{thm:quotient}
Consider a graph $\graph$, a matching $\matching$ w.r.t.\ $\graph$, a blossom $\langle \stem, \cycle\rangle$ w.r.t.\ $\langle\graph,\matching\rangle$, and a vertex $\rvertexgen\not\in\vertices(\graph)$.
Then we have an augmenting path w.r.t.\ $\langle\graph,\matching\rangle$ iff there is an augmenting path w.r.t.\ $\langle\graph/P_{\complement{\cycle}},\matching/P_{\complement{\cycle}}\rangle$.
\end{mythm}
\begin{proof}[Proof sketch] We prove the directions of the bi-implication separately.
\begin{myright}
Let $\path$ be the augmenting path w.r.t.\ $\langle\graph,\matching\rangle$.
We prove this direction by considering two main cases.
\begin{mycase}[$\stem=\emptyset$ (i.e.\ $\head(\cycle)\not\in\matching$)]
\label{case:emptyStem}
We have six further cases, which are clearly exhaustive.
\begin{mysubcase}[$\card{\path\cap\cycle}=0$]
\label{case:emptyInter}
This case is trivial, since $\path$ would also be an augmenting path w.r.t.\ $\langle\graph/P_{\complement{\cycle}},\matching/P_{\complement{\cycle}}\rangle$.
\end{mysubcase}
\begin{mysubcase}[$\card{\path\cap\cycle}=1$]
\label{case:singleInter}
In this case, we have three further cases
, each representing a possible position of the odd cycle w.r.t.\ the augmenting path.
\begin{mysubsubcase}[$\head(\cycle)\in\path\setminus\{\head(\path),\last(\path)\}$]
In this case, we have a contradiction because the base of the blossom will be in the matching, contradicting the assumption of Case~\ref{case:emptyStem}.
\end{mysubsubcase}
\begin{mysubsubcase}[$\head(\cycle)=\last(\path)$]
\label{case:baseEnd}
Let $\path'$ be $\path$ without $\last(\path)$, i.e.\ $\path = \path'\cat\last(\path)$.
In this case, the cycle is contracted to $u$.
Also $u\not\in\vertices(\matching/P_{\complement{\cycle}})$. 
Thus $\path'\cat u$ is an augmenting path w.r.t.\ $\langle\graph/P_{\complement{\cycle}},\matching/P_{\complement{\cycle}}\rangle$.
\end{mysubsubcase}
\begin{mysubsubcase}[$\head(\cycle)=\head(\path)$]
\label{case:baseBegin}
This case is symmetric with Case~\ref{case:baseEnd}.
\end{mysubsubcase}
\begin{mysubsubcase}[$\head(\path)\in(\cycle\setminus\{\head(\cycle)\})$]
\label{subsubcase:augPathHeadInCyc}
This leads to a contradiction because $\head(\path)\not\in\matching$, from the assumption that $\path$ is an augmenting path w.r.t.\ $\langle\graph,\matching\rangle$, and every vertex in $\cycle\setminus\head(\cycle)$ is in $\vertices(\matching)$.
\end{mysubsubcase}
\begin{mysubsubcase}[$\last(\path)\in(\cycle\setminus\{\head(\cycle)\})$]
\label{subsubcase:augPathLastInCyc}
This case is symmetric to Case~\ref{subsubcase:augPathHeadInCyc}.
\end{mysubsubcase}
\begin{mysubsubcase}[$(\path\setminus\{\head(\path),\last(\path)\})\cap(\cycle\setminus\head(\cycle)\neq\emptyset$]
\label{subsubcase:augPathMidInCyc}
Note that, from the definition of a blossom, every vertex in the cycle, except for the base, is in $\matching$.
From the assumption of Case~\ref{case:singleInter}, we have that $2 \leq \degree(\matching,\vertexgen)$, for some $\vertexgen\in\cycle\setminus\{\head(\cycle)\}$.
This is a contradiction because $\matching$ is a matching.
\end{mysubsubcase}
\end{mysubcase}
\begin{mysubcase}[$1 < \card{\path\cap\cycle}$]
\label{case:moreInter}
From the case assumption, there must be $\path_1$, $\path_2$, and $\path_3$ s.t.\ $\path=\path_1\cat\path_2\cat\path_3$, $\path_1\cap\cycle=\emptyset$, $\path_3\cap\cycle=\emptyset$, and $\path_2$ is non-empty.
This case can be divided into two further cases.
\begin{mysubsubcase}[$\path_3$ is non-empty and $\{\last(\path_2),\head(\path_3)\}\not\in\matching$]
\label{subsubcase:baseMiddleRight}
In this case, the cycle is contracted to $u$ and $\path_3$ is the same after contraction.
Also $u\not\in\vertices(\matching/P_{\complement{\cycle}})$.
Thus $u\cat\path_3$ is an augmenting path w.r.t.\ $\langle\graph/P_{\complement{\cycle}},\matching/P_{\complement{\cycle}}\rangle$.
\end{mysubsubcase}
\begin{mysubsubcase}[$\path_1$ is non-empty and $\{\last(\path_1),\head(\path_2)\}\not\in\matching$]
\label{subsubcase:baseMiddleLeft}
This case is symmetric with Case~\ref{subsubcase:baseMiddleRight}.
\end{mysubsubcase}
\end{mysubcase}
\end{mycase}
\begin{mycase}[$\stem\neq\emptylist$] From the fact that $\path$ is an augmenting path w.r.t.\ $\langle\graph,\matching\rangle$,
we have that $\matching\oplus\path$ is a matching and $\card{\matching}<\card{\matching\oplus\path}$.
We also have that $\head(\stem)\not\in\matching$ and that $\last(\stem)\in\matching$, from the definition of  blossom.
Thus $\matching\oplus\stem$ is a matching too, but $\card{\matching\oplus\stem}=\card{\matching}$, and thus, from Berge's lemma, there must be $\path'$ that is an augmenting path w.r.t.\ $\langle\graph,\matching\oplus\stem\rangle$.
Accordingly, we can apply Case~\ref{case:emptyStem} to $\path'$ and to the matching $\matching\oplus\stem$.
\end{mycase}
\end{myright}
\begin{myleft}
Let $\path$ be the augmenting path w.r.t.\ $\langle\graph/P_{\complement{\cycle}},\matching/P_{\complement{\cycle}}\rangle$. We have two cases.
\begin{mycase}[$\rvertexgen\not\in\path$] In this case we have that $\path$ is an augmenting path w.r.t.\ $\langle\graph,\matching\rangle$, which finishes our proof.
\end{mycase}
\begin{mycase}[$\rvertexgen\in\path$]
\label{case:uInPath}
From the assumption of Case~\ref{case:uInPath}, there are paths $\path_1$ and $\path_2$, and a vertex $\rvertexgen$, s.t.\ $\path=\path_1\cat \rvertexgen\cat\path_2$.
Since $\path$ is an augmenting path w.r.t.\ $\langle\graph/P_{\complement{\cycle}},\matching/P_{\complement{\cycle}}\rangle$, then exactly one of two edges incident to $\rvertexgen$ belongs to $\matching/P_{\complement{\cycle}}$, which gives rise to the following two cases.
\begin{mysubcase}[$\{\last(\path_1),u\}\in\matching/P_{\complement{\cycle}}$]
\label{case:lastPaMatched}
From the case assumption (namely, $\{\last(\path_1),u\}\in\matching/P_{\complement{\cycle}}$) and from the definition of the quotient operation on graphs, we know that there is some vertex $\vertexa\in\cycle$ s.t.\ $\{\last(\path_1),\vertexa\}\in\matching$. 
Since $\last(\path_1)\not\in\cycle$ and $\{\last(\path_1),\vertexa\}\in\matching$, then $\vertexa=\head(\cycle)$.
We also know that since $\{u,\head(\path_2)\}\in\graph/P_{\complement{\cycle}}$, there must be a vertex $\vertexb$ s.t.\ $\vertexb\in\cycle$ and $\{\vertexb,\head(\path_2)\}\in\graph$.
This means there are $\cycle_1$ and $\cycle_2$ s.t.\ $\cycle=\cycle_1\cat \vertexb\cat\cycle_2$.
We have two cases.
\begin{mysubsubcase}[$\{\last(\cycle_1),\vertexb\}\in\matching$]
\label{case:lastPainQuotrevCb}
In this case, we have that $\path_1\cat\reverse(\cycle_2)\cat\path_2$ is an augmenting path w.r.t.\ $\langle\graph,\matching\rangle$, which finishes our proof.
\end{mysubsubcase}
\begin{mysubsubcase}[$\{\vertexb,\head(\cycle_2)\}\in\matching$]
\label{case:lastPainQuotCa}
In this case, we have that $\path_1\cat\cycle_1\cat\path_2$ is an augmenting path w.r.t.\ $\langle\graph,\matching\rangle$, which finishes our proof.
\end{mysubsubcase}
\end{mysubcase}
\begin{mysubcase}[$\{\head(\path_2),\head(\cycle)\}\in\matching/P_{\complement{\cycle}}$]
\label{case:hdPbMatched}
This is symmetric with case Case~\ref{case:lastPaMatched}. 
\end{mysubcase}
\end{mycase}
\end{myleft}
\end{proof}

\paragraph{Formalisation}

To formalise Theorem~\ref{thm:quotient}, we first declare the locale \isa{quot} shown in Listing~\ref{isabelle:quot}.
This locale makes clear the assumptions under which odd cycles can be contracted: the set of vertices in the odd cycle (\isa{Vs G - s}) has to be non-empty, and the odd cycle is contracted to a vertex that does not occur in the graph outside of the odd cycle.
This allows the odd cycle to be contracted to a representative vertex from the odd cycle, which would allow for one of the more efficient implementations of the algorithm.
The formalisation of our notion of a quotient is shown in Listing~\ref{isabelle:quotG}.
Note that the quotient graph has the looping edge \isa{{u}} removed from it, which could be the image of an edge \isa{e} $\subseteq$ \isa{s}.

\begin{figure}[H]
\begin{lstlisting}[
 language=Isabelle,
 caption={A locale fixing assumptions on the representative vertex \isa{u} for contraction.},
 label={isabelle:quot},
 captionpos=b
 ]
locale pre_quot = choose sel + graph_abs E
  for sel::"'a set ⇒ 'a" and E::"'a set set"

locale quot = pre_quot sel E for sel E +
  fixes s::"'a set" and u::'a
  assumes good_quot_map: "u ∉ s" "s \<subset> Vs E"

abbreviation "P v ≡ (if v ∈ s then v else u)"
\end{lstlisting}
\end{figure}
\todo{$E\rightarrow\graph$}

\begin{figure}[H]
\begin{lstlisting}[
 language=Isabelle,
 caption={The formalisation of a quotient of a graph. Note: \isa{f ` s} is the image of function \isa{f} on a set \isa{s}.},
 label={isabelle:quotG},
 captionpos=b
 ]
definition quot_graph where
  "quot_graph P G = {e'. ∃e∈G. e' = P ` e}"

abbreviation "quotG G ≡ (quot_graph P G) - {{u}}"
\end{lstlisting}
\end{figure}

Now, recall the function $\refine$ that refines a quotient augmenting path to a concrete one.
Its formalisation is shown in Listing~\ref{isabelle:refine}.
The function $\refine$ takes an augmenting path $p$ in the quotient graph and returns it unchanged if it does not contain the vertex $u$ and deletes $u$ and splits $p$ into two paths $p_1$ and $p_2$ otherwise.
In the latter case, $p_1$ and $p_2$ are passed to \isa{replace\_cycle}.
This function first defines two auxiliary paths \isa{stem2p2} and \isa{p12stem} using the function \isa{stem{\isadigit{2}}vert\_path}.
\isa{stem{\isadigit{2}}vert\_path} with last argument \isa{hd p2} uses \isa{choose\_con\_vert} to find a neighbour of \isa{hd p2} on the cycle $C$.
It splits the cycle at this neighbour and then returns the path leading to the base of the blossom starting with a matching edge.
Finally, \isa{replace\_cycle} concatenates together $p_1$, $p_2$ and either \isa{stem2p2} and \isa{p12stem} to obtain an augmenting path in the concrete graph.
This function has many possible execution paths, each equivalent to a case in the analysis in the backward direction of the proof of Theorem~\ref{thm:quotient}.



\begin{figure}[H]
\begin{lstlisting}[
 language=Isabelle,
 caption={The formalisation of $\refine$.},
 label={isabelle:refine},
 captionpos=b
 ]
fun find_pfx::"('b ⇒ bool) ⇒ 'b list ⇒ 'b list" where
 "find_pfx Q [] = []" |
 "find_pfx Q (h # l) = 
   (if (Q h) then [h] else h # (find_pfx Q l))"

definition stem2vert_path where
"stem2vert_path C M v ≡
  let
    find_pfx' = 
      (λC. find_pfx ((=) (choose_con_vert (set C) v)) C)
  in
    if (last (edges_of_path (find_pfx' C)) ∈ M) then
      (find_pfx' C)
    else
      (find_pfx' (rev C))"

definition replace_cycle where
  "replace_cycle C M p1 p2 ≡
   let stem2p2 = stem2vert_path C M (hd p2);
       p12stem = stem2vert_path C M (last p1)
   in
   if p1 = [] then
     stem2p2 @ p2
   else
     (if p2 = [] then 
        p12stem @ (rev p1)
      else
       (if {u, hd p2} ∉ quotG M then
         p1 @ stem2p2 @ p2
       else
         (rev p2) @ p12stem @ (rev p1)))"

definition refine where
  "refine C M p ≡
   if (u ∈ set p) then
     (replace_cycle C M (fst (pref_suf [] u p)) (snd (pref_suf [] u p)))
   else p"
\end{lstlisting}
\end{figure}

The formal statement of Theorem~\ref{thm:quotient} is shown in Listing~\ref{isabelle:quotient}.
Similar to what we mentioned in Remark~\ref{remark:symmetry}, a lot of the effort to formally prove this theorem is dominated by managing symmetries.
One way to handle these symmetries is to devise lemmas capturing them, like lemma \isa{path\_u\_p3} shown in Listing~\ref{isabelle:symmetryBreak}.
That lemma covers a major part of proving the four cases arising in Cases~\ref{case:lastPaMatched}~and~\ref{case:hdPbMatched} in the proof of Theorem~\ref{thm:quotient}.
However, despite devising many of these lemmas, the formal proof of this theorem was around 5K lines of proof script.
This is because, although lemmas like that save a lot of repeated reasoning, it can still be cumbersome to prove that their preconditions hold for all cases (e.g.\ when \isa{C} is assigned with $\cycle_1$, $\cycle_2$, $\reverse(\cycle_1)$, and $\reverse(\cycle_2)$ from Cases~\ref{case:lastPaMatched}~and~\ref{case:hdPbMatched}).

\begin{figure}[H]
\begin{lstlisting}[
 language=Isabelle,
 caption={The formal statements of the two directions of Theorem~\ref{thm:quotient}.},
 label={isabelle:quotient},
 captionpos=b
 ]
theorem refine_works:
  assumes
    cycle: "blossom E M stem C" and
    quot_aug_path: 
      "graph_augmenting_path (quotG E) (quotG M) p'" and 
    matching: "matching M" "M ⊆ E" and
    quot: "s = (Vs E) - set C"
  shows "graph_augmenting_path E M (refine C M p')"

theorem aug_path_works_in_contraction:
  assumes match_blossom: "blossom E M stem C" and
    aug_path: "graph_augmenting_path E M p" and 
    matching: "matching M" "M ⊆ E" "finite M" and
    quot: "s = (Vs E) - set C" "u ∉  Vs E"
  shows "∃p'. graph_augmenting_path (quotG E) (quotG M) p'"
\end{lstlisting}
\end{figure}
\begin{figure}[H]
\begin{lstlisting}[
 language=Isabelle,
 caption={A lemma devised to prove symmetric cases in the proof of \isa{aug\_path\_works\_} \isa{in\_contraction}.},
 label={isabelle:symmetryBreak},
 captionpos=b
 ]
  have path_u_p3: "path (quotG E) (u # p3)"
    if wx: "p = p1 @ x # p3" "x ∈ set C"
           "∀x∈set p3. x ∉ set C" and
      aug_path: "path E p" and
      p3_subset_s: "set p3 ⊆ s" and
      p3_nempty: "p3 ≠ []"
    for p1 x p3 p
\end{lstlisting}
\end{figure}

\subsection{The Algorithm}
\label{subsec:contractAlgo}

Finally, we formalise Algorithm~\ref{alg:findAugPath} parametrically as shown in Listings~\ref{isabelle:findAugPathLocale} and~\ref{isabelle:findAugPath}.
We model the algorithm in Isabelle/HOL using a locale which parameterises the algorithm over the function \isa{blos\_search} which performs the search for blossoms or augmenting paths.
The function either returns an augmenting path or a blossom, which is formalised as the algebraic data type \isa{match\_blossom\_res}.
In addition to being parameterised over the function \isa{blos\_search}, it is also parameterised over the two functions \isa{create\_vert} and \isa{sel}, shown in Listing~\ref{isabelle:nonDeterminism}.
Note that we instantiate both arguments \isa{P} and \isa{s} of the locale \isa{quot} to obtain the quotienting function \isa{quotG} and the function for refining augmenting paths \isa{refine}.

\begin{figure}[H]
\begin{lstlisting}[
 language=Isabelle,
 caption={The locale parameterising the formalisation of the algorithm \isa{find\_aug\_path}. Note: \isa{@} denotes list concatenation.},
 label={isabelle:findAugPathLocale},
 captionpos=b
 ]
datatype 'a match_blossom_res =
  Path "'a list" 
  | Blossom (stem_vs: "'a list") (cycle_vs: "'a list")

locale find_aug_path = choose + create_vert +
  fixes blos_search::"'a set set ⇒ 'a set set ⇒  ('a match_blossom_res) option"
  assumes
    bloss_algo_complete: 
    "⟦graph_invar G; matching M; M ⊆ G;
        (∃p. graph_augmenting_path G M p)⟧
     ⟹ (∃blos_comp. blos_search G M = Some blos_comp)" and
    bloss_algo_sound:
    "⟦graph_invar G; matching M; M ⊆ G; blos_search G M = Some (Path p)⟧ ⟹
     graph_augmenting_path G M p"
    "⟦graph_invar G; matching M; M ⊆ G; blos_search G M = Some (Blossom stem C)⟧ ⟹
     blossom G M stem C"
\end{lstlisting}
\end{figure}

\begin{figure}[H]
\begin{lstlisting}[
 language=Isabelle,
 caption={Formalisation of Algorithm~\ref{alg:findAugPath}.},
 label={isabelle:findAugPath},
 captionpos=b
 ]
function find_aug_path where
  "find_aug_path G M = 
     (case blos_search G M of Some match_blossom_res ⇒
        case match_blossom_res of Path p ⇒ Some p
        | Blossom stem cyc ⇒
            let
              u = create_vert (Vs G);
              s = Vs G - (set cyc);
              quotG = quot.quotG s (create_vert (Vs G));
              refine = 
                quot.refine sel G s (create_vert (Vs G)) cyc M
            in (case find_aug_path (quotG G) (quotG M) of Some p' ⇒ Some (refine p')
                | _ ⇒ None)
      | _ ⇒ None)"
\end{lstlisting}
\end{figure}

\paragraph{Correctness} To prove that Algorithm~\ref{alg:findAugPath} is correct, we first precisely specify the properties expected of $\BlossomOrAugPath$, on which Algorithm~\ref{alg:findAugPath} is parameterised.
\begin{myspec}
\label{spec:BlossomOrAugPath}
For a graph $\graph$ and a matching $\matching$ w.r.t.\ $\graph$, there is a blossom or an augmenting path w.r.t.\ $\langle\graph,\matching\rangle$ iff $\BlossomOrAugPath(\graph, \matching)$ is a blossom or an augmenting path w.r.t.\ $\langle\graph,\matching\rangle$.
\end{myspec}
\noindent The functional correctness Algorithm~\ref{alg:findAugPath} is stated as follows.
\begin{mycor}
\label{cor:AugPathAlgWorks}
Assume $\BlossomOrAugPath(\graph, \matching)$ satisfies Specification~\ref{spec:BlossomOrAugPath}.
Then $\AugPathAlg$ satisfies Specification~\ref{spec:AugPath}.
\end{mycor}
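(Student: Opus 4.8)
The plan is to prove Corollary~\ref{cor:AugPathAlgWorks} by computation induction on the recursive function \isa{find\_aug\_path} of Listing~\ref{isabelle:findAugPath} that models $\AugPathAlg$, using $\cardinality{\vertices(\graph)}$ as the termination measure. First I would establish that the domain predicate of \isa{find\_aug\_path} holds on every well-formed input: the sole recursive call happens when $\BlossomOrAugPath(\graph,\matching)$ is a blossom $\langle\stem,\cycle\rangle$, and since $\BlossomOrAugPath$ is sound (Specification~\ref{spec:BlossomOrAugPath}) this is a genuine blossom w.r.t.\ $\langle\graph,\matching\rangle$, so $\cycle$ is an odd cycle with at least three distinct vertices, all of which the quotient map $P$ collapses to the single vertex $u=\isa{create\_vert}(\vertices(\graph))$. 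Hence $\cardinality{\vertices(\graph/P)}\le\cardinality{\vertices(\graph)}-2<\cardinality{\vertices(\graph)}$, the measure strictly decreases, and the defining equation and induction rule of \isa{find\_aug\_path} become available (exactly as with \isa{find\_max\_matching} in Section~\ref{sec:toploop}).

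With termination in hand, I would prove the two halves of Specification~\ref{spec:AugPath} simultaneously by this induction: \emph{soundness}, that if $\AugPathAlg(\graph,\matching)$ returns \isa{Some}~$p$ then $p$ is an augmenting path w.r.t.\ $\langle\graph,\matching\rangle$, and \emph{completeness}, that if $\graph$ has an augmenting path w.r.t.\ $\langle\graph,\matching\rangle$ then $\AugPathAlg(\graph,\matching)$ returns some \isa{Some}~$p$. Unfolding the defining equation splits on the value of $\BlossomOrAugPath(\graph,\matching)$. If it is an augmenting path $p$, soundness is immediate from soundness of $\BlossomOrAugPath$ and completeness is trivial; if it reports failure, soundness is vacuous and completeness follows because a graph with an augmenting path a fortiori has ``a blossom or an augmenting path'', contradicting completeness of $\BlossomOrAugPath$. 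The interesting case is a blossom $\langle\stem,\cycle\rangle$, where $\AugPathAlg(\graph,\matching)=\refine(\AugPathAlg(\graph/P,\matching/P))$: for soundness, the induction hypothesis on the smaller instance $\langle\graph/P,\matching/P\rangle$ gives that the recursive result $p'$ is an augmenting path w.r.t.\ $\langle\graph/P,\matching/P\rangle$, and the $(\Leftarrow)$ direction of Theorem~\ref{thm:quotient} then shows $\refine(p')$ is an augmenting path w.r.t.\ $\langle\graph,\matching\rangle$; for completeness, the $(\Rightarrow)$ direction of Theorem~\ref{thm:quotient} shows $\graph/P$ has an augmenting path w.r.t.\ $\langle\graph/P,\matching/P\rangle$, so by the induction hypothesis the recursive call returns \isa{Some}~$p'$ and hence $\AugPathAlg(\graph,\matching)=\isa{Some}~\refine(p')$. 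Combining the two halves yields the biconditional of Specification~\ref{spec:AugPath}; structurally this mirrors the proof of Corollary~\ref{cor:BlossomWorks}, with Theorem~\ref{thm:quotient} in the role played there by Berge's lemma (Theorem~\ref{thm:Berge}).

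The hard part will be discharging, in the blossom case, the side conditions needed to apply Theorem~\ref{thm:quotient} and to re-invoke the induction hypothesis on $\langle\graph/P,\matching/P\rangle$. One must check that $u=\isa{create\_vert}(\vertices(\graph))\notin\vertices(\graph)$ (from the \isa{create\_vert} locale), that the returned cycle meets the requirements of the \isa{quot} locale of Listing~\ref{isabelle:quot} — in particular $s=\vertices(\graph)\setminus\cycle$ with $s\subset\vertices(\graph)$ and $u\notin s$, which needs $\cycle$ to be a nonempty proper subset of the vertices — that $\matching/P$ is again a finite matching with $\matching/P\subseteq\graph/P$, and that $\graph/P$ again satisfies \isa{graph\_invar}. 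The last two points are the delicate ones, since quotienting could a priori merge the endpoints of an edge (breaking the doubleton invariant) or place two matching edges on $u$; ruling this out requires the blossom structure, namely that every cycle vertex other than the base is matched along an edge inside the cycle and that at most one matching edge leaves the base. Once these quotient well-formedness facts are in place, the remaining work is routine case analysis following Algorithm~\ref{alg:AugPathAlg} together with the \isa{match\_blossom\_res}/\isa{option} bookkeeping.
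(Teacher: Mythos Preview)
Your proposal is correct and follows exactly the paper's approach: the paper's proof sketch reads simply ``From Theorem~\ref{thm:quotient} and by computation induction,'' and you have spelled out precisely that argument, including the termination measure, the three-way case split on $\BlossomOrAugPath$'s result, and the use of the two directions of Theorem~\ref{thm:quotient} in the blossom case. Your discussion of the side conditions (well-formedness of the quotient graph and matching, the \isa{quot} locale requirements) is more detailed than anything the paper makes explicit but is exactly what is needed to carry the formal proof through.
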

\begin{proof}[Proof sketch] From Theorem~\ref{thm:quotient} and by computation induction.
\end{proof}


\begin{myremark}
Theorem~\ref{thm:quotient} is used in most expositions of the blossom-shrinking algorithm.
In our proof for the forward direction (if an augmenting path exists w.r.t.\ $\langle\graph,\matching\rangle$, then there is an augmenting path w.r.t.\ $\langle\graph/P,\matching/P\rangle$, i.e.\ w.r.t.\ the quotients), we follow a standard textbook approach (e.g.\ Lemma 10.25 in Korte and Vygen's book~\cite{KorteVygenOptimisation}).
Our proof is, nonetheless, the only one we are aware of that explicitly pins down the cases, at least among standard textbooks~\cite{LEDAbook,schrijverBook,KorteVygenOptimisation} and lecture notes available online.
One particular approach that is worth mentioning is that taken in LEDA~\cite{LEDAbook} by Mehlhorn and N\"aher.
In their approach, they skip showing this direction completely, due to the complexity of the case analysis and the fact that it was not fully performed in other expositions.
Instead, they replaced it with a claim, presumed to be much easier to prove, that, if we can construct an odd set cover for $\graph/P$, i.e.\ a certificate can be constructed showing that there is not an augmenting path w.r.t.\ $\langle\graph/P,\matching/P\rangle$, then there is a certificate showing that there is not an augmenting path w.r.t.\ $\langle\graph,\matching\rangle$.
Nonetheless, it turned out, when we tried to formalise that approach that we need a case analysis that is more complex than the one we perform in the proof of Theorem~\ref{thm:quotient}.
\end{myremark}

\begin{myremark}
In our proof for the backward direction (an augmenting path w.r.t. the quotients can be lifted to an augmenting path w.r.t.\ the original graph) we define an (almost) executable function $\refine$ that does the lifting.
We took the choice of explicitly defining that function with using it in an implementation of the algorithm in mind.
This is similar to the approach used in the informal proof of soundness of the variant of the blossom-shrinking algorithm used in LEDA~\cite{LEDAbook}.
\end{myremark}

\section{Computing Blossoms and Augmenting Paths}
\label{sec:BlossomOrAugPath}

\newcommand{\computaltpath}{\textsf{\upshape compute\_alt\_path}}
\renewcommand{\choice}{\textsf{\upshape choose}\;}
\newcommand{\compref}{\textsf{\upshape longest\_disj\_pref}}
\newcommand{\drop}{\textsf{\upshape drop}}
\newcommand{\hd}{\head}
\providecommand{\reverse}{\textsf{\upshape rev}}
\newcommand{\consblossom}{$\langle \reverse(\drop\;(\cardinality{\path_1'}-1)\;\path_1), (\reverse\; \path_1') \cat \path_2'\rangle$}

\begin{algorithm}[t]
    \caption{$\BlossomOrAugPath(\graph, \matching)$}\label{alg:BlossomOrAugPath}
    \begin{algorithmic}[1]
      \IF{$\exists e\in\graph. e \cap \vertices(\matching) = \emptyset$}
        \RETURN Augmenting path $\choice \{e\mid e\in\graph \wedge e \cap \vertices(\matching) = \emptyset\}$
      \ELSIF{$\computaltpath(\graph,\matching) = \langle\path_1,\path_2\rangle$}
        \IF{$\last\;\path_1\neq\last\;\path_2$}
           \RETURN Augmenting path $(\reverse\;\path_1)\cat\path_2$
        \ELSE
           \STATE $\langle\path_1',\path_2'\rangle=\compref(\path_1,\path_2)$ 
           \RETURN Blossom \consblossom
        \ENDIF
      \ELSE
        \RETURN No blossom or augmenting path found
      \ENDIF
    \end{algorithmic}
\end{algorithm}

Here we take one further step in our refinement of the algorithm's description, where we give a more detailed description of the function $\BlossomOrAugPath$ (see Algorithm~\ref{alg:BlossomOrAugPath}), which can compute augmenting paths or blossoms, if any exist in the graph.
The algorithm takes as input two alternating paths and returns either an augmenting path or a blossom.
The two given alternating paths have to satisfy a number of conditions if they are to correspond to an augmenting path or a blossom.
The algorithm is parameterised over two functions: $\computaltpath$ and $\compref$.
The former is the core search procedure of Edmonds' blossom algorithm and the latter is a function that takes the output of the former and uses it to return the stem of a blossom, if the two alternating path returned by $\computaltpath$ represent a blossom.

In Isabelle/HOL, the definition of $\BlossomOrAugPath$ is shown in Listing~\ref{isabelle:BlossomOrAug}.
It depends on the function \isa{longest\_disj\_pfx}, whose definition as well as its correctness statement are in Listing~\ref{isabelle:longestDisjPfx}.
It also depends on the function $\computaltpath$.
We use a locale again to formalise Algorithm~\ref{alg:BlossomOrAugPath}.
That locale parameterises $\BlossomOrAugPath$ on the function $\computaltpath$ that searches for alternating paths and poses the correctness assumptions for that alternating path search function.
The locale assumptions assert that $\computaltpath$ conforms to Specification~\ref{spec:computaltpath}.
\begin{figure}[H]
\begin{lstlisting}[
 language=Isabelle,
 caption={Formalisation of Algorithm~\ref{alg:BlossomOrAugPath}.},
 label={isabelle:BlossomOrAug},
 captionpos=b
 ]
definition compute_match_blossom where
  "compute_match_blossom ≡ 
   (if (∃e. e ∈ unmatched_edges) then
         let singleton_path = sel_unmatched in
           Some (Path singleton_path)
    else
     case compute_alt_path
       of Some (p1,p2) ⇒ 
         (if (set p1 ∩ set p2 = {}) then
            Some (Path ((rev p1) @ p2))
          else
            (let (pfx1, pfx2) = longest_disj_pfx p1 p2 in
              (Some (Blossom
                       (rev (drop (length (the pfx1)) p1))
                       (rev (the pfx1) @ (the pfx2))))))
       | _ ⇒ None)"
\end{lstlisting}
\end{figure}

\begin{figure}[H]
\begin{lstlisting}[
 language=Isabelle,
 caption={An algorithm to find the longest disjoint prefix of two lists and its correctness statement.},
 label={isabelle:longestDisjPfx},
 captionpos=b
 ]
fun longest_disj_pfx where
  "longest_disj_pfx l1 [] = (None,None)"
| "longest_disj_pfx [] l2 = (None,None)"
| "longest_disj_pfx l1 (h#l2) = 
    (let l1_pfx = (find_pfx ((=) h) l1) in
       if (last l1_pfx = h) then
         (Some l1_pfx,Some [h])
       else (let 
              (l1_pfx,l2_pfx) = (longest_disj_pfx l1 l2)
             in
               case l2_pfx of Some pfx2 ⇒ 
                                (l1_pfx,Some (h#pfx2))
                              | _ ⇒ (l1_pfx, l2_pfx)))"

lemma common_pfxs_form_match_blossom':
  assumes
    pfxs_are_pfxs:
      "(Some pfx1, Some pfx2) = longest_disj_pfx p1 p2" and
    from_tree: "p1 = pfx1 @ p" "p2 = pfx2 @ p" and
    alt_paths:
      "alt_path M (hd p2 # p1)" "alt_path M (hd p1 # p2)"
      "last p1 ∉ Vs M" and
    hds_neq: "hd p1 ≠ hd p2" and
    odd_lens: "odd (length p1)" "odd (length p2)" and
    distinct: "distinct p1" "distinct p2" and
    matching: "matching M"
  shows
    "match_blossom M 
                   (rev (drop (length pfx1) p1))
                   (rev pfx1 @ pfx2)"
\end{lstlisting}
\end{figure}

\providecommand{\listgen}{\text{xs}}

\noindent\textit{Further Notation.}
We first introduce some notions and notation.
For a list $\listgen$, let $\cardinality{\listgen}$ be the length of $\listgen$.
For a list $\listgen$ and a natural number $n$, let $\drop\; n\; \listgen$ denote the list $\listgen$, but with the first $n$ elements dropped.
For a list $\listgen$, let $x::\listgen$ denote adding an element $x$ to the front of a list $\listgen$.
For a non-empty list $\listgen$, let $\hd\; \listgen$  and $\last\; \listgen$ denote the first and last elements of $\listgen$, respectively.
Also, for a list $\listgen$, let $\reverse\; \listgen$ denote its reverse.
For two lists $\listgen_1$ and $\listgen_2$, let $\listgen_1\cat \listgen_2$ denote their concatenation.
Also, let $\compref\; \listgen_1\; \listgen_2$ denote the pair of lists $\langle \listgen_1',\listgen_2'\rangle$, s.t.\ $\listgen_1 = \listgen_1'\cat \listgen$ and $\listgen_2 = \listgen_2'\cat \listgen$, for some $\listgen$, where both $\listgen'_1$ and $\listgen'_2$ are disjoint except at their endpoints.
Listing~\ref{isabelle:longestDisjPfx} shows an implementation of the function.
Note: this function is not always well-defined, but it is always well-defined if both lists are paths in a tree ending at the root, which is always the case for its inputs in our context here since they will be traversals from the root of the search tree.

\paragraph{Correctness}
The hard part of reasoning about the correctness of $\BlossomOrAugPath$ is the specification of the properties of the functions on which it is parameterised.
For two paths $\path_1$ and $\path_2$, a graph $\graph$, and a matching $\matching$ consider the following properties: \begin{enumerate}[label=P\arabic*]
\item \label{property:simplePath} $\path_1$ and $\path_2$ are simple paths w.r.t.\ $\graph$.
\item \label{property:altPath} $\path_1$ and $\path_2$ alternating paths w.r.t.\ $\matching$.
\item \label{property:oddLength} $\path_1$ and $\path_2$ are of odd length.
\item \label{property:lastNinMatchingA} $\last\;\path_1\not\in\vertices(\matching)$.
\item \label{property:lastNinMatchingB} $\last\;\path_2\not\in\vertices(\matching)$.
\item \label{property:firstVtxsEdge} $\{\hd\;\path_1,\hd\;\path_2\}\in\graph$.
\item \label{property:firstNinMatching}  $\{\hd\;\path_1,\hd\;\path_2\}\not\in\matching$.\end{enumerate}
These properties are formalised in Listing~\ref{isabelle:computeAltPathSpec}.
\begin{myspec}
\label{spec:computaltpath}
The function $\computaltpath(\graph,\matching)$ returns two lists of vertices $\langle\path_1,\path_2\rangle$ s.t.\ both lists satisfy properties \ref{property:simplePath}-\ref{property:firstNinMatching} iff two lists of vertices satisfying those properties exist.
\end{myspec}
We directly define the second function ($\compref$) and prove it correct rather than devising a specification, mainly due its simplicity, as shown in Listing~\ref{isabelle:longestDisjPfx}.

\begin{mylem}
\label{lem:consAugPath}
Assume $\path_1$ and $\path_2$ satisfy properties \ref{property:simplePath}-\ref{property:firstNinMatching} and are both disjoint, and we have that $\last\;\path_1\neq\last\;\path_2$.
Then $(\reverse\;\path_1)\cat\path_2$ is an augmenting path w.r.t.\ $\langle\graph,\matching\rangle$.
\end{mylem}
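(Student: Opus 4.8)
Set $\gamma \mathrel{:=} (\reverse\;\path_1)\cat\path_2$. The plan is to check directly that $\gamma$ satisfies each clause of the definition of an augmenting path w.r.t.\ $\langle\graph,\matching\rangle$: that it is a path w.r.t.\ $\graph$, that it is \isa{distinct}, and that it is a \isa{matching\_augmenting\_path} w.r.t.\ $\matching$. The graph-path part is routine: reversing a path w.r.t.\ $\graph$ is again a path w.r.t.\ $\graph$, so $\reverse\;\path_1$ is a path by property~\ref{property:simplePath}, so is $\path_2$, and since the last vertex of $\reverse\;\path_1$ is $\hd\;\path_1$, the first vertex of $\path_2$ is $\hd\;\path_2$, and $\{\hd\;\path_1,\hd\;\path_2\}\in\graph$ by property~\ref{property:firstVtxsEdge}, the standard path-concatenation lemma gives that $\gamma$ is a path w.r.t.\ $\graph$. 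Distinctness follows because $\path_1$ and $\path_2$ are \isa{distinct} by property~\ref{property:simplePath} (hence so is $\reverse\;\path_1$) and their vertex sets are disjoint by hypothesis; note $\hd\;\path_1\neq\hd\;\path_2$ since $\{\hd\;\path_1,\hd\;\path_2\}$ is an edge, so nothing is repeated at the junction. The length-$\geq 2$ requirement is immediate from property~\ref{property:oddLength}, as $\path_1$ and $\path_2$ each have at least one vertex.

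The heart of the argument is the alternation. The endpoints $\hd\;\gamma=\last\;\path_1$ and $\last\;\gamma=\last\;\path_2$ are unmatched by properties~\ref{property:lastNinMatchingA} and~\ref{property:lastNinMatchingB}, so it remains to establish the \isa{alt\_list} property, i.e.\ that $\edges(\gamma)$ alternates between non-matching and matching edges, beginning (and ending) with a non-matching edge. Observe $\edges(\gamma)=\reverse(\edges(\path_1))\cat[\{\hd\;\path_1,\hd\;\path_2\}]\cat\edges(\path_2)$. Property~\ref{property:oddLength} says $\path_1$ has an odd number of vertices, hence an even number of edges; since $\path_1$ is alternating (property~\ref{property:altPath}) and $\last\;\path_1$ is unmatched, its last edge is not in $\matching$, so reading the edge list backwards and using its even length, $\reverse(\edges(\path_1))$ begins with a non-matching edge and, when non-empty, ends with a matching edge. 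The junction edge $\{\hd\;\path_1,\hd\;\path_2\}$ is not in $\matching$ by property~\ref{property:firstNinMatching}, so it continues the alternation after that matching edge. Symmetrically $\path_2$ has an even number of edges, is alternating, and has an unmatched last vertex, so its first edge (if any) is in $\matching$ and its last edge is not in $\matching$; hence $\edges(\path_2)$ continues the alternation after the junction edge and $\gamma$ ends with a non-matching edge. Splicing the three segments yields the desired alternating pattern.

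To make the parity bookkeeping precise I would use the alternating-list reasoning principles of Listing~\ref{isabelle:altListLemmas}: \isa{alternating\_eq\_iff\_odd} and \isa{alternating\_eq\_iff\_even} to turn ``$\path_i$ has an odd number of vertices and its last edge is not in $\matching$'' into a statement fixing which predicate its first edge satisfies, together with \isa{induct\_alt\_list012} (or an explicit append lemma for \isa{alt\_list}) to glue the three alternating segments at the two junctions. The main obstacle I expect is not the generic argument but the degenerate cases: when $\path_1$ (resp.\ $\path_2$) is a single vertex its edge list is empty, the corresponding segment vanishes, and the ``first/last edge'' step must be replaced by the direct observation that the relevant endpoint — $\last\;\path_1=\hd\;\path_1$ (resp.\ $\last\;\path_2$) — is unmatched by property~\ref{property:lastNinMatchingA} (resp.~\ref{property:lastNinMatchingB}), so the junction edge itself plays the role of the boundary edge. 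Besides these case splits, the only other fiddly part is rewriting $\edges(\cdot)$ through \isa{rev} and \isa{@}, which the library lemmas on \isa{edges\_of\_path} dispatch routinely.
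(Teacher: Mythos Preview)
Your proposal is correct and follows essentially the same decomposition as the paper's proof: split into showing that $(\reverse\;\path_1)\cat\path_2$ is a simple path w.r.t.\ $\graph$ (using \ref{property:simplePath}, \ref{property:firstVtxsEdge}, and disjointness) and that it is an augmenting path w.r.t.\ $\matching$ (using \ref{property:altPath}--\ref{property:lastNinMatchingB} for the alternation and endpoint conditions, and \ref{property:firstNinMatching} for the junction edge). Your account is more detailed than the paper's sketch---in particular your explicit edge-list decomposition, the parity bookkeeping via the \isa{alt\_list} lemmas, and the handling of the single-vertex degenerate cases are exactly the formalisation overhead the paper glosses over---but the underlying argument is the same.
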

\begin{proof}[Proof sketch]
The lemma follows from the following two facts.
\begin{mylem}
$(\reverse\;\path_1)\cat\path_2$ is an alternating path w.r.t.\ $\matching$.
\end{mylem}
\begin{proof}
From \ref{property:lastNinMatchingA}, we have that $\last\;\edges(\path_1)\not\in\matching$, and thus
we have that $\hd\;\edges(\reverse\;\path_1)\not\in\matching$.
Also, from \ref{property:altPath}, \ref{property:oddLength}, and \ref{property:lastNinMatchingB}, we have that $\hd\;\edges(\path_2)\in\matching$.
From that, in addition to \ref{property:altPath} and \ref{property:firstNinMatching}, we finish the proof.
Also, from \ref{property:lastNinMatchingA}, and \ref{property:lastNinMatchingB}, we have that the first and last vertices of $(\reverse\;\path_1)\cat\path_2$ are unmatched.
Accordingly, we have that $(\reverse\;\path_1)\cat\path_2$ is an augmenting path w.r.t.\ $\matching$.
\end{proof}
\begin{mylem}
$(\reverse\;\path_1)\cat\path_2$ is a simple path w.r.t.\ $\graph$.
\end{mylem}
\begin{proof}
This follows from \ref{property:simplePath}, \ref{property:firstVtxsEdge}, and since we assume that $\path_1$ and $\path_2$ are disjoint.
\end{proof}
\end{proof}

\begin{mylem}
\label{lem:consBlos}
If $\path_1$ and $\path_2$ are both \begin{enumerate*} \item simple paths w.r.t.\ $\graph$, \item alternating paths w.r.t.\ $\matching$, and \item of odd length, and if we have that \item $\last\;\path_1=\last\;\path_2$, \item  $\last\;\path_1\not\in\vertices(\matching)$, \item $\{\hd\;\path_1,\hd\;\path_2\}\in\graph$, \item $\{\hd\;\path_1,\hd\;\path_2\}\not\in\matching$, and \item $\langle\path_1',\path_2'\rangle=\compref(\path_1,\path_2)$,\end{enumerate*} then \consblossom is a blossom w.r.t.\ $\langle\graph,\matching\rangle$.
\end{mylem}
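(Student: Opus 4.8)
The plan is to unfold the definition of a blossom w.r.t.\ $\langle\graph,\matching\rangle$ (Listing~\ref{isabelle:matchBlossom}) and discharge its conjuncts one at a time for the pair $\stem,\cycle$ named in \consblossom: that $\stem\cat\cycle$ is a path in $\graph$, that $\stem\cat\cycle$ alternates w.r.t.\ $\matching$, that $\stem$ followed by $\cycle$ with its last vertex removed is simple, that $\cycle$ is an odd cycle, that $\hd(\stem\cat\cycle)\notin\vertices(\matching)$, and that $\edges(\stem\cat[\hd\;\cycle])$ has even length. First I would fix notation: write $\path_1=\path_1'\cat\path$ and $\path_2=\path_2'\cat\path$ for the common suffix $\path$ split off by $\compref$ (well defined here because $\last\;\path_1=\last\;\path_2$), and let $b$ be the shared endpoint of $\path_1'$ and $\path_2'$, i.e.\ the base. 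Then $\cycle=(\reverse\;\path_1')\cat\path_2'$ runs $b,\dots,\hd\;\path_1,\hd\;\path_2,\dots,b$, so $\hd\;\cycle=\last\;\cycle=b$, while $\stem$ is a reversal of the common part $\path$, so when non-empty $\hd\;\stem=\last\;\path_1$ and $\last\;\stem$ is the vertex of $\path_1$ adjacent to $b$. The single fact I would prepare in advance --- using that $\path_1,\path_2$ alternate (\ref{property:altPath}), have odd vertex-length (\ref{property:oddLength}), and end at $\last\;\path_1=\last\;\path_2\notin\vertices(\matching)$ (\ref{property:lastNinMatchingA}), together with the alternating-list reasoning principles of Listing~\ref{isabelle:altListLemmas}, in particular \isa{alternating\_eq\_iff\_odd} --- is the parity of matching membership along $\path_1$ and $\path_2$: each has even edge-length, and its last edge is non-matching because its last vertex is unmatched, so by alternation the $i$-th edge of each lies in $\matching$ exactly when $i$ is odd; in particular their first edges are matched.

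Given this, the geometric conjuncts are routine. That $\cycle$ is an odd cycle follows from $\hd\;\cycle=\last\;\cycle$ and a count: $\edges(\cycle)$ is $\edges(\reverse\;\path_1')$, then $\{\hd\;\path_1,\hd\;\path_2\}$ (a genuine edge in $\graph$ by \ref{property:firstVtxsEdge}, so $\hd\;\path_1\neq\hd\;\path_2$), then $\edges(\path_2')$, so $\cardinality{\edges(\cycle)}=\cardinality{\path_1'}+\cardinality{\path_2'}-1$, which is odd since $\cardinality{\path_1'}$ and $\cardinality{\path_2'}$ have equal parity (each is an odd number minus $\cardinality{\path}$, by \ref{property:oddLength}); and $\cardinality{\cycle}\geq 3$ is automatic since $\hd\;\path_1\neq\hd\;\path_2$ forbids $\cardinality{\path_1'}=\cardinality{\path_2'}=1$. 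That $\stem\cat\cycle$ and $\stem\cat[\hd\;\cycle]$ are paths in $\graph$ follows from \ref{property:simplePath}, since both are assembled by reversing and concatenating sub-lists of $\path_1$ and $\path_2$, the only new adjacencies being $\{\hd\;\path_1,\hd\;\path_2\}\in\graph$ and the edge of $\path_1$ joining $\last\;\stem$ to $b$. Simplicity of $\stem$ followed by $\cycle$ minus its last vertex reduces to three disjointness facts: the parts of $\path_1$ forming $\stem$ and $\reverse\;\path_1'$ are disjoint because $\path_1$ is simple; $\path_2'$ without $b$ is disjoint from $\path_1'$ because the $\compref$-prefixes meet only at $b$; and it is disjoint from the $\stem$-part because $\path_2$ is simple. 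Finally $\hd(\stem\cat\cycle)\notin\vertices(\matching)$ is immediate: if $\stem$ is non-empty its head is $\last\;\path_1$, unmatched by \ref{property:lastNinMatchingA}; otherwise it is $b=\last\;\path_1$.

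I expect the real work to be the alternation conjunct and the even-length condition at the base. For alternation, reading off edge lists shows $\edges(\stem\cat\cycle)$ equals $\edges(\path_1)$ reversed, then $\{\hd\;\path_1,\hd\;\path_2\}$, then the first $\cardinality{\path_2'}-1$ edges of $\path_2$; the reversed block alternates by symmetry of alternation under reversal (from \ref{property:altPath}) and ends at the first edge of $\path_1$, which the parity fact places in $\matching$, whereas $\{\hd\;\path_1,\hd\;\path_2\}\notin\matching$ by \ref{property:firstNinMatching}, so the first seam is consistent; the second seam is consistent because the first edge of $\path_2$ is also matched. The even-length condition amounts to $\cardinality{\path}$ being even, equivalently $\cardinality{\path_1'}$ odd, and here I would crucially use that $\matching$ is a matching: when $\path\neq\emptylist$, the shared suffix edge $\{b,\hd\;\path\}$ must be the matched edge incident to $b$ --- otherwise the two distinct edges by which $\path_1$ and $\path_2$ reach $b$ from their respective branches would both be in $\matching$ --- and, being the $\cardinality{\path_1'}$-th edge of $\path_1$, this forces $\cardinality{\path_1'}$ odd by the parity fact; when $\path=\emptylist$ the condition holds trivially. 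The whole difficulty is in keeping the several parity counts and reversals straight, the usual source of the symmetric-case blow-up noted in Remark~\ref{remark:symmetry}; in the formalisation most of it is already packaged by \isa{common\_pfxs\_form\_match\_blossom'} (Listing~\ref{isabelle:longestDisjPfx}), after which it only remains to add that $\stem\cat\cycle$ is a path in $\graph$ to upgrade a \isa{match\_blossom} to a blossom w.r.t.\ $\langle\graph,\matching\rangle$.
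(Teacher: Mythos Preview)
Your proposal is correct and follows the same approach the paper gestures at: the paper's proof sketch says only that ``the proof here is done using a similar construction to what we did in the proof of Lemma~\ref{lem:consAugPath}'', i.e.\ a direct verification of the defining conjuncts, which is exactly what you carry out. Your write-up is in fact considerably more detailed than the paper's own sketch, and the parity argument at the base (using that $\matching$ is a matching to rule out two matched edges meeting at $b$) is the right non-trivial step to single out.
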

\begin{proof}[Proof sketch]
The proof here is done using a similar construction to what we did in the proof of Lemma~\ref{lem:consAugPath}.
\end{proof}

Finally, the following Theorem shows $\BlossomOrAugPath$ is correct.
\begin{mythm}
\label{thm:computaltpath}
Assume that $\computaltpath$ satisfies Specification~\ref{spec:computaltpath}.
Then $\BlossomOrAugPath$ satisfies Specification~\ref{spec:BlossomOrAugPath}.
\end{mythm}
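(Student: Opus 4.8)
Specification~\ref{spec:BlossomOrAugPath} is a biconditional, which I would split into a \emph{soundness} half --- whenever $\BlossomOrAugPath(\graph,\matching)$ returns a result, that result is a blossom or an augmenting path w.r.t.\ $\langle\graph,\matching\rangle$ --- and a \emph{completeness} half --- whenever some blossom or augmenting path w.r.t.\ $\langle\graph,\matching\rangle$ exists, $\BlossomOrAugPath(\graph,\matching)$ returns one. In both halves the skeleton is a case split on the three branches of Algorithm~\ref{alg:BlossomOrAugPath}, and the two directions of Specification~\ref{spec:computaltpath} are what let me pass between ``$\computaltpath(\graph,\matching)$ returns a pair $\langle\path_1,\path_2\rangle$'' and ``two paths satisfying \ref{property:simplePath}--\ref{property:firstNinMatching} exist''.

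For soundness I would argue branch by branch. The first branch returns a single edge $e$ with $e\cap\vertices(\matching)=\emptyset$; the two-vertex path on $e$ is a simple path in $\graph$ whose unique edge is outside $\matching$ and whose endpoints are both unmatched, hence an augmenting path w.r.t.\ $\langle\graph,\matching\rangle$. In the second branch, soundness of Specification~\ref{spec:computaltpath} hands me \ref{property:simplePath}--\ref{property:firstNinMatching} for the returned $\path_1,\path_2$; the subcase $\last\;\path_1\neq\last\;\path_2$ is then exactly the hypothesis set of Lemma~\ref{lem:consAugPath}, and the subcase $\last\;\path_1=\last\;\path_2$ is exactly the hypothesis set of Lemma~\ref{lem:consBlos}, so each closes once its side conditions are checked. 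The one point needing genuine care is that $\compref$ is partial; here I would use the structural invariant maintained by $\computaltpath$ --- that its two output lists are traversals of a common search forest, the property already invoked to justify well-definedness of $\compref$ --- both to know that $\compref(\path_1,\path_2)$ is defined and to know that the ``disjoint versus meeting'' test really does separate the augmenting-path case from the blossom case. The third branch returns nothing, so there is nothing to prove.

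For completeness, suppose a blossom or augmenting path w.r.t.\ $\langle\graph,\matching\rangle$ exists. If some edge is unmatched at both endpoints, the first branch fires and soundness already shows its output is an augmenting path. Otherwise I must show $\computaltpath(\graph,\matching)\neq\mathtt{None}$, and by completeness of Specification~\ref{spec:computaltpath} it suffices to exhibit two paths with \ref{property:simplePath}--\ref{property:firstNinMatching}; once $\computaltpath$ delivers a pair, the second-branch analysis above shows the whole algorithm outputs a blossom or an augmenting path. So the real work is a decomposition. Given an augmenting path $v_1v_2\dots v_n$, I would cut it at its first edge, taking $\path_1=[v_1]$ and $\path_2=[v_2,\dots,v_n]$: \ref{property:simplePath} and \ref{property:altPath} hold because sub-paths and reverses of simple alternating paths stay simple and alternating, \ref{property:oddLength} holds because an augmenting path has an odd number of edges so $n$ is even and both $1$ and $n-1$ are odd, \ref{property:lastNinMatchingA} and \ref{property:lastNinMatchingB} are the augmenting-path endpoint conditions, and \ref{property:firstVtxsEdge} and \ref{property:firstNinMatching} hold since $\{v_1,v_2\}$ is an edge of $\graph$ outside $\matching$. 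Given a blossom $\langle\stem,\cycle\rangle$ with base $b$, I would take $\path_1$ and $\path_2$ to be the two arcs into which $\cycle$ is separated by a suitable non-matching cycle edge, each continued along the reversed stem to its unmatched endpoint; the defining conditions of a blossom --- that $\cycle$ is odd and that the path from the stem's start to $b$ has even length --- then force, by an alternation count around $\cycle$, that both arcs have odd length and that the joining edge is outside $\matching$, which is exactly \ref{property:simplePath}--\ref{property:firstNinMatching}.

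The main obstacle I anticipate is precisely this blossom decomposition: making all the parities line up --- that both extracted paths have odd length and that the edge joining their heads is a non-matching edge of $\graph$, so that it can legitimately play the role of $\{\hd\;\path_1,\hd\;\path_2\}$ in \ref{property:firstVtxsEdge}--\ref{property:firstNinMatching} --- rests on the precise interaction between the oddness of $\cycle$, the even length of the stem, and the alternating pattern of matched and unmatched edges around $\cycle$ and along the stem. Closely related, and also delicate to formalise, is threading the search-forest structural invariant through the soundness argument so that the ``disjoint versus meeting'' test in the code of $\BlossomOrAugPath$ provably coincides with the ``augmenting path versus blossom'' dichotomy in the mathematics, so that the partiality of $\compref$ never actually arises.
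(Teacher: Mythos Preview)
Your soundness half is exactly the paper's argument: the one-sentence proof sketch the paper gives for this theorem simply cites Lemma~\ref{lem:consAugPath}, Lemma~\ref{lem:consBlos}, Specification~\ref{spec:computaltpath}, and the definition of $\BlossomOrAugPath$, and your branch-by-branch analysis is the natural unpacking of that.

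Where you diverge is on the completeness half, and here you are doing more work than is actually required. Although the informal Specification~\ref{spec:BlossomOrAugPath} is phrased as ``blossom or augmenting path exists iff $\BlossomOrAugPath$ returns one'', the formalised completeness assumption in Listing~\ref{isabelle:findAugPathLocale} (\isa{bloss\_algo\_complete}) and likewise the completeness clause for $\computaltpath$ in Listing~\ref{isabelle:computeAltPathSpec} are both triggered only by the existence of an \emph{augmenting path}, not of a blossom. This is all the surrounding algorithm ever needs: $\AugPathAlg$ must find something precisely when an augmenting path exists, and a blossom in a graph with a maximum matching need not be detected. Consequently your blossom-to-two-paths decomposition---the part you rightly flag as the delicate parity argument---can be dropped entirely. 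The paper's own decomposition (Lemma~\ref{lem:augPathPropertiesExist}, stated later) handles only the augmenting-path case, splitting the path after its first three vertices; your simpler split $\path_1=[v_1]$, $\path_2=[v_2,\dots,v_n]$ also works and is arguably cleaner.

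Your remark about threading the search-forest invariant through so that the algorithm's disjointness test coincides with the augmenting-path/blossom dichotomy is well taken: note that the formal spec in Listing~\ref{isabelle:computeAltPathSpec} includes the common-suffix property (first conjunct), which is exactly the structural fact that makes $\compref$ well-defined and the two tests equivalent.
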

\begin{proof}[Proof sketch]
The theorem follows from Lemma~\ref{lem:consBlos} and Lemma~\ref{lem:consAugPath} and the definitions of Specification~\ref{spec:computaltpath} and $\BlossomOrAugPath$. 
\end{proof}
\begin{figure}[H]
\begin{lstlisting}[
 language=Isabelle,
 caption={The specification of the correctness of the core search procedure.},
 label={isabelle:computeAltPathSpec},
 captionpos=b
 ]
definition compute_alt_path_spec where 
  "compute_alt_path_spec G M compute_alt_path ≡
   (∀p1 p2 pref1 x post1 pref2 post2. 
       compute_alt_path = Some (p1, p2) ⟹
         p1 = pref1 @ x # post1 ∧ p2 = pref2 @ x # post2
          ⟹ post1 = post2) ∧
   (∀p1 p2. compute_alt_path = Some (p1, p2) ⟹
         alt_path M (hd p1 # p2)) ∧
   (∀p1 p2. compute_alt_path = Some (p1, p2) ⟹
        alt_path M (hd p2 # p1)) ∧
   (∀p1 p2. compute_alt_path = Some (p1, p2) ⟹ 
        last p1 ∉ Vs M) ∧
   (∀p1 p2. compute_alt_path = Some (p1, p2) ⟹ 
         last p2 ∉ Vs M) ∧
   (∀p1 p2. compute_alt_path = Some (p1, p2) ⟹
         hd p1 ≠ hd p2) ∧
   (∀p1 p2. compute_alt_path = Some (p1, p2) ⟹
         odd (length p1)) ∧
   (∀p1 p2. compute_alt_path = Some (p1, p2) ⟹
         odd (length p2)) ∧
   (∀p1 p2. compute_alt_path = Some (p1, p2) ⟹
         distinct p1) ∧
   (∀p1 p2.
       compute_alt_path = Some (p1, p2) ⟹ distinct p2) ∧
   (∀p1 p2. compute_alt_path = Some (p1, p2) ⟹ path G p1) ∧
   (∀p1 p2. compute_alt_path =  Some(p1, p2) ⟹ path G p2) ∧
   (∀p1 p2. 
      compute_alt_path = Some (p1, p2) ⟹ {hd p1, hd p2} ∈ G)"

locale match = graph_abs G for G + 
  fixes M
  assumes matching: "matching M" "M ⊆ G"

locale compute_match_blossom' = match G M + choose sel
   for sel::"'a set ⇒ 'a" and G M ::"'a set set" +

fixes compute_alt_path:: "(('a list × 'a list) option)"
assumes 
compute_alt_path_spec:
"compute_alt_path_spec G M compute_alt_path" and
compute_alt_path_complete:
"(((∃p. path G p ∧ distinct p ∧ 
        matching_augmenting_path M p)))
         ⟹ (∃blos_comp. compute_alt_path = Some blos_comp)"
\end{lstlisting}
\end{figure}

\begin{myremark}
The formal proofs of the above lemmas are largely straightforward.
The main difficulty is coming up with the precise properties, e.g.\ in Specification~\ref{spec:computaltpath}, which required many iterations between the correctness proof of Algorithm~\ref{alg:BlossomOrAugPath} and Algorithm~\ref{alg:compAltPath}, which implements the assumed function $\computaltpath$.
\end{myremark}

\section{Searching for an Augmenting Path or a Blossom}
\label{sec:compAltPath}

\newcommand{\lab}{\textsf{\upshape label}}
\newcommand{\even}{\textsf{\upshape even}}
\newcommand{\odd}{\textsf{\upshape odd}}
\newcommand{\examined}{\textsf{\upshape ex}}
\newcommand{\unexamined}{\textsf{\upshape unex\_with\_even}}
\newcommand{\parent}{\textsf{\upshape parent}}
\newcommand{\follow}[1]{\textsf{\upshape follow}\ifstrempty{#1}{}{\;#1\;}}
\providecommand{\comment}[1]{{\scriptsize // #1}}

\begin{algorithm}[t]
    \caption{$\computaltpath(\graph,\matching)$}\label{alg:compAltPath}
    \begin{algorithmic}[1]
      \STATE $\examined:=\emptyset$ \comment{Set of examined edges}
      \FOR{$\vertexgen\in \vertices(\graph)$}
        \STATE $\lab\;\vertexgen := $ None
        \STATE $\parent\;\vertexgen := $ None
      \ENDFOR
      \FOR{$\vertexgen\in \vertices(\graph) \setminus\vertices(\matching)$}\label{compAltPath:labelUnmatched}
        \STATE $\lab\;\vertexgen := \langle \vertexgen,\even\rangle$
      \ENDFOR
      \WHILE{$(\graph\setminus \examined) \cap \{e\mid\exists \vertexgen\in e, r\in\vertices(\graph). \lab\; \vertexgen = \langle r, \even\rangle\}\neq\emptyset$}\label{compAltPath:whileCond}
        \STATE \comment{Choose a new edge and label it examined}\label{compAltPath:iterBegin}
        \STATE $\{\vertexa,\vertexb\}:=\choice (\graph\setminus \examined) \cap \{\{\vertexa,\vertexb\}\mid\exists r. \lab\; \vertexa = \langle r, \even\rangle\}$\label{compAltPath:chooseVAVB}
        \STATE $\examined:=\examined \cup \{\{\vertexa,\vertexb\}\}$
        \IF{$\lab\; \vertexb = $ None}\label{compAltPath:ifCondA}
          \STATE \comment{Grow the discovered set of edges from $r$ by two}
          \STATE $\vertexc:=\choice \{\vertexc \mid \{\vertexb, \vertexc\} \in \matching\}$\label{compAltPath:chooseVC}
          \STATE $\examined:=\examined \cup \{\{\vertexb,\vertexc\}\}$\label{compAltPath:addex}
          \STATE $\lab\; \vertexb := \langle r, \odd \rangle$\label{compAltPath:labVB}
          \STATE $\lab\; \vertexc := \langle r, \even\rangle$\label{compAltPath:labVC}
          \STATE $\parent\;\vertexb := \vertexa$\label{compAltPath:parVB}
          \STATE $\parent\;\vertexc := \vertexb$\label{compAltPath:ifCondAEnd}
        \ELSIF{$\exists s\in\vertices(\graph). \lab\; \vertexb = \langle s, \even\rangle$} \label{compAltPath:elseCondB}
          \STATE \comment{Return two paths from current edge's tips to unmatched vertex(es)}
          \RETURN $\langle\follow{\parent} \vertexa, \follow{\parent} \vertexb\rangle$\label{compAltPath:retA}
        \ENDIF
      \ENDWHILE
      \RETURN No paths found \label{compAltPath:retB}
    \end{algorithmic}
\end{algorithm}

Lastly, we refine the function $\computaltpath$ to a detailed algorithmic description (see Algorithm~\ref{alg:compAltPath}).
This algorithm performs an \emph{alternating tree search}.
The search aims to either find an augmenting path or a blossom.
It is done via growing alternating trees rooted at unmatched vertices.
The search is initialised by making each unmatched vertex a root of an alternating tree; the matched nodes are in no tree initially.
In an alternating tree, vertices at an even depth are entered by a matching edge, vertices at an odd depth are entered by a non-matching edge, and all leaves have even depth. 
In each step of the search, one considers a vertex $\vertexa$ of even depth that is incident to an edge $\{\vertexa,\vertexb\}$ that was not examined yet, s.t.\ there is $\{\vertexb,\vertexc\}\in\matching$.
If $\vertexb$ is not in a tree yet, then one adds $\vertexb$ (at an odd level) and $\vertexc$ (at an even level).
If $\vertexb$ is already in a tree and has an odd level then one does nothing as one simply has discovered another odd length path to $\vertexb$.
If $\vertexb$ is already in a tree and has an even level then one has either discovered an augmenting path (if $\vertexa$ and $\vertexb$ belong to trees with different roots) or a blossom (if $\vertexa$ and $\vertexb$ belong to the same tree).
If the function positively terminates, i.e.\ finds two vertices with even labels, it returns two alternating paths by ascending the two alternating trees to which the two vertices belong, where both paths satisfy Properties~\ref{property:simplePath}-\ref{property:firstNinMatching}.
This tree ascent is performed by the function $\follow{}$.
That function takes a higher-order argument and a vertex.
The higher-order argument is a function that maps every vertex to another vertex, which is intended to be its parent in a tree structure.
\todo{Is there a name for a function that keeps following pointers}

To formalise Algorithm~\ref{alg:compAltPath} in Isabelle/HOL, we first formally define the function $\follow{}$, which follows a vertex's parent, as shown in Listing~\ref{isabelle:follow}.
Again, we use a locale to formalise this function, and that locale fixes the function \isa{parent} mapping every vertex to its parent in its respective tree.
Note that the function \isa{follow} is not well-defined for all possible arguments.
In particular, it is only well-defined if the relation between pairs of vertices induced by the function \isa{parent} is a well-founded relation.
This assumption on \isa{parent} is a part of the locale's definition.

\begin{figure}[H]
\begin{lstlisting}[
 language=Isabelle,
 caption={The definition of a function that ascends the search tree towards the root, returning the traversed path.},
 label={isabelle:follow},
 captionpos=b
 ]
definition follow_invar'::"('a ⇒ 'a option) ⇒ bool" where
  "parent_spec parent ≡ wf {(x, y) |x y. (Some x = par y)}"

locale parent = 
  fixes parent::"'a ⇒ 'a option" and 
    parent_rel::"'a ⇒ 'a ⇒ bool"
  assumes parent_rel:
    "follow_invar' parent"

function follow where
  "follow v = 
     (case (parent v) of Some v' ⇒ v # (follow v')
                       | _  ⇒ [v])"
\end{lstlisting}
\end{figure}

Based on that, $\computaltpath$ is formalised as shown in Listing~\ref{isabelle:compAltPath}.
Note that we do not use a while combinator to represent the while loop: instead we formalise it as a recursive function.
In particular, we define it as a recursive function which takes as arguments the variables representing the state of the while loop, namely, the set of examined edges \isa{ex}, the parent function \isa{par}, and the labelling function \isa{flabel}.

\begin{figure}[H]
\begin{lstlisting}[
 language=Isabelle,
 caption={The definition of a function that constructs the search forest, which is the main search procedure of Edmonds' blossom shrinking algorithm. Note: \isa{f(x := v)} denotes the point-wise update of a function in Isabelle/HOL.},
 label={isabelle:compAltPath},
 captionpos=b
 ]
definition if1 where
  "if1 flabel ex v1 v2 v3 r =
     ({v1, v2} ∈ G - ex ∧ flabel v1 = Some (r, Even) ∧
      flabel v2 = None ∧ {v2, v3} ∈ M)"

definition if1_cond where
  "if1_cond flabel ex = 
     (∃v1 v2 v3 r. if1 flabel ex v1 v2 v3 r)" 

definition if2 where 
  "if2 flabel v1 v2 r r' = 
     ({v1, v2} ∈ G ∧ flabel v1 = Some (r, Even) ∧
      flabel v2 = Some (r', Even))"

definition if2_cond where "if2_cond flabel =
   (∃v1 v2 r r'. if2 flabel v1 v2 r r')"

function compute_alt_path::
  "'a set set ⇒ ('a ⇒ 'a option) ⇒ ('a ⇒ ('a × label) option)
     ⇒ (('a list × 'a list) option)"
  where
  "compute_alt_path ex par flabel = 
    (if if1_cond flabel ex then
       let
         (v1,v2,v3,r) = sel_if1 flabel ex;
         ex' = insert {v1, v2} ex;
         ex'' = insert {v2, v3} ex';
         par' = par(v2 := Some v1, v3 := Some v2);
         flabel' = 
           flabel(v2 := Some (r, Odd), v3 := Some (r, Even));
         return = compute_alt_path ex'' par' flabel'
       in
         return
     else if if2_cond flabel then
        let
          (v1,v2,r,r') = sel_if2 flabel; 
          return = 
            Some (parent.follow par v1, parent.follow par v2)
        in
          return
     else
       let
          return = None
       in
         return)"
\end{lstlisting}
\end{figure}

Note that this function is also defined within a locale, shown in Listing~\ref{isabelle:compAltPathLocale}.
That locale assumes nothing but a choice function that picks elements from finite sets.
\begin{figure}[H]
\begin{lstlisting}[
 language=Isabelle,
 caption={Functions on which \isa{compute\_alt\_path} is parameterised.},
 label={isabelle:compAltPathLocale},
 captionpos=b
 ]
locale match = graph_abs G for G+ 
  fixes M
  assumes matching: "matching M" "M ⊆ G"

locale compute_alt_path = match G M + choose sel 
  for G M::"'a set set" and sel::"'a set ⇒ 'a"
\end{lstlisting}
\end{figure}

One last aspect of our formalisation of $\computaltpath$ is how we model nondeterministic choice and selection.
As mentioned earlier we aimed to arrive at a final algorithm with minimal assumptions on functions for nondeterministic computation.
We thus implement all needed nondeterministic computation using the basic assumed nondeterminitic choice function.
Listing~\ref{isabelle:compAltPath} shows, as an example, how we nondeterministically choose the vertices in the first execution path of the while-loop (i.e.\ the path ending on Line~\ref{compAltPath:ifCondAEnd}).

\begin{figure}[H]
\begin{lstlisting}[
 language=Isabelle,
 caption={The definition of a function that nondeterministically chooses
          vertices and a root that satisfy the conditions of the first execution 
          branch of the while loop.},
 label={isabelle:compAltPath},
 captionpos=b
 ]
definition 
  "sel_if1 flabel ex = 
     (let es = 
        D ∩ {(v1,v2)| v1 v2. {v1,v2} ∈ (G - ex) ∧
                           (∃r. flabel v1 = Some (r, Even)) ∧
                           flabel v2 = None ∧ v2 ∈ Vs M};
         (v1,v2) = sel_pair es;
         v3 = sel (neighbourhood M v2);
         r = fst (the (flabel v1)) 
     in (v1,v2,v3,r))"

lemma sel_if1_works:
  assumes "if1_cond flabel ex"
          "(v1, v2, v3, r) = sel_if1 flabel ex"
  shows "if1 flabel ex v1 v2 v3 r"
\end{lstlisting}
\end{figure}




The functional correctness theorem of Algorithm~\ref{alg:compAltPath}, on the proof of which we spend the rest of this section, is stated as follows.
\begin{mythm}
\label{thm:compAltPathCorrect}
$\computaltpath$ satisfies Specification~\ref{spec:computaltpath}.
\end{mythm}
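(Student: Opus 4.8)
The plan is to prove the two halves of Specification~\ref{spec:computaltpath} for $\computaltpath$ --- \emph{soundness} (if $\computaltpath(\graph,\matching)$ returns a pair $\langle\path_1,\path_2\rangle$, then $\path_1$ and $\path_2$ satisfy properties~\ref{property:simplePath}-\ref{property:firstNinMatching} together with the shared-suffix clause of the formal specification) and \emph{completeness} (if any pair of lists satisfying~\ref{property:simplePath}-\ref{property:firstNinMatching} exists, then $\computaltpath$ does not return \isa{None}) --- by the methodology used for the earlier algorithms: establish termination, then argue by computation induction, which reduces to exhibiting a bundle of loop invariants on the state $\langle\examined,\parent,\lab\rangle$ threaded through the recursion. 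Termination uses the measure $\card{\graph\setminus\examined}$: the \isa{if1} branch inserts the edge $\{\vertexa,\vertexb\}$, fresh in $\examined$ since its guard requires $\{\vertexa,\vertexb\}\in\graph\setminus\examined$, before recursing, while the other two branches do not recurse; finiteness of $\graph$ is immediate from \isa{graph\_invar}. This yields the domain predicate and the computation-induction principle on which the rest of the proof rests.

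For soundness we maintain, across the recursion, the invariants that:
\begin{enumerate*}[label=(\roman*)]
\item the relation $\{(x,y)\mid\parent\;y = \isa{Some}\;x\}$ is well-founded and functional, so its graph is a forest and \isa{follow} is total on labelled vertices --- exactly the hypothesis needed to form $\follow{\parent}\vertexa$;
\item the roots of this forest are unmatched vertices labelled $\even$, every non-root vertex labelled $\even$ is entered from its parent by a matching edge, every vertex labelled $\odd$ is entered by a non-matching edge, and the labels alternate $\even/\odd$ along parent chains;
\item consequently, for a vertex $\vertexgen$ labelled $\langle r,\even\rangle$, $\follow{\parent}\vertexgen$ is a simple alternating path w.r.t.\ $\matching$ from $\vertexgen$ to the unmatched root $r$, of odd vertex-length, whose first edge lies in $\matching$ and whose last edge does not.
\end{enumerate*}
A return on Line~\ref{compAltPath:retA} occurs when \isa{if2} holds, giving $\vertexa$ and $\vertexb$ both labelled $\even$ with $\{\vertexa,\vertexb\}\in\graph$; taking $\path_1=\follow{\parent}\vertexa$ and $\path_2=\follow{\parent}\vertexb$, invariant~(iii) gives \ref{property:simplePath}-\ref{property:lastNinMatchingB} at once, \ref{property:firstVtxsEdge} is the guard, and \ref{property:firstNinMatching} follows from~(ii): if $\{\vertexa,\vertexb\}\in\matching$ and $\vertexa$ is a root then $\vertexa$ is matched, a contradiction, whereas if $\vertexa$ is a non-root then $\{\vertexa,\parent\;\vertexa\}$ is its unique matching edge, forcing $\vertexb=\parent\;\vertexa$ and hence $\vertexb$ labelled $\odd$, contradicting $\vertexb$ labelled $\even$. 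The shared-suffix clause --- that $\path_1$ and $\path_2$ agree after any common vertex --- is just uniqueness of the vertex-to-root path in a forest (and the two paths are vertex-disjoint if their roots differ). Re-establishing the invariants after an \isa{if1} update is routine; the only delicate point is that the newly created $\odd$ vertex is entered by a non-matching edge, which holds for the same reason: the $\even$ endpoint's matching edge, if it has one, is already its parent edge.

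The completeness direction is the main obstacle, and we prove its contrapositive: if $\computaltpath$ returns \isa{None}, no pair of lists satisfying~\ref{property:simplePath}-\ref{property:firstNinMatching} exists. A \isa{None} result means the recursion reached a state in which \isa{if1\_cond} fails (no unexamined graph edge has an $\even$-labelled endpoint whose other endpoint is still unlabelled) and \isa{if2\_cond} fails (no graph edge joins two $\even$-labelled vertices). By Lemma~\ref{lem:consAugPath}, Lemma~\ref{lem:consBlos}, and a routine decomposition of an alternating walk between two unmatched vertices into either an augmenting path or an odd alternating cycle, it suffices to show that in such a state $\langle\graph,\matching\rangle$ admits neither an augmenting path nor a blossom. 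For this one needs an invariant strictly stronger than the negation of the two guards --- one whose terminal instance says the explored region is closed under alternating reachability (every $\even$ vertex has all its incident graph edges examined, discovered matched vertices come in matching pairs, and every even-length alternating path from an unmatched vertex ends at an $\even$-labelled vertex) --- together with a proof that this invariant persists through the recursion in a form strong enough to yield the closure at termination. From the closed state one reads off a Berge/Edmonds--Gallai-style certificate: the $\even$-labelled vertices form an independent set whose neighbourhood is the $\odd$-labelled set, each $\odd$-labelled vertex is matched to an $\even$-labelled one, and every vertex outside the forest is matched by $\matching$, so $\matching$ is maximum by Theorem~\ref{thm:Berge} and no blossom can be present either. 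Pinning down this closure invariant, and showing both that it survives the induction and that it produces the certificate, is where the bulk of the effort lies; it is essentially a constructive rendering of the structural theory behind the correctness of blossom search, a part of the argument that the standard textbook treatments do not spell out in full.
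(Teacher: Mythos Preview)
Your termination and soundness outlines match the paper's: termination via the shrinking measure $\card{\graph\setminus\examined}$ (Lemma~\ref{lem:compAltPathTerminates}), and soundness via forest invariants that amount to the paper's Invariants~\ref{invar:altLabelsInvarA}--\ref{invar:distinctInvar}; the shared-suffix clause is, as you say, uniqueness of root-paths in a forest.

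For completeness the paper takes a different, more direct route than your proposal. It does not argue via an Edmonds--Gallai-style structural description; instead it constructs an \emph{odd set cover} of $\graph$ with capacity exactly $\card{\matching}$ from the $\odd$-labelled vertices together with the unexamined matched pairs (Lemma~\ref{lem:buildOSC}), and then invokes the duality bound Proposition~\ref{prop:OSCOptimalMatching} to conclude that $\matching$ is maximum. This construction is supported by a second batch of invariants (Invariants~\ref{invar:examineHaveOddVertsInvar}--\ref{invar:oddThenMatchedExaminedInvar}) tracking the relationship between labels and $\examined\cap\matching$, which your sketch does not identify. Your structural certificate is a correct description of the terminal state, but the step ``so $\matching$ is maximum by Theorem~\ref{thm:Berge}'' is a mis-citation: Berge says $\matching$ is maximum iff there is no augmenting path, and neither direction is yet available to you at that point; extracting maximality from your structure requires a counting argument, and once that counting is written out it is essentially the odd-set-cover bound the paper uses. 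Moreover, your detour through ``no blossom can be present either'' is both unnecessary and shaky: the formal completeness clause \isa{compute\_alt\_path\_complete} only asks you to rule out \emph{augmenting paths}, not arbitrary pairs satisfying \ref{property:simplePath}--\ref{property:firstNinMatching}, and a maximum matching can perfectly well still admit blossoms. The paper's route --- odd set cover $\Rightarrow$ $\matching$ maximum (Proposition~\ref{prop:OSCOptimalMatching}) $\Rightarrow$ no augmenting path (Theorem~\ref{thm:Berge}) --- sidesteps the blossom question entirely.
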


In the rest of this section we describe the invariants needed to prove this algorithm for searching for augmenting paths is correct (\Cref{subsec:invars}) and then the total correctness proof of the algorithm (\Cref{subsec:compAltPathTerminates}).

\subsection{Loop Invariants}
\label{subsec:invars}

Proving Theorem~\ref{thm:compAltPathCorrect} involves reasoning about a while-loop using loop invariants.
Nonetheless, since the while-loop involves a large number of variables in the state, and those variables represent complex structures, e.g.\ $\parent$, the loop invariants capturing the interactions between all those variables are extensive.
During the development of the formal proof, we have identified the following loop-invariants to be sufficient to prove Theorem~\ref{thm:compAltPathCorrect}.
\newcommand{\rootvertex}{\ensuremath{r}}
\newcommand{\none}{\text{None}}

\begin{myinvar}
\label{invar:altLabelsInvarA} For any vertex $\vertexgen$, if for some $r$, $\lab\;\vertexgen = \langle r, \even\rangle$, then the vertices in the list $\follow{\parent}\vertexgen$ have labels that alternate between $\langle r, \even\rangle$ and $\langle r, \odd\rangle$.
\end{myinvar}

\begin{myinvar}
\label{invar:altLabelsInvarB} For any vertex $\vertexa$, if for some $r$ and some $l$, we have $\lab\;\vertexa=\langle r,l\rangle$, then the vertex list $\vertexa\vertexb\dots\vertexgen_n$ returned by $\follow{\parent} \vertexa$ has the following property: if $\lab\;\vertexgen_i = \langle r, \even\rangle$ and $\lab\;\vertexgen_{i+1} = \langle r, \odd\rangle$, for some $r$, then $\{\vertexgen_i,\vertexgen_{i+1}\}\in\matching$, otherwise, $\{\vertexgen_i,\vertexgen_{i+1}\}\not\in\matching$.
\end{myinvar}

\begin{myinvar}
\label{invar:parentSpec} The relation induced by the function $\parent$ is well-founded.
\end{myinvar}

\begin{myinvar}
\label{invar:flabelInvar} For any $\{\vertexa,\vertexb\}\in\matching$, $\lab\;\vertexa=$ None iff $\lab\;\vertexb=$ None.
\end{myinvar}

\begin{myinvar}
\label{invar:flabelParInvar} For any $\vertexa$, if $\lab\;\vertexa=$ None then $\parent\;\vertexb\neq\vertexa$, for all $\vertexb$.
\end{myinvar}

\begin{myinvar}
\label{invar:lastNotMatchedInvar} For any $\vertexgen$, if $\lab\;\vertexgen\neq$ None, then $\last\;(\follow{\parent} \vertexgen)\not\in\vertices(\matching)$.
\end{myinvar}

\begin{myinvar}
\label{invar:lastEvenInvar} For any $\vertexgen$, if $\lab\;\vertexgen\neq\none$, then $\lab\;(\last\;(\follow{\parent} \vertexgen))=\langle r, \even\rangle$, for some $r$.
\end{myinvar}

\begin{myinvar}
\label{invar:matchedExaminedInvar} For any $\{\vertexa,\vertexb\}\in\matching$, if $\lab\; \vertexa\neq$ None, then $\{\vertexa,\vertexb\}\in\examined$.
\end{myinvar}

\begin{myinvar}
\label{invar:distinctInvar} For any $\vertexgen$, $\follow{\parent} \vertexgen$ is a simple path w.r.t.\ $\graph$.
\end{myinvar}

\begin{myinvar}
\label{invar:flabelInvarB} For any $\{\vertexa,\vertexb\}\in\matching$, $\lab\; \vertexa = \langle r, \even\rangle$ iff $\lab\; \vertexb = \langle r, \odd\rangle$.
\end{myinvar}

\begin{myinvar}
\label{invar:examineHaveOddVertsInvar} For all $\edge\in\examined$, there are $\vertexgen\in\edge$ and $\rootvertex$ s.t.\ $\lab\; \vertexgen = \langle r, \odd\rangle$.
\end{myinvar}

\begin{myinvar}
\label{invar:unexaminedMatchedUnlabelledInvar} For all $\edge\in\graph\setminus\examined$, if $\vertexgen\in\edge$, then $\lab\; \vertexgen = \none$.
\end{myinvar}

\begin{myinvar}
\label{invar:finiteOddsInvar} The set $\{\vertexgen\mid \exists \rootvertex.\;\lab\; \vertexgen = \langle \rootvertex,\odd\rangle\}$ is finite.
\end{myinvar}

\begin{myinvar}
\label{invar:oddLabelledVertsNumInvar} $\card{\{\vertexgen\mid \exists \rootvertex.\; \lab \;\vertexgen = \langle 
\rootvertex,\odd\rangle\}} = \card{\matching \cap \examined}$.
\end{myinvar}

\begin{myinvar}
\label{invar:unlabelledVertsMatchedInvar} For all $\vertexgen\in\vertices(\graph)$, if $\lab\; \vertexgen = \none$, then there is $\edge\in\graph\setminus\examined$ s.t.\ $\vertexgen\in\edge$.
\end{myinvar}

\begin{myinvar}
\label{invar:oddThenMatchedExaminedInvar} For all $\vertexgen\in\vertices(\graph)$, if $\lab\; \vertexgen = \langle\rootvertex,\odd\rangle$, then there is $\edge\in\graph\cap\examined$ s.t.\ $\vertexgen\in\edge$.
\end{myinvar}

\noindent Proofs of those invariants require somewhat complex reasoning: they involve interactions between induction (e.g.\ well-founded induction on $\parent$) and the evolution of the 'program state', i.e.\ the values of the variables as the while-loop progresses with its computation.
We describe one of those formal proofs in some detail below to give the reader an idea of how we did those proofs.

\noindent\textit{Further Notation.} In an algorithm, we refer to the value of a variable $x$ 'after' executing line $i$ and before executing line $i+1$ with $\varAtLine{x}{i}$.

\newcommand{\currentIterVar}[1]{\varAtLine{#1}{\ref{compAltPath:iterBegin}}}
\renewcommand{\nextIterVar}[1]{\varAtLine{#1}{\ref{compAltPath:ifCondAEnd}}}
\newcommand{\retAVar}[1]{\varAtLine{#1}{\ref{compAltPath:retA}}}
\newcommand{\retBVar}[1]{\varAtLine{#1}{\ref{compAltPath:retB}}}

\begin{proof}[Proof sketch of Invariant~\ref{invar:altLabelsInvarA}]
The algorithm has only one execution branch where it continues iterating, namely, when the condition on Line~\ref{compAltPath:ifCondA} holds.
We show that if Invariant~\ref{invar:altLabelsInvarA} holds for $\currentIterVar{\lab}$ and $\currentIterVar{\parent}$, then it holds for $\nextIterVar{\lab}$ and $\nextIterVar{\parent}$.
In particular, we need to show that if, for any vertex $\vertexgen$, $\nextIterVar{\lab}\;\vertexgen = \langle\rootvertex,\even\rangle$, then the labels assigned by $\nextIterVar{\lab}$ to vertices of $\follow{\nextIterVar{\parent}}{\vertexgen}$ alternate between $\langle\rootvertex,\even\rangle$ and $\langle\rootvertex,\odd\rangle$.
The proof is by induction on $\follow{\nextIterVar{\parent}}{\vertexgen}$.
We have the following cases, two base cases and one step case.
\begin{mycase}[$\follow{\nextIterVar{\parent}} \vertexgen = \emptyset$]
This case is trivial.
\end{mycase}
\begin{mycase}[$\follow{\nextIterVar{\parent}} \vertexgen = \rvertexgen$, for some $\rvertexgen$]
This case is also trivial since the $\follow{\nextIterVar{\parent}} \vertexgen$ has no edges.
\end{mycase}
\begin{mycase}[$\follow{\nextIterVar{\parent}} \vertexgen = \rvertexa\rvertexb\cat\path$, for some $\rvertexa$ and $\rvertexb$]
\label{case:lenGeTwo}
The proof can be performed by the following case analysis.
\begin{mysubcase}[$\rvertexb={\nextIterVar{\lvertexb}}$]We further analyse the following two cases.
\label{case:RAeqLCandRBEqLB}
\begin{mysubsubcase}[$\rvertexa=\nextIterVar{\lvertexc}$]
First, we have that $\vertexgen=\rvertexa$ from the definition of $\follow$ and from the assumption of Case~\ref{case:lenGeTwo}.
This together with the assumption of Case~\ref{case:RAeqLCandRBEqLB} imply that $\vertexgen=\nextIterVar\lvertexc$.

We also have that $\{\nextIterVar\lvertexb,\nextIterVar\lvertexc\}\in\matching$, from Line~\ref{compAltPath:chooseVC}, and the fact that neither $\lvertexb$ and $\lvertexc$ change between Lines~\ref{compAltPath:addex}-\ref{compAltPath:ifCondAEnd}.
Also note that from Line~\ref{compAltPath:ifCondA}, we have that $\lab\;\nextIterVar\lvertexb=\none$.
This, together with Invariant~\ref{invar:flabelInvar}, imply that $\lab\;\nextIterVar\lvertexc=\none$.
Thus, from Invariant~\ref{invar:flabelParInvar}, we have that $\parent\;\rvertexgen\neq\vertexc$, for any $\rvertexgen$.
Thus, $\{\nextIterVar\vertexc,\nextIterVar\vertexb\}\cap\follow\;\currentIterVar{\parent}{\lvertexa}=\emptyset$.

From Line~\ref{compAltPath:whileCond} we know that $\lab\;\nextIterVar\lvertexa=\langle\rootvertex,\even\rangle$, for some $\rootvertex$.
Thus, since Invariant~\ref{invar:altLabelsInvarA} holds at Line~\ref{compAltPath:iterBegin}, we know that the labels of $\follow\;\currentIterVar{\parent}\;\nextIterVar\lvertexa$ alternate w.r.t.\ $\currentIterVar{\lab}$.
Since Lines~\ref{compAltPath:iterBegin}-\ref{compAltPath:ifCondAEnd} imply that $\currentIterVar{\parent}\;\lvertexgen = \nextIterVar{\parent}\;\lvertexgen$ and $\currentIterVar{\lab}\;\lvertexgen = \nextIterVar{\lab}\;\lvertexgen$, for all $\vertexgen\not\in\{\nextIterVar\lvertexb,\nextIterVar\lvertexc\}$, and since $\{\nextIterVar\vertexb,\nextIterVar\vertexc\}\cap\follow\;\currentIterVar{\parent}{\lvertexa}=\emptyset$, then we have that $\follow\;\nextIterVar{\parent}\;\nextIterVar\lvertexa$ alternate w.r.t.\ $\nextIterVar{\lab}$.
This, together with the assignments at Lines~\ref{compAltPath:labVB}-\ref{compAltPath:ifCondAEnd} imply that $\follow\;\nextIterVar{\parent}\;\nextIterVar\lvertexc$ alternate w.r.t.\ $\nextIterVar{\lab}$, which finishes our proof.
\end{mysubsubcase}
\begin{mysubsubcase}[$\rvertexa\neq\nextIterVar\lvertexc$]
From the assumption of Case~\ref{case:lenGeTwo}, we have that $\currentIterVar{\parent}\;\rvertexa=\nextIterVar\lvertexb$, which is a contradiction from Invariant~\ref{invar:flabelParInvar} and the condition in Line~\ref{compAltPath:ifCondA}.
\end{mysubsubcase}
\end{mysubcase}
\begin{mysubcase}[$\rvertexa=\nextIterVar\lvertexb$ and $\rvertexb=\nextIterVar\lvertexa$]
This case is implied by Case~\ref{case:RAeqLCandRBEqLB} since $\follow\;\nextIterVar{\parent}\;\nextIterVar\lvertexc=\nextIterVar\lvertexc\cat\follow\;\nextIterVar{\parent}\;\nextIterVar\lvertexb$.
\end{mysubcase}
\begin{mysubcase}[$\{\rvertexa,\rvertexb\}\cap\{\nextIterVar\lvertexb\}=\emptyset$]
\label{case:vBnin}
We perform the following case analysis.
\begin{mysubsubcase}[$\nextIterVar\lvertexc\in\follow{\nextIterVar{\parent}}{\lvertexgen}$]
First, note that, from Line~\ref{compAltPath:ifCondAEnd}, we have that $\nextIterVar{\parent}\;\nextIterVar\lvertexc=\nextIterVar\lvertexb$.
Thus, if $\nextIterVar\lvertexc=\rvertexa$, then we have that $\nextIterVar\lvertexb=\rvertexb$, which is a contradiction from the assumption of Case~\ref{case:vBnin}.
Thus, we have that $\nextIterVar\lvertexc\in\rvertexb\cat\path$.
Note that from Invariants~\ref{invar:flabelInvar} and \ref{invar:flabelParInvar}, Lines~\ref{compAltPath:ifCondA} and \ref{compAltPath:chooseVC}, we have that $\currentIterVar{\parent}\;\lvertexgen\neq\nextIterVar\lvertexc$, for all $\lvertexgen$.
Thus, we have a contradiction.
\end{mysubsubcase}
\begin{mysubsubcase}[$\nextIterVar\lvertexc\not\in\follow{\nextIterVar{\parent}}{\lvertexgen}$]
Note that from Invariant~\ref{invar:flabelParInvar}, and Lines~\ref{compAltPath:ifCondA} and \ref{compAltPath:chooseVC}, we have that, if there is $\rvertexgen\in\rvertexa\rvertexb\cat\path$ s.t.\ $\nextIterVar{\parent}\;\rvertexgen$, then $\rvertexgen=\nextIterVar\lvertexc$, which is a contradiction.
Thus, $\lvertexb\not\in\rvertexa\rvertexb\cat\path$.
Thus for any $\rvertexgen\in\rvertexa\rvertexb\cat\path$, we have that $\currentIterVar{\parent}\;\rvertexgen = \nextIterVar{\parent}\;\rvertexgen$ and $\currentIterVar{\lab}\;\rvertexgen = \nextIterVar{\lab}\;\rvertexgen$.
This finishes our proof, since Invariant~\ref{invar:altLabelsInvarA} holds for $\currentIterVar{\parent}$ and $\currentIterVar{\lab}$.
\end{mysubsubcase}
\end{mysubcase}
\end{mycase}
\end{proof}

\subsection{Total Correctness Proof} 
\label{subsec:compAltPathTerminates}

Below we describe in some detail our formal total correctness proof of the search algorithm.
Although the algorithm has been treated by numerous authors~\cite{edmond1965blossom,LEDAbook,KorteVygenOptimisation,schrijverBook}, we believe that the following proof is more detailed than any previous exposition.
\begin{mylem}
\label{lem:compAltPathTerminates}
Algorithm~\ref{alg:compAltPath} always terminates.
\end{mylem}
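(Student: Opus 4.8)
The plan is to use Isabelle/HOL's function package termination mechanism. Since \isa{compute\_alt\_path} is declared with \isa{function} rather than \isa{fun}, proving termination amounts to exhibiting a well-founded relation on its argument triple $\langle\examined,\parent,\lab\rangle$ along which the unique recursive call strictly decreases; this establishes \isa{compute\_alt\_path\_dom} for every input. The enclosing locale extends \isa{match}, hence \isa{graph\_abs}, so \isa{graph\_invar G} holds and therefore $\graph$ is finite. The measure I would use is $\mu(\examined,\parent,\lab) = \card{\graph\setminus\examined}$, ordered by the usual well-founded order on $\mathbb{N}$; crucially it depends only on $\examined$, so no hypotheses on $\parent$ or $\lab$ are needed, which is exactly what lets us conclude termination on all inputs rather than only on the reachable states.

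The core of the argument is the inspection of the single recursive call. \isa{compute\_alt\_path} recurses only inside the \isa{if1\_cond} branch, where it calls itself on $\langle\examined'',\parent',\lab'\rangle$ with $\examined'' = \{\{\vertexb,\vertexc\}\}\cup\{\{\vertexa,\vertexb\}\}\cup\examined$, the vertices $\langle\vertexa,\vertexb,\vertexc,r\rangle$ being produced by \isa{sel\_if1}. From the guard \isa{if1\_cond flabel ex} and the lemma \isa{sel\_if1\_works} one obtains \isa{if1 flabel ex v1 v2 v3 r}, whose first conjunct is $\{\vertexa,\vertexb\}\in\graph\setminus\examined$. Thus $\{\vertexa,\vertexb\}\in\graph$ while $\{\vertexa,\vertexb\}\notin\examined$ but $\{\vertexa,\vertexb\}\in\examined''$; together with $\examined\subseteq\examined''$ this gives $\graph\setminus\examined''\subsetneq\graph\setminus\examined$. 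As $\graph$ is finite, both sets are finite, so $\card{\graph\setminus\examined''} < \card{\graph\setminus\examined}$, i.e.\ $\mu$ strictly decreases. This is strictly weaker than, but sufficient in place of, the informal "the number of unlabelled vertices decreases by two each iteration" argument, and it avoids invoking any of the loop invariants of the previous subsection.

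Concretely, I would discharge the \isa{termination} obligation by supplying the relation built from the measure \isa{card (G - ex)} on the argument triple: well-foundedness of a \isa{measure} relation is automatic, and the remaining call-decrease subgoal is exactly the strict inequality above, closed by unfolding the \isa{let}-bindings in the body of \isa{compute\_alt\_path}, rewriting with the definitions of \isa{if1\_cond}, \isa{sel\_if1} and \isa{if1}, and then applying \isa{psubset\_card\_mono} with finiteness of $\graph$ drawn from \isa{graph\_invar}.

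The main obstacle is bookkeeping rather than mathematics: one has to peel off the nested \isa{let}-expressions in \isa{compute\_alt\_path} to expose the actual recursive argument $\examined''$, and then chain \isa{if1\_cond}, \isa{sel\_if1}, \isa{sel\_if1\_works} and \isa{if1} in the right order to extract the membership $\{\vertexa,\vertexb\}\in\graph\setminus\examined$; once that fact is in hand the cardinality step is a one-liner. A minor caveat is that the body of \isa{compute\_alt\_path} mentions \isa{parent.follow}, but since every Isabelle \isa{function} denotes a total HOL function this does not affect termination of \isa{compute\_alt\_path} itself — well-definedness of \isa{follow} on the reachable states is a separate matter, handled through Invariant~\ref{invar:parentSpec} and the \isa{parent} locale.
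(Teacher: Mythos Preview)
Your proposal is correct and uses essentially the same approach as the paper: both rely on the measure $\card{\graph\setminus\examined}$ and observe that the unique recursive call strictly decreases it because the selected edge $\{\vertexa,\vertexb\}$ lies in $\graph\setminus\examined$ by the \isa{if1} predicate. Your write-up additionally spells out the Isabelle-level bookkeeping (unfolding the \isa{let}s, chaining \isa{sel\_if1\_works}, invoking \isa{psubset\_card\_mono}), but mathematically it coincides with the paper's one-line sketch.
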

\begin{proof}[Proof sketch]
The termination of the algorithm is based on showing that $\card{\graph\setminus\examined}$ decreases with every iteration of the while loop.
\end{proof}

\providecommand{\property}{}

\begin{mylem}
\label{lem:compAltPathSound}
If Algorithm~\ref{alg:compAltPath} returns two paths then they satisfy properties \ref{property:simplePath}-\ref{property:firstNinMatching}.
\end{mylem}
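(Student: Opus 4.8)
The plan is to start from the fact that Algorithm~\ref{alg:compAltPath} returns a pair of paths only on Line~\ref{compAltPath:retA}, which is reached precisely when the condition \isa{if2\_cond} holds; hence whenever it returns $\langle\path_1,\path_2\rangle$ there are witnesses $\vertexa$, $\vertexb$, $\rootvertex$, $\rootvertex'$ with $\{\vertexa,\vertexb\}\in\graph$, $\lab\;\vertexa=\langle\rootvertex,\even\rangle$ and $\lab\;\vertexb=\langle\rootvertex',\even\rangle$, and with $\path_1=\follow{\parent}\vertexa$ and $\path_2=\follow{\parent}\vertexb$. The proof then reduces to discharging Properties~\ref{property:simplePath}--\ref{property:firstNinMatching} for these two explicit paths, and for each of them one of the loop invariants of the previous subsection --- which holds at this return configuration by the separate preservation arguments --- does essentially all the work.

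Four of the properties are immediate bookkeeping. Property~\ref{property:simplePath} is Invariant~\ref{invar:distinctInvar} applied to $\vertexa$ and to $\vertexb$ (which also supplies the separate \isa{path} and \isa{distinct} conjuncts in which the formal specification phrases this). Properties~\ref{property:lastNinMatchingA} and~\ref{property:lastNinMatchingB} are Invariant~\ref{invar:lastNotMatchedInvar} applied to $\vertexa$ and to $\vertexb$, whose labels are not $\none$. Property~\ref{property:firstVtxsEdge} holds because $\hd(\follow{\parent}\vertexgen)=\vertexgen$ by the defining equation of $\follow{}$, so $\{\hd\;\path_1,\hd\;\path_2\}=\{\vertexa,\vertexb\}\in\graph$; since graph edges are doubletons this incidentally also gives $\hd\;\path_1\neq\hd\;\path_2$, the one remaining conjunct of the full formal specification. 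Property~\ref{property:firstNinMatching} follows from Invariant~\ref{invar:flabelInvarB}: were $\{\vertexa,\vertexb\}\in\matching$, that invariant would force $\vertexb$ to carry an odd label, contradicting $\lab\;\vertexb=\langle\rootvertex',\even\rangle$.

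The two properties that need a little genuine reasoning are~\ref{property:altPath} and~\ref{property:oddLength}, and both rest on Invariant~\ref{invar:altLabelsInvarA}: the labels along $\follow{\parent}\vertexa$ strictly alternate between $\langle\rootvertex,\even\rangle$ and $\langle\rootvertex,\odd\rangle$, beginning with an even one since $\lab\;\vertexa=\langle\rootvertex,\even\rangle$. For~\ref{property:altPath} I would combine this with Invariant~\ref{invar:altLabelsInvarB}, which says that a consecutive even-then-odd pair of vertices in this list spans a matching edge, while every other consecutive pair spans a non-matching edge. Since $\vertexa$ occupies the first position with an even label, the even-labelled vertices sit at the odd positions of the path and the odd-labelled ones at the even positions, so the matching edges are exactly those at odd positions; instantiating the existential over $\edges'$ in the definition of an alternating path with $\edges'=\{e\mid e\notin\matching\}$ then matches the required parity conditions, and symmetrically for $\path_2$. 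For~\ref{property:oddLength} I would use that $\follow{\parent}\vertexa$ begins at the even-labelled vertex $\vertexa$ and, by Invariant~\ref{invar:lastEvenInvar}, ends at an even-labelled vertex; since the labels strictly alternate, a path whose first and last vertices both carry an even label has an odd number of vertices, i.e.\ odd length, and the same holds for $\path_2$.

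The real obstacle lies not in this lemma but upstream of it: each invariant invoked above must first be known to hold at the \isa{if2\_cond} return configuration, which is exactly the content of the invariant-preservation arguments of the previous subsection (the proof of Invariant~\ref{invar:altLabelsInvarA} being a representative sample of their flavour). Within the present lemma the only delicate point is keeping the position-parity accounting behind Property~\ref{property:altPath} aligned with the formal definition of \texttt{alt\_list}, which is why it is the branch $\edges'=\{e\mid e\notin\matching\}$, rather than $\edges'=\matching$, that applies.
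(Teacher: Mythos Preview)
Your proposal is correct and follows essentially the same route as the paper: identify that the algorithm returns only at Line~\ref{compAltPath:retA} under \isa{if2\_cond}, then discharge each of \ref{property:simplePath}--\ref{property:firstNinMatching} by instantiating the appropriate loop invariant at that configuration. The invariants you invoke for \ref{property:simplePath}, \ref{property:altPath}, \ref{property:oddLength}, \ref{property:lastNinMatchingA}, \ref{property:lastNinMatchingB}, and \ref{property:firstVtxsEdge} match the paper's exactly.

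The one minor divergence is \ref{property:firstNinMatching}: you derive it directly from Invariant~\ref{invar:flabelInvarB} (if $\{\vertexa,\vertexb\}\in\matching$ and $\vertexa$ carries an even label then $\vertexb$ must carry an odd one, contradiction), whereas the paper argues via Invariants~\ref{invar:altLabelsInvarA} and~\ref{invar:altLabelsInvarB} that the successors of $\vertexa$ and $\vertexb$ in their respective $\follow{}$-lists are matched to them, and then uses that a vertex cannot be incident to two matching edges. Your route is shorter and avoids the implicit case split on whether the $\follow{}$-lists have length at least two; the paper's route has the side benefit of reusing the same two invariants already needed for \ref{property:altPath}, so it keeps the overall invariant dependency set slightly smaller. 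Either argument is fine.
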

\begin{proof}[Proof sketch]
First, from the definition of the algorithm, we know that the algorithm returns (at Line~\ref{compAltPath:retA}) the two lists $\follow{\retAVar{\parent}}{\retAVar\lvertexa}$ and $\follow{\retAVar{\parent}}{\retAVar\lvertexb}$.
Invariant~\ref{invar:parentSpec} implies that $\follow\;\currentIterVar{\parent}$ is well-defined for any vertex in the Graph, and thus $\follow{\retAVar{\parent}}{\retAVar\lvertexa}$ and $\follow{\retAVar{\parent}}{\retAVar\lvertexb}$ are both well-defined.
We now show that they satisfy the properties \ref{property:simplePath}-\ref{property:firstNinMatching}.
\begin{itemize}
\item \ref{property:simplePath} follows from Invariant~\ref{invar:distinctInvar} and since $\follow{\currentIterVar{\parent}}$ is well-defined.

\item \ref{property:altPath} follows from Invariants~\ref{invar:altLabelsInvarA} and \ref{invar:altLabelsInvarB} and since $\follow{\currentIterVar{\parent}}$ is well-defined.

\item From Line~\ref{compAltPath:chooseVAVB}, we have that $\currentIterVar{\lab}\;\retAVar\lvertexa=\even$.
From Invariant~\ref{invar:lastEvenInvar} and since $\follow{\currentIterVar{\parent}}$ is well-defined, we have that $\currentIterVar{\lab}(\last(\follow{\currentIterVar{\parent}}{\retAVar\lvertexa}))=\even$.
Since $\currentIterVar{\lab}=\retAVar{\lab}$ and $\currentIterVar{\parent}=\retAVar{\parent}$, then $\retAVar{\lab}(\last(\follow{\retAVar{\parent}}{\retAVar\lvertexa}))=\even$.
Also, from Invariant~\ref{invar:altLabelsInvarA}, we have that the vertices in $\retAVar{\lab}(\last(\follow{\retAVar{\parent}}{\retAVar\lvertexa}))$ alternate between labels of $\even$ and $\odd$.
From the properties of alternating lists, we know that if vertices of a list alternate w.r.t.\ a predicate (in this case $\even/\odd$), and the first and last vertex satisfy the same predicate (here $\even$), then the length of this list is odd.
Thus, the length of $\path_1$ is odd.
Similarly, we show that the length of $\path_2$ is odd.
This gives us \property~\ref{property:oddLength}.

\item \ref{property:lastNinMatchingA} and \ref{property:lastNinMatchingB} follow from Invariant~\ref{invar:lastNotMatchedInvar}.

\item \ref{property:firstVtxsEdge} follow from Line~\ref{compAltPath:chooseVAVB}.

\item From Invariant~\ref{invar:altLabelsInvarA}, we have that, since $\retAVar{\lab}\;\retAVar\vertexa=\even$, then the label of the vertex occurring after $\retAVar\lvertexa$ in $\follow{\retAVar{\parent}}{\retAVar\lvertexa}$, call it $\rvertexa$, is labelled as $\odd$.
From Invariant~\ref{invar:altLabelsInvarB}, we thus have $\{\retAVar\lvertexa,\rvertexa\}\in\matching$.
Similarly, we have that $\{\retAVar\lvertexb,\rvertexb\}\in\matching$, where $\rvertexb$ is the vertex occurring in $\follow{\retAVar{\parent}}{\retAVar\lvertexb}$ after $\retAVar\vertexb$.
We thus have that $\{\retAVar\lvertexa,\retAVar\lvertexb\}\not\in\matching$, since no two matching edges can be incident to the same vertex, meaning that we have \ref{property:firstNinMatching}.
\end{itemize}
\end{proof}

\begin{mylem}
\label{lem:compAltPathComplete}
If there are two paths satisfying properties \ref{property:simplePath}-\ref{property:firstNinMatching}, then Algorithm~\ref{alg:compAltPath} returns two paths.
\end{mylem}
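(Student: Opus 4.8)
The plan is to prove the contrapositive. By Lemma~\ref{lem:compAltPathTerminates} the algorithm always terminates, so every run ends either at Line~\ref{compAltPath:retA} (returning two paths) or at Line~\ref{compAltPath:retB} (returning nothing); hence it suffices to assume the run ends at Line~\ref{compAltPath:retB} and show that no pair of paths satisfying \ref{property:simplePath}--\ref{property:firstNinMatching} can exist. Ending at Line~\ref{compAltPath:retB} means the loop guard is false and neither inner branch fired, which yields the two facts I use throughout, referring to the final values of $\examined$ and $\lab$: (i) there is no edge $\{v_1,v_2\}\in\graph\setminus\examined$ with $\lab\;v_1=\langle r,\even\rangle$, $\lab\;v_2=\none$, and some $v_3$ with $\{v_2,v_3\}\in\matching$; and (ii) there is no edge $\{v_1,v_2\}\in\graph$ with both $\lab\;v_1$ and $\lab\;v_2$ of the form $\langle\cdot,\even\rangle$.

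The core of the argument is the following structural claim about the final labelling: \emph{for every path $u_0u_1\dots u_m$ of $\graph$ with $u_0\notin\vertices(\matching)$, with $m$ even, and whose edges alternate in $\matching$-membership with $\{u_0,u_1\}\notin\matching$ (so $\{u_{2j},u_{2j+1}\}\notin\matching$ and $\{u_{2j+1},u_{2j+2}\}\in\matching$), one has $\lab\;u_m=\langle r,\even\rangle$ for some $r$.} I would prove this by induction on the even number $m$. For $m=0$: $u_0$ is unmatched, so it receives an even label at initialisation (Line~\ref{compAltPath:labelUnmatched}); since every label assignment in Algorithm~\ref{alg:compAltPath} only overwrites a $\none$-label (the relabelled vertices are unlabelled when relabelled, by the guard of Line~\ref{compAltPath:ifCondA} together with Invariant~\ref{invar:flabelInvar}), this even label survives to termination. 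For the step, the prefix $u_0\dots u_{m-2}$ again satisfies the hypotheses, so by the induction hypothesis $\lab\;u_{m-2}=\langle r,\even\rangle$; moreover $\{u_{m-2},u_{m-1}\}\in\graph\setminus\matching$ and $\{u_{m-1},u_m\}\in\matching$. If $\lab\;u_{m-1}=\none$, then $\{u_{m-2},u_{m-1}\}$ (taking $v_3:=u_m$) would witness fact~(i) unless it is already examined; but if it is examined, Invariant~\ref{invar:examineHaveOddVertsInvar} forces one of its endpoints to carry an odd label, which can only be $u_{m-1}$ (as $u_{m-2}$ is even-labelled), contradicting $\lab\;u_{m-1}=\none$. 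If $\lab\;u_{m-1}=\langle r',\even\rangle$, then the edge $\{u_{m-2},u_{m-1}\}$ contradicts fact~(ii). Hence $\lab\;u_{m-1}=\langle r',\odd\rangle$, and since $\{u_{m-1},u_m\}\in\matching$, Invariant~\ref{invar:flabelInvarB} gives $\lab\;u_m=\langle r',\even\rangle$, completing the induction.

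To finish, suppose $\path_1,\path_2$ satisfy \ref{property:simplePath}--\ref{property:firstNinMatching}. Since $\path_1$ is a simple alternating path (\ref{property:simplePath},\ref{property:altPath}) with $\last\;\path_1\notin\vertices(\matching)$ (\ref{property:lastNinMatchingA}) and of odd vertex-length (\ref{property:oddLength}), the reverse $\reverse\;\path_1$ is a path of $\graph$ of even edge-length whose first vertex $\last\;\path_1$ is unmatched (so its first edge, if it has one, is non-matching, and \ref{property:altPath} then fixes the whole alternation pattern); thus $\reverse\;\path_1$ meets the hypotheses of the structural claim. Hence $\hd\;\path_1$, the last vertex of $\reverse\;\path_1$, is even-labelled at termination, and symmetrically so is $\hd\;\path_2$. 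But $\{\hd\;\path_1,\hd\;\path_2\}\in\graph$ by \ref{property:firstVtxsEdge} (a genuine edge, so $\hd\;\path_1\neq\hd\;\path_2$), and this edge now has two even-labelled endpoints, contradicting fact~(ii). Therefore no such $\path_1,\path_2$ exist.

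I expect the main obstacle to be the structural claim, and within it the sub-case of the inductive step where the intermediate vertex $u_{m-1}$ is unlabelled: the contradiction there is visually obvious but formally needs the termination condition and Invariant~\ref{invar:examineHaveOddVertsInvar} combined just so, with the precise statements of the relevant invariants. A secondary source of friction is the parity bookkeeping — turning ``odd vertex-length of $\path_i$'' into ``$\reverse\;\path_i$ ends at an even-depth vertex'' — and, among Invariants~\ref{invar:altLabelsInvarA}--\ref{invar:oddThenMatchedExaminedInvar}, isolating exactly the ones the argument needs (essentially \ref{invar:examineHaveOddVertsInvar}, \ref{invar:flabelInvarB}, and \ref{invar:flabelInvar}, plus the monotonicity of the labelling).
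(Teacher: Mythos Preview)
Your proof is correct, and it takes a genuinely different route from the paper's own argument.

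The paper proves the contrapositive by building a \emph{certificate}: from the final labelling it constructs an odd set cover of $\graph$ whose capacity equals $\card{\matching}$ (Lemma~\ref{lem:buildOSC}), then invokes Proposition~\ref{prop:OSCOptimalMatching} to conclude $\matching$ is maximum, hence (via Berge and Lemma~\ref{lem:augPathPropertiesExist}) no pair satisfying \ref{property:simplePath}--\ref{property:firstNinMatching} exists. Your argument instead proves directly, by induction along alternating paths out of unmatched vertices, that every vertex reachable at even distance carries an $\even$ label at termination, and then obtains an immediate contradiction with the failed $\even$--$\even$ branch using the edge $\{\hd\;\path_1,\hd\;\path_2\}$.

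Your approach is more elementary and more economical in its invariant budget: you essentially use only Invariants~\ref{invar:flabelInvar}, \ref{invar:flabelInvarB}, and \ref{invar:examineHaveOddVertsInvar} (plus the label-monotonicity observation), whereas the paper's certificate construction relies on the full block~\ref{invar:examineHaveOddVertsInvar}--\ref{invar:oddThenMatchedExaminedInvar}. What the paper's approach buys, and what yours does not yield, is the odd set cover itself --- an independently checkable witness of optimality and the tie-in to matching duality that the authors explicitly highlight as a contribution. Your structural claim is in effect the statement that the labelling computed by the search is complete for even-reachability; this is the ``right'' lemma if one only wants completeness of the search, and it would be a clean addition alongside the certificate-based proof.
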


\newcommand{\osc}{\text{OSC}}
\newcommand{\capacity}{\text{cap}}

The proof of this lemma depends on showing that we can construct a certificate showing that no such paths exist, if the algorithm returns at Line~\ref{compAltPath:retB}.
The certificate is an odd set cover, defined as follows.
For a set $s\subseteq\vertices(\graph)$, s.t.\ $\card{s}=2k+1$, for some $k$, we define the \emph{capacity} of $s$ as follows:
\[
\capacity(s) = \begin{cases} 1 & \text{ if } k=0\\
                             k & \text{ otherwise}.
               \end{cases}
\]
For a set of edges $\edges$, we say $s$ covers $\edges$ iff $s\cap \edge\neq\emptyset$, for each $\edge\in\edges$, and $k = 0$.
Otherwise, $s$ covers $\edges$ iff $\bigcup\edges \subseteq s$.
A set of sets $\osc$ is an \emph{odd set cover} of a graph $\graph$ iff for every $s\in\osc$, we have that $\card{s}$ is odd and that for every $\edge\in\graph$, there is $s\in\osc$ s.t.\ $s$ covers $\edge$.
We have the following standard property of odd set covers.
\begin{myprop}
\label{prop:OSCOptimalMatching}
If $\osc$ is an odd set cover for a graph $\graph$, then, if $\matching$ is a matching w.r.t.\ $\graph$, we have that $\card{\matching}\leq\card{\osc}$.
\end{myprop}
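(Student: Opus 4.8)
The plan is to prove the bound by a charging argument: assign every edge of $\matching$ to a single set of $\osc$ that covers it, show that no set of $\osc$ is charged more edges than its capacity, and sum. Here $\card{\osc}$ abbreviates the total capacity $\sum_{s\in\osc}\capacity(s)$, which is the quantity the capacity function was introduced for.

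First I would set up the assignment. Since $\matching$ is a matching w.r.t.\ $\graph$ we have $\matching\subseteq\graph$, so for every $\edge\in\matching$ the definition of an odd set cover yields some $s\in\osc$ that covers $\edge$; I would pick one such set, call it $S_\edge$, and for each $s\in\osc$ let $\matching_s=\{\edge\in\matching\mid S_\edge=s\}$. Every element of $\osc$ is a subset of the finite set $\vertices(\graph)$, so $\osc$ is finite, and by construction $\matching=\bigcup_{s\in\osc}\matching_s$; hence $\card{\matching}\leq\sum_{s\in\osc}\card{\matching_s}$ by the union bound.

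The heart of the argument is then to show $\card{\matching_s}\leq\capacity(s)$ for each $s\in\osc$, using the two clauses in the definition of ``$s$ covers $\edge$''. Writing $\card{s}=2k+1$: if $k=0$, say $s=\{v\}$, then each $\edge\in\matching_s$ is covered by $\{v\}$ and hence contains $v$, so two distinct such edges would meet at $v$, contradicting that $\matching$ is a matching; thus $\card{\matching_s}\leq 1=\capacity(s)$. If $k\geq 1$, then ``$s$ covers $\edge$'' reduces to $\edge\subseteq s$, so $\matching_s$ consists of pairwise disjoint two-element subsets of $s$; writing $m=\card{\matching_s}$ we get $2m=\card{\bigcup\matching_s}\leq\card{s}=2k+1$, whence $m\leq k=\capacity(s)$. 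Combining, $\card{\matching}\leq\sum_{s\in\osc}\card{\matching_s}\leq\sum_{s\in\osc}\capacity(s)=\card{\osc}$.

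I do not expect a serious obstacle here: this is elementary counting, and the Isabelle proof should be short. The only points that need a little care are noting that odd-sized sets are nonempty so the split $k=0$ versus $k\geq 1$ is exhaustive; the bookkeeping that the families $\matching_s$ cover $\matching$ and that $\osc$ is finite; and the tiny fact that $m$ pairwise disjoint edges inside an $n$-element set force $2m\leq n$ (so $m\leq\lfloor n/2\rfloor$), which is immediate from disjointness of the edges together with each edge having exactly two vertices.
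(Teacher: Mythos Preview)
Your proof is correct and is the standard charging argument for this bound. The paper, however, does not actually give a proof of this proposition: it is introduced as ``the following standard property of odd set covers'' and then used without justification in the completeness proof of the search procedure. You are also right to read $\card{\osc}$ as the total capacity $\sum_{s\in\osc}\capacity(s)$ rather than the number of sets; the paper's subsequent use of the proposition---computing $\capacity(\osc')$ and comparing it to $\card{\matching}$---only makes sense under that reading, and in the cases where all sets are singletons the two quantities coincide anyway.
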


\begin{mylem}
\label{lem:augPathPropertiesExist}
Given a graph $\graph$ and a matching $\matching$ w.r.t.\ $\graph$, if there is an augmenting path w.r.t.\ $\langle\graph,\matching\rangle$, then are two paths satisfying properties \ref{property:simplePath}-\ref{property:firstNinMatching}.
\end{mylem}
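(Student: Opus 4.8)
The plan is to build the two required paths directly, by cutting the given augmenting path at one of its non-matching edges; all seven properties then fall out by inspection, the only genuine work being a little parity bookkeeping.

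Let $\path = \vertexa\vertexb\dots\vertexgen_n$ be an augmenting path w.r.t.\ $\langle\graph,\matching\rangle$. First I would pin down the structure of $\path$: it has $n \geq 2$ vertices; it is a simple path w.r.t.\ $\graph$, so in particular every edge of $\path$ lies in $\graph$ and $\vertexa,\vertexgen_n\in\vertices(\graph)$; its edge list $\edges(\path)$ alternates with respect to membership in $\matching$, starting with a non-matching edge, so that $\{\vertexgen_i,\vertexgen_{i+1}\}\notin\matching$ exactly when $i$ is odd; and $\vertexa,\vertexgen_n\notin\vertices(\matching)$. Since $\vertexgen_n$ is unmatched and the only edge of $\path$ incident to it is $\{\vertexgen_{n-1},\vertexgen_n\}$, that edge is non-matching, and then \isa{alternating\_eq\_iff\_odd} from Listing~\ref{isabelle:altListLemmas} forces $\edges(\path)$ to have odd length, i.e.\ $n$ is even.

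Next I would fix an odd index $j$ with $1 \le j \le n-1$ (concretely $j = 1$), and take $\path_1$ to be $\reverse(\vertexa\vertexb\dots\vertexgen_j)$ and $\path_2$ to be $\vertexgen_{j+1}\vertexgen_{j+2}\dots\vertexgen_n$, so that $(\reverse\;\path_1)\cat\path_2 = \path$, with the two heads joined by the edge $\{\vertexgen_j,\vertexgen_{j+1}\}$. Then I would verify the seven properties one by one: \ref{property:simplePath}, because $\path_1$ is the reversal of a prefix of $\path$ and $\path_2$ is a suffix of $\path$, and reversals, prefixes and suffixes of simple paths in $\graph$ are themselves simple paths in $\graph$; \ref{property:altPath}, because $\path_1$ and $\path_2$ are, respectively, a reversed prefix and a suffix of the alternating path $\path$, and reversals, prefixes and suffixes of alternating paths are again alternating, after possibly exchanging the two alternation predicates; \ref{property:oddLength}, because $\path_1$ has $j$ vertices and $\path_2$ has $n - j$ vertices, both odd since $j$ is odd and $n$ is even; \ref{property:lastNinMatchingA} and \ref{property:lastNinMatchingB}, because $\last\;\path_1 = \vertexa$ and $\last\;\path_2 = \vertexgen_n$, which are unmatched by hypothesis; \ref{property:firstVtxsEdge}, because $\{\head\;\path_1,\head\;\path_2\} = \{\vertexgen_j,\vertexgen_{j+1}\}$ is an edge of $\path$ and hence lies in $\graph$; and \ref{property:firstNinMatching}, because that same edge carries the odd index $j$ and is therefore not in $\matching$.

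The statement itself is routine, so I do not expect a real obstacle, only two points to be careful about in the formal development, and both are exactly the parity facts used above: that an augmenting path has an even number of vertices, and that cutting or reversing an alternating list again produces an alternating list after possibly swapping the two predicates. Each follows from the \isa{alt\_list} reasoning principles collected in Listing~\ref{isabelle:altListLemmas}, together with routine list identities relating \isa{edges\_of\_path}, \isa{rev}, \isa{take} and \isa{drop}; choosing $j = 1$, so that $\path_1$ is the one-vertex list $\vertexa$ with no edges at all, keeps even this bookkeeping minimal and removes the need to single out a central edge of $\path$.
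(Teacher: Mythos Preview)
Your proposal is correct and follows the same idea as the paper: split the augmenting path at one of its non-matching edges and take the two pieces (one reversed) as the witnesses. The paper's sketch decomposes $\path = \rvertexa\rvertexb\rvertexc\cat\path_2$ and takes the witnesses $\rvertexc\rvertexb\rvertexa$ and $\path_2$, i.e.\ it cuts at the third edge, whereas you cut at the first edge ($j=1$), obtaining $\path_1=[\vertexa]$ and $\path_2=\vertexb\dots\vertexgen_n$. Your choice is in fact the tidier one: it works uniformly for every augmenting path, including the two-vertex case, while the paper's cut point tacitly relies on the path having at least four vertices (the paper's remark that ``any augmenting path has at least three edges'' is not true of augmenting paths in general, only in the surrounding algorithmic context where single unmatched edges have already been handled). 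Otherwise the verifications of \ref{property:simplePath}--\ref{property:firstNinMatching} are identical in spirit, and your parity argument for $n$ even via \isa{alternating\_eq\_iff\_odd} is exactly what is needed.
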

\begin{proof}[Proof sketch]
Let the augmenting path be called $\path$.
First, note that any augmenting path has at least three edges.
Thus, there must be a $\rvertexa$, $\rvertexb$, $\rvertexc$, and $\path_2$, s.t.\ $\path = \rvertexa\rvertexb\rvertexc\cat\path_2$.
Two paths that are the required witness are $\rvertexc\rvertexb\vertexa$ and $\reverse(\path_2)$.
\end{proof}

\begin{mylem}
\label{lem:buildOSC}
If Algorithm~\ref{alg:compAltPath} returns at Line~\ref{compAltPath:retB}, then there is an odd set cover $\osc$ for $\graph$ and $\card{\osc}=\card{\matching}$.
\end{mylem}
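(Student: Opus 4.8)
The plan is to build the certificate $\osc$ explicitly from the final values of the labelling function $\lab$ and of the examined‑edge set $\examined$ at the moment the loop exits (Line~\ref{compAltPath:retB}). Partition $\vertices(\graph)$ into the \emph{inner} vertices $I = \{v \mid \exists r.\ \lab\;v = \langle r,\odd\rangle\}$, the \emph{outer} (even‑labelled) vertices, and the unlabelled vertices $U = \{v\in\vertices(\graph)\mid \lab\;v = \none\}$. I would set
\[
  \osc \;=\; \{\,\{v\}\mid v\in I\,\}\;\cup\;\osc_U ,
\]
where $\osc_U = \emptyset$ if $U = \emptyset$, and otherwise we pick an arbitrary $a\in U$ and put $\osc_U = \{\{a\}\}$ if $\card{U}=2$ and $\osc_U = \{\{a\},\,U\setminus\{a\}\}$ if $\card{U}\geq 4$ (these cases are exhaustive because $\card{U}$ turns out to be even, as shown below). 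Every member of $\osc$ is a subset of $\vertices(\graph)$ of odd cardinality, and they are pairwise distinct, so it remains only to verify coverage and to sum the capacities.

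For coverage, fix $\edge\in\graph$. If $\edge$ meets $I$, the singleton of that inner endpoint lies in $\osc$ and covers $\edge$. Otherwise both endpoints of $\edge$ lie among the outer and unlabelled vertices; but $\edge$ cannot touch an outer vertex, for since the loop exited the condition on Line~\ref{compAltPath:whileCond} is false, so every graph edge incident to an outer vertex has been put into $\examined$, and by Invariant~\ref{invar:examineHaveOddVertsInvar} every examined edge has an inner endpoint, contradicting $\edge\cap I = \emptyset$. Hence $\edge\subseteq U$; a short case analysis on $\card{U}$ then shows some member of $\osc_U$ covers $\edge$, namely $\{a\}$ when $a\in\edge$ and the large odd set $U\setminus\{a\}$ otherwise (using the clause of the definition that, for a set $s$ with $\card{s}\geq 3$, $s$ covers an edge iff that edge is contained in $s$; this is legitimate since $\card{U\setminus\{a\}}\geq 3$ whenever that set is used).

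For the capacity I need the two identities $\card{I} = \card{\matching\cap\examined}$ and $\card{U} = 2\,\card{\matching\setminus\examined}$. The first is exactly Invariant~\ref{invar:oddLabelledVertsNumInvar}. For the second, note first that the algorithm only ever assigns labels to previously unlabelled vertices, so labels are monotone; since Line~\ref{compAltPath:labelUnmatched} labels every unmatched vertex as outer, every unmatched vertex stays outer, hence every $v\in U$ is matched, and by Invariant~\ref{invar:flabelInvar} its matching partner is also unlabelled. Thus $\matching\cap\compEdges(U)$ is a perfect matching of $U$, giving $\card{U} = 2\,\card{\matching\cap\compEdges(U)}$. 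Finally $\matching\cap\compEdges(U) = \matching\setminus\examined$: an unexamined matching edge has only unlabelled endpoints by Invariant~\ref{invar:unexaminedMatchedUnlabelledInvar}, hence lies in $\compEdges(U)$; conversely a matching edge inside $U$ has no inner endpoint, so by Invariant~\ref{invar:examineHaveOddVertsInvar} it is not examined. Since each singleton has capacity $1$ and $\osc_U$ has total capacity $\card{U}/2$ in every case, the total capacity of $\osc$ is $\card{I} + \card{U}/2 = \card{\matching\cap\examined} + \card{\matching\setminus\examined} = \card{\matching}$, as claimed; combined with Proposition~\ref{prop:OSCOptimalMatching} this also shows $\matching$ is maximum.

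The geometric content is the classical Edmonds odd‑set‑cover construction, so the idea is not where I expect difficulty. The real work is combinatorial bookkeeping: threading the listed loop invariants into the two counting identities — in particular establishing that every unlabelled vertex is matched, which textbook accounts leave implicit and which here rests on the monotonicity of the labelling — and discharging the small‑cardinality corner cases of $\osc_U$ carefully enough that the capacities come out exactly equal to $\card{\matching}$ rather than off by one.
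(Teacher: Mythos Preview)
Your construction is correct and is essentially the same as the paper's: both take $\osc$ to consist of the singletons of all odd-labelled vertices together with, when there are unexamined matching edges, one extra singleton and one large odd set formed from the unlabelled vertices. Your parametrisation by $\card{U}$ corresponds exactly to the paper's case split on $\card{\matching\setminus\examined}$ via the identity $\card{U}=2\,\card{\matching\setminus\examined}$ that you derive, and your coverage argument (show once that any edge disjoint from $I$ lies entirely in $U$, then cover it from $\osc_U$) is a tidy consolidation of the paper's per-case Notes.
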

\begin{proof}[Proof sketch]
Let $\osc\equiv\{\{\lvertexgen\}\mid\retBVar{\lab}\;\lvertexgen=\odd\}$.
First, we have that $\osc$ is an odd set cover for $\examined$, from the definition of odd set covers and from Invariant~\ref{invar:examineHaveOddVertsInvar}.
Second, since no two vertices in an edge can have the same label of $\odd$ or $\even$, we have that for any $\edge\in\graph$, there is $\lvertexgen\in\edge$ s.t.\ $\retBVar{\lab}\;\lvertexgen=\none$ or $\retBVar{\lab}\;\lvertexgen=\odd$.
The rest of the proof is via the following case analysis.
\begin{mycase}[$\card{\matching}=\card{\examined}$]
From the case assumption and Invariant~\ref{invar:unlabelledVertsMatchedInvar}, we have that
$\retBVar{\lab}\;\lvertexgen\neq\none$ holds for any $\lvertexgen\in\vertices(\graph)$.
Thus, every $\edge\in\graph$ has $\lvertexgen\in\edge$ s.t.\ $\retBVar{\lab}\;\lvertexgen=\odd$.
Accordingly, $\osc$ is an odd set cover for $\graph$.
From the case assumption, in addition to Invariant~\ref{invar:oddLabelledVertsNumInvar}, we have that $\card{\osc}=\card{\matching}$.
Also, since, for every $s\in\osc$, we have that $\card{s}=1$, Proposition~\ref{prop:OSCOptimalMatching} implies that $\matching$ is a maximum cardinality matching.
Thus, Lemma~\ref{lem:augPathPropertiesExist} finishes our proof.
\end{mycase}
\begin{mycase}[$\card{\matching\setminus\examined}=1$]
\label{case:matchingExaminedDiffA}
From the case assumption, there is $\rvertexa$ and $\rvertexb$ s.t.\ $\{\rvertexa,\rvertexb\}\in\matching\setminus\examined$.
Since $\{\rvertexa,\rvertexb\}\not\in\examined$ and from Invariant~\ref{invar:unexaminedMatchedUnlabelledInvar}, we have that $\retBVar{\lab}\;\rvertexa=\retBVar{\lab}\;\rvertexb=\none$.
Let $\osc'$ be $\{\{\rvertexa\}\}\cup\osc$.
\begin{mynote}
For any $\edge'\in\graph\setminus\examined$, there exists $\rvertexgen\in\edge'$ s.t.\ $\{\rvertexgen\}\in\osc'$.
\end{mynote}
\begin{proof}[Proof sketch]
We perform the proof by the following case analysis.
\begin{mysubcase}[$\edge'\in\matching$]
In this case, we have that $\edge'=\{\rvertexa,\rvertexb\}$ from the assumption of Case~\ref{case:matchingExaminedDiffA}.
Our witness $\rvertexgen$ is thus $\rvertexa$.
\end{mysubcase}
\begin{mysubcase}[$\edge'\not\in\matching$]
\label{case:edgePrimeNinMatching}
There must be a $\rvertexgen'\in\edge'$ s.t.\ $\retBVar{\lab}\;\rvertexgen'=\none$ or $\retBVar{\lab}\;\rvertexgen'=\odd$.
We perform the following case analysis.
\begin{mysubsubcase}[$\retBVar{\lab}\;\rvertexgen'=\odd$]
In this case, we have that $\{\rvertexgen'\}\in\osc\subseteq\osc'$, by definitions of $\osc$ and $\osc'$, which finishes our proof.
\end{mysubsubcase}
\begin{mysubsubcase}[$\retBVar{\lab}\;\rvertexgen'=\none$]
\label{case:uPrimeUnlabelled}
There must be $\rvertexgen''$ s.t.\ $\edge'=\{\rvertexgen',\rvertexgen''\}$.
We finish the proof by the following case analysis.
\begin{mysubsubsubcase}[$\retBVar{\lab}\;\rvertexgen''=\none$]
In this case, from the for loop at Line~\ref{compAltPath:labelUnmatched} and from the assumptions of Case~\ref{case:uPrimeUnlabelled}, we have that $\{\rvertexgen',\rvertexgen''\}\in\matching$, which contradicts the assumption of Case~\ref{case:edgePrimeNinMatching}.
\end{mysubsubsubcase}
\begin{mysubsubsubcase}[$\retBVar{\lab}\;\rvertexgen''=\odd$]
In this case, $\rvertexgen''$ is the required witness.
\end{mysubsubsubcase}
\begin{mysubsubsubcase}[$\retBVar{\lab}\;\rvertexgen''=\even$]
This case leads to a contradiction, since it violates the termination assumption of the while loop at Line~\ref{compAltPath:whileCond}.
\end{mysubsubsubcase}
\end{mysubsubcase}
\end{mysubcase}
\end{proof}
From the above note, and since $\osc$ is an odd set cover for $\examined$, we have that $\osc'$ is an odd set cover for $\graph$.
Now, we focus on the capacity of $\osc'$.
We have the following
\begin{flalign*}
\capacity(\osc') =&\; \card{\osc'}& \text{ since for all $s\in\osc'$, $\card{s}=1$} \\
             =&\; \card{\osc} + 1& \text{ by definition}\\
             =&\; \card{\matching\cap\examined} + \card{\matching\setminus\examined}& \text{from Invariant~\ref{invar:oddLabelledVertsNumInvar}}\\
             =&\; \card{(\matching\cap\examined)\cup(\matching\setminus\examined)} &\\
             =&\; \card\matching. &
\end{flalign*}
This finishes our proof.
\end{mycase}
\begin{mycase}[$2 \leq \card{\matching\setminus\examined}$]
From the case assumption, there must be $\rvertexa$ and $\rvertexb$, and $\matching'$, s.t.\ $\matching\setminus\examined=\{\{\rvertexa,\rvertexb\}\}\cup\matching'$.
First, note that $\{\rvertexa,\rvertexb\}\cap\matching'=\emptyset$, since $\matching'$ is also a matching and  since $\{\rvertexa,\rvertexb\}\not\in\matching'$.
Let $\osc'$ denote $\{\{\rvertexa\},\{\rvertexb\}\cup\vertices(\matching')\}\cup\osc$.
We first show the following.
\begin{mynote}
For any $\edge'\in\graph\setminus\examined$, there exists $s\in\osc'$ s.t.\ $\edge'\subseteq s$ or there exists $\rvertexgen\in\edge'$ s.t.\ $\{\rvertexgen\}\in\osc'$.
\end{mynote}
\begin{proof}[Proof sketch]
Our proof is by case analysis.
\begin{mysubcase}[$\edge'=\{\rvertexa,\rvertexb\}$]
In this case, our proof follows since $\{\rvertexa\}\in\osc'$.
\end{mysubcase}
\begin{mysubcase}[$\edge'\in\matching'$]
In this case, our proof follows since $\{\rvertexa\}\cup\vertices(\matching')\in\osc'$ and $\edge'\in\matching'$.
\end{mysubcase}
\begin{mysubcase}[$\edge'\not\in\matching$]
This is a similar case analysis to Case~\ref{case:edgePrimeNinMatching}.
\end{mysubcase}
\end{proof}
Based on the above note, and since $\osc$ is an odd set cover for $\examined$, we have that $\osc'$ is an odd set cover for $\graph$.
Finally, we consider the capacity of $\osc'$.
\begin{flalign*}
\capacity(\osc') =&\; \capacity(\osc) + \capacity(\{\{\rvertexa\},\{\rvertexb\}\cup\vertices(\matching')\}) & \text{ since $\bigcup\{\{\rvertexa\},\{\rvertexb\}\cup\vertices(\matching')\}$}\\
                 & & \text{and $\bigcup\osc$ are disjoint}\\
             =&\; \capacity(\osc) + \capacity(\{\{\rvertexa\}\})\; + & \text{ since $\rvertexa\notin\bigcup\{\{\rvertexb\}\cup\vertices(\matching')\}$}\\
              &\;\;\;\;\;\;\;\;\;\;\;\;\;\;\;\; \capacity(\{\{\rvertexb\}\cup\vertices(\matching')\}) & \\
             =&\; \card{\osc} + 1 + \card{\matching'} & \text{ by definition of capacity}\\
             =&\; \card{\matching\cap\examined} + \card{\matching\setminus\examined} & \text{ from Invariant~\ref{invar:oddLabelledVertsNumInvar}}\\
             =&\; \card{(\matching\cap\examined)\cup(\matching\setminus\examined)} & \\
             =&\; \card\matching. &
\end{flalign*}
\end{mycase}
\begin{mycase}[$\matching\subseteq\examined$]
In this case, we have that for any $\edge\in\matching$, there is $\rvertexgen$ s.t.\ $\{\rvertexgen\}\in\osc$.
Also, for any $\edge\in\graph\setminus\examined$, there is $\rvertexgen$ s.t.\ $\{\rvertexgen\}\in\osc$, using a case analysis like Case~\ref{case:edgePrimeNinMatching}.
Thus, $\osc$ is an odd set cover for $\graph$.
Since $\matching\subseteq\examined$ and from Invariant~\ref{invar:oddLabelledVertsNumInvar}, we have that $\card{\matching}=\card{\osc}$.
Since for all $s\in\osc$, $\card{s}=1$, we have that $\capacity(\osc)=\card{\osc}=\card{\matching}$, which finishes our proof.
\end{mycase}
\end{proof} 

\begin{proof}[Proof sketch of Lemma~\ref{lem:compAltPathComplete}]
We show the contrapositive, i.e.\ we show that if the algorithm returns no paths (i.e.\ terminate at Line~\ref{compAltPath:retB}), then there is not a path satisfying properties \ref{property:simplePath}-\ref{property:firstNinMatching}.
If the algorithm terminates at Line~\ref{compAltPath:retB}, based on Lemma~\ref{lem:buildOSC}, we can construct an odd set cover $\osc$, s.t.\ $\capacity(\osc)=\card{\matching}$.
From Proposition~\ref{prop:OSCOptimalMatching}, we have that $\matching$ is a maximum cardinality matching.
Thus there is not an augmenting path w.r.t.\ $\langle\graph,\matching\rangle$, which, in addition to Lemma~\ref{lem:augPathPropertiesExist}, finishes our proof.
\end{proof}

\begin{proof}[Proof sketch of Theorem~\ref{thm:compAltPathCorrect}]
The theorem follows from Lemmas~\ref{lem:compAltPathTerminates}, \ref{lem:compAltPathSound}, and~\ref{lem:compAltPathComplete}.
\end{proof}

\begin{myremark}
We note that, although Edmonds' blossom shrinking algorithm has many expositions in the literature, none of them, as far we have seen, have a detailed proof of the odd set cover construction (Lemma~\ref{lem:compAltPathComplete}).
We also believe that our work is the first that provides a complete set of invariants and a detailed proof of the odd set cover construction.
The closest exposition to our work, and which we use as an initial reference for our work, is the LEDA book by N\"aher and Mehlhorn.
There, as we mentioned earlier, the authors assumed that the right-left direction of Theorem~\ref{thm:quotient} is not needed and, accordingly, they only have invariants equivalent to Invariants~\ref{invar:altLabelsInvarA}-\ref{invar:matchedExaminedInvar}.
We thus needed to devise Invariants~\ref{invar:distinctInvar}-\ref{invar:oddThenMatchedExaminedInvar} from scratch, which are the invariants showing that the algorithm can construct an odd set cover if the search fails to find two paths or a blossom.
Similarly to N\"aher and Mehlhorn's book, we did not find a detailed proof of this construction in other standard textbooks~\cite{KorteVygenOptimisation,schrijverBook}.
\end{myremark}


\todo{State that MFCS had a wrong proof for total correctness?}

\paragraph{Formalisation}

\begin{figure}[H]
\begin{lstlisting}[
 language=Isabelle,
 caption={A formalisation of Invariants~\ref{invar:altLabelsInvarA}~and~\ref{invar:altLabelsInvarB}.},
 label={isabelle:altLabelsInvar},
 captionpos=b
 ]
fun alt_labels_invar where
  "alt_labels_invar flabel r [] = True"
| "alt_labels_invar flabel r [v] =
     (flabel v = Some (r, Even))"
| "alt_labels_invar flabel r (v1 # v2 # vs) = 
     ((if (flabel v1 = Some (r, Even) ∧ 
           flabel v2 = Some (r, Odd)) then
         {v1,v2} ∈ M
       else if (flabel v1 = Some (r, Odd) ∧
                flabel v2 = Some (r, Even)) then
         {v1, v2} ∉ M
       else undefined)
      ∧ alt_labels_invar flabel r (v2 #vs))"
\end{lstlisting}
\end{figure}

The most immediate challenge to formalising the correctness proof of Algorithm~\ref{alg:compAltPath} is to design the formalisation such that it is possible to effectively manage the proofs for the (numerous) invariants we identified, and combine them into proving that the algorithm is totally correct.
Formalising the invariants themselves is relatively straightforward: we model invariants as predicates parameterised over the relevant variables in the algorithm.
For instance, Listing~\ref{isabelle:altLabelsInvar} shows the formal predicate corresponding Invariants~\ref{invar:altLabelsInvarA}~and~\ref{invar:altLabelsInvarB}.

Proving that the invariants hold, on the other hand, is rather involved.
We note that in much of the previous work on program verification the focus was on automatic verification condition generation from the invariants and the algorithm description.
There, one would state one invariant for the loop, which involves all interactions between the state variables.
Most verification conditions for that invariant are then automatically generated, and most of the generated conditions would be discharged automatically.
The ones which are not are either proved externally as lemmas or proved automatically after strengthening the invariant.

In our setting, the first difference is that many of the invariants need substantial abstract mathematical reasoning and this makes automatic discharging of verification conditions impossible in practice (e.g.\ see the proof of Invariant~\ref{invar:altLabelsInvarA}).
Another difference is that having one comprehensive loop-invariant, which is standard for automated methods, would be infeasible here.
This is because the while-loop we consider is complex and the interactions between state variables are more understandably captured in 16 separate invariants.
We thus structure the verification of invariants manually, where we try to prove every invariant independently, and either assume more of the previously identified invariants, or create new invariants as the need arises.
Listing~\ref{isabelle:altLabelsInvarHolds} shows the formal statement of the theorem showing that \isa{alt\_labels\_invar} is indeed an invariant.
It shows that it is preserved in the recursive execution path (i.e.\ the one ending in Line~\ref{compAltPath:ifCondAEnd}).
In it also the other invariants needed to prove \isa{alt\_labels\_invar} are listed as assumptions.
Lastly, we note that we use computation induction to generate the verification conditions and Isabelle/HOL's standard automation to combine the invariants towards proving the final correctness theorem of the algorithm.

\begin{figure}[H]
\begin{lstlisting}[
 language=Isabelle,
 caption={Lemma showing preservation of Invariants~\ref{invar:altLabelsInvarA} and \ref{invar:altLabelsInvarB}.},
 label={isabelle:altLabelsInvarHolds},
 captionpos=b
 ]
lemma alt_labels_invar_pres:
  assumes
    ass: "if1 flabel ex v1 v2 v3 r" and
    invars: 
      "⋀v r lab. 
          flabel v = Some (r, lab) ⟹ 
            alt_labels_invar flabel r (parent.follow par v)"
      "⋀v r.
         flabel v = Some (r, Even) ⟹
           alt_list (λv. flabel v = Some (r, Even))
                    (λv. flabel v = Some (r, Odd))
                    (parent.follow par v)"
      "parent_spec par"
      "flabel_invar flabel" 
      "flabel_par_invar par flabel" and
    inG: "∃lab. flabel' v = Some (r', lab)" and
    flabel':
      "flabel' = (flabel(v2 ↦ (r, Odd), v3 ↦ (r, Even)))" and
    par': "par' = (par(v2 ↦ v1, v3 ↦ v2))"
  shows "alt_labels_invar flabel' r' (parent.follow par' v)"
\end{lstlisting}
\end{figure}

Other than proving that the invariants hold, we perform the rest of the formal proof of correctness of Algorithm~\ref{alg:compAltPath} by computation induction on \isa{compute\_alt\_path}, which is arduous but standard.

\section{Discussion}
\todo{Compare with refinement framework}
\todo{Compare with Korte and Vygen proofs}
Studying combinatorial optimisation from a formal mathematical perspective is a well-trodden path.
Authors previously studied flows~\cite{FordFulkersonMizar,LammichFlows,ScalingIsabelle}, linear programming~\cite{IsabelleSimplexAFP,LPDualityIsabelle,ThiemannLPs}, and online matching~\cite{RankingIsabelle}, to mention a few.
The only previous authors to formally analyse maximum cardinality matching algorithms in general graphs, as far as we are aware, are Alkassar et al.\cite{matchingCertIsabelle}.
They formally verified a certificate checker for maximum matchings.
In their work, the formal mathematical part amounted to formally proving Proposition~\ref{prop:OSCOptimalMatching}, with the focus mainly on verifying an imperative implementation of the checker.

Studying matching has a deep history.
The first results date back to at least the 19th century, when Petersen~\cite{petersenAugPaths} stated Berge's lemma.
Since then, matching theory has been intensely studied.
This is mainly due to the wide practical applications, like Kidney exchange~\cite{EUKidneyMatchingReview}, pupil-school matching, online advertising~\cite{AdWords2007}, etc.

In addition to practical applications, studying the different variants of matching has contributed immensely to the theory of computing.
This includes, for instance, the realisation that polynomial time computation is a notion of efficient computation, which is the underlying assumption of computational complexity theory as well as the theory of efficient algorithms.
This observation was made by Jack Edmonds in his seminal paper describing the blossom shrinking algorithm~\cite{edmond1965blossom}, where he showed that matchings have a rich mathematical structure that could be exploited to avoid brute-force search for maximum cardinality matchings in general graphs.
Other important contributions of studying matchings include the discovery of the primal-dual paradigm~\cite{HungarianMethodAssignment}, the Isolating lemma by Mulmuley et al.~\cite{mulmuleyVazVaz}, the complexity class \#P~\cite{hashPDef}, and the notion of polyhedra for optimisation problems~\cite{EdmondsWeightedMatching}.
All of that makes it inherently interesting to study matching theory and algorithms from the perspective of formal mathematics.

We have presented the first formalisation of the functional correctness argument of Edmonds' blossom shrinking algorithm.
In doing so we built a reasonably rich formal mathematical library on matching and graphs, stated and proved all invariants, and covered all case analyses.
From a formalisation perspective, we believe that tackling Edmonds' blossom shrinking algorithm is highly relevant.
First, the algorithm has historical significance, as mentioned earlier, making studying it from a formal mathematical perspective inherently valuable.
Second, as far as we are aware, the algorithm is conceptually more complex than any efficient (i.e.\ with worst-case polynomial running time) algorithm that was treated formally.
Thus formalising its functional correctness proof further shows the applicability as well as the utility, e.g.\ to come up with new proofs, of theorem proving technology to complex efficient algorithms.

There are other interesting avenues which we will pursue in the future.
The first practical direction is obtaining an efficient verified implementation of the algorithm.
This could be obtained in a relatively straightforward manner by applying standard methods of refinement~\cite{refinementWirth}, either top-down using Lammich's framework~\cite{lammichMonads} or bottom-up using Greenway et al.'s framework~\cite{AutoCorres}.

Another direction is formalising the worst-case running time analysis of an implementation of the algorithm.
There is a number of challenges to doing that.
A naive implementation would be $O(\card{\vertices(G)}^4)$.
A more interesting implementation~\cite{blossomMNInvAckermann} using the union-find data structure can achieve $O(\card{\vertices(G)}\card{G}\alpha(\card{G},\card{\vertices(G)}))$, where $\alpha$ is the inverse Ackermann function.
Achieving that would, in addition to deciding on a fitting methodology for reasoning about running times, need a much more detailed specification of the algorithm.
Factors affecting the running time include: keeping and reusing the information from the search performed by Algorithm~\ref{alg:compAltPath} after shrinking blossoms, and also carefully implementing the shrinking operation without the need to construct the shrunken graph from scratch.

The most interesting future direction is devising a formal correctness proof for the Micali-Vazirani~\cite{MicaliVazirani} algorithm which has the fastest running time for maximum cardinality matching.
That algorithm achieves a running time of $O(\sqrt{\card{\vertices(G)}}\card{G})$ by using shortest augmenting paths in phases, akin to Hopcroft-Karp's algorithm~\cite{HopcroftKarp1973} for maximum bipartite matching, as well as avoiding blossom shrinking altogether.
The most recent pen-and-paper correctness proof~\cite{MVproofthree} of the algorithm was accepted for publication in 2024 after being under development since at least 2014.
Given that all four previous peer-reviewed pen-and-paper proofs of its correctness had major mistakes~\cite{MicaliVazirani,blumMatching,MVproofone,MVprooftwo}, and given the complexity of the recent pen-and-paper correctness proof, a formal correctness proof for the algorithm would be particularly interesting in either confirming the algorithm's correctness once and for all or finding further mistakes.








\bibliography{long_paper}

\end{document}